\tikzstyle{red dot}=[fill=red, draw=black, shape=circle]
\tikzstyle{green dot}=[fill=green, draw=black, shape=circle]
\tikzstyle{blue dot}=[fill=blue, draw=black, shape=circle]
\tikzstyle{emptyRectanle}=[fill=white, draw=black, shape=rectangle]
\tikzstyle{emptyCircle}=[fill=white, draw=black, shape=circle]
\tikzstyle{Big Square}=[fill=white, draw=black, shape=rectangle, minimum width=2cm, minimum height=1cm]
\tikzstyle{new edge style 0}=[->]
\tikzstyle{new edge style 1}=[-]
\theoremstyle{plain}
\newtheorem{theorem}{Theorem}[section]
\newtheorem{corollary}[theorem]{Corollary}
\newtheorem{lemma}[theorem]{Lemma}
\newtheorem{notation}[theorem]{Notation}
\newtheorem{proposition}[theorem]{Proposition}
\newtheorem{definition}[theorem]{Definition}
\newtheorem{assumption}[theorem]{Assumption}
\theoremstyle{remark}
\newtheorem{remark}[theorem]{Remark}
\theoremstyle{hp}
\numberwithin{equation}{section}
\newcommand{\rsto}{]\!\kern-1.8pt ]}
\newcommand{\lsto}{[\!\kern-1.7pt [}
\numberwithin{equation}{section}
\newcommand{\FF}{\mathbb{F}}
\newcommand{\GG}{\mathbb{G}}
\newcommand{\RR}{\mathbb{R}}
\newcommand{\QQ}{\mathbb{Q}}
\newcommand{\PP}{\mathbb{P}}
\newcommand{\EE}{\mathbb{E}}
\newcommand{\cG}{\mathcal{G}}
\newcommand{\cH}{\mathcal{H}}
\newcommand{\cS}{\mathcal{S}}
\newcommand{\Ex}[2]{\mathbb{E}^{#1}\left[#2\right]}                     %expectation with superscript
\renewcommand{\cite}{\citet}
\@date \else {\vskip3ex \centering\footnotesize\@date\par\vskip1ex}\fi
\else \@footnotetext{\@setdate}\fi}
\newcommand{\subjclassname@JEL}{JEL Classification}
\begin{document}

\title[Multi-currency Valuation]{Cross Currency Valuation and Hedging in the Multiple Curve Framework}

\author{Alessandro Gnoatto}
\address[Alessandro Gnoatto]{University of Verona, Department of Economics, \newline
\indent via Cantarane 24, 37129 Verona, Italy}
\email[Alessandro Gnoatto]{alessandro.gnoatto@univr.it}

\author{Nicole Seiffert}
\address[Nicole Seiffert]{Mathematisches Institut der LMU M\"unchen, \newline
\indent Theresienstr. 39, 80333 M\"unchen, Germany}
\email[Nicole Seiffert]{nicole-seiffert@t-online.de}

\begin{abstract}
We generalize the results of \cite{BieRut15} on funding and collateralization to a multi-currency framework and link their results with those of \cite{pit12},  \cite{mopa17}, and \cite{fushita10c}.

In doing this, we provide a complete study of absence of arbitrage in a multi-currency market where, in each single monetary area, multiple interest rates coexist. We first characterize absence of arbitrage in the case without collateral.

After that we study collateralization schemes in a very general situation: the cash flows of the contingent claim and those associated to the collateral agreement can be specified in any currency. We study both segregation and rehypothecation and allow for cash and risky collateral in arbitrary currency specifications. Absence of arbitrage and pricing in the presence of collateral are discussed under all possible combinations of conventions.

Our work provides a reference for the analysis of wealth dynamics, we also provide valuation formulas that are a useful foundation for cross currency curve construction techniques. Our framework provides also a solid foundation for the construction of multi-currency simulation models for the generation of exposure profiles in the context of xVA calculations.
\end{abstract}

\keywords{FX, cross-currency basis, multiple curves, FVA, CollVA,  Basel III, Collateral}
%\thanks{{\em Acknowledgements.} The authors are grateful to bla bla bla.}
\subjclass[2010]{91G30, 91B24, 91B70. \textit{JEL Classification} E43, G12}

\date{\today}

\maketitle

\section{Introduction}
Since the 2007-2009 financial crisis several assumptions underlying financial valuation have been questioned. Several spreads have emerged (more precisely widened) between certain interest rates (notably between overnight and unsecured Ibor rates) and these rates in turn differ from interest rates agreed in the context of repurchase agreements (repo rates). From a modeling perspective this resulted in the development of multi curve interest rate models as in \cite{hen07}, \cite{bia10}, \cite{mopa10}, \cite{mer10b}, \cite{Henr14}, \cite{GPSS14}, \cite{CGNS:13} \cite{Cuchiero2016} and \cite{Cuchiero2019} among others.

Even before the financial crisis, financial institutions employed many different funding strategies to support their trading activity. Borrowing cash from the internal treasury desk (as implicitly assumed in classical asset pricing theory) is only one among different possibilities to fund a transaction. Even before the crisis repurchase agreements and collateralization constituted a possible and understood way to finance cash flows, mainly aimed at managing counterparty risk.

In a collateralization agreement, the agents participating in a transaction regularly exchange cash flows in order to reduce the outstanding exposure of a contract. A collateralized transaction is, in a nutshell, very similar in its nature to a transaction on a futures contract, where margin calls are regularly exchanged. One important difference is that in a collateralized transaction, the party who receives collateral typically pays an interest to the party who posts the collateral (either in the form of cash or shares of a risky asset with low volatility/high rating). 

The financial crisis implied a more widespread adoption of such alternative funding strategies. Collateralization agreements have now become a common aspects in the business relation between financial institutions. 

If we couple the increased importance of collateralization agreements with the emergence of spreads between interest rates, we understand that some care is needed in the context of valuation and hedging. Since interest rates differ and since multiple sources of funding are possible, then one needs to carefully model the funding policy in order to obtain pricing formulas that are consistent with the contractual conditions of a certain transaction. If multiple sources of funding are employed, the spreads between the interest rates linked to the different sources of funding must be taken into account.

The problem described above has given rise to a large stream of literature aiming at reconciling the theory behind arbitrage free valuation with the current market setting. We cite, among others, \cite{pit10}, \cite{Castagna2011}, \cite{papebri2011}, \cite{PaPeBri12}, \cite{Antonov2015}, \cite{crepey2015a}, \cite{crepey2015b}, \cite{BBPL:15} and  \cite{bripa2014ccp}. The contribution of \cite{BieRut15} consists in presenting a sound martingale pricing framework that accounts for funding costs and collateralization. Such framework is then reconciled with the results of \cite{pit10} in a pure diffusive setting. The contributions mentioned above restrict themselves to a single currency framework.

Funding strategies and collateralization agreements become even more involved as soon as we allow for multiple currencies. Financial institutions may fund the trading activity in any currency. Also collateral might be posted in arbitrary currencies. From the perspective of the hedger, i.e. the party who shorts a contingent claim, collateral might be posted or received either in domestic currency, or in the currency of the contractual cash flows, or even in a third foreign currency. Collateralization agreements might grant the collateral provider the option to choose the currency he/she uses to post collateral, thus providing the option to post collateral by using the currency with cheapest funding cost. Such feature is often referred to as collateral choice option. The presence of collateral choice options turns the valuation of even plain vanilla payoffs into a non-trivial problem.

Collateralization in multiple currencies has been analyzed already in some contributions. \cite{pit12} studies funding strategies in multiple currencies by using FX swaps as basic collateralized instrument to create funding strategies in multiple currencies. He describes the cash flows of a collateralized FX swap contract (a combination of a spot and forward FX transaction) and from such analysis he obtains the dividend process of the collateralized FX swap, which depends on the collateral rate agreed between the two counterparties of the FX swap. According to Piterbarg, such collateral rate is unrelated to the domestic or the foreign collateral rate in the two economies involved in the transaction.

\cite{fushita10b} provide a valuation formula for contingent claims with currency dislocations between contractual and collateral cashflows. Their choice of the Num\'eraire is the unsecured funding rate and the drift of the exchange rate they obtain is in line with the classical single curve theory: it is the difference between the domestic and the foreign unsecured funding rate. Concerning the contribution of \cite{fushita10b} \cite{pit12} observes that the rate of the FX swap he obtains corresponds to the difference of unsecured funding rates in \cite{fushita10b}. The approach of \cite{pit12} has been later expanded by \cite{mopa17}. Anyway, even though the underlying assumptions of \cite{pit12}, \cite{fushita10b} and \cite{mopa17} are slightly different between each other, they reach similar conclusions in terms of pricing formulas and model dynamics. \textcolor{black}{For an account of cross currency valuation the single curve pre-crisis framework, we refer the reader to \cite{MusRut}. A recent reference focusing on the cross currency valuation of CDS (quanto-CDS) in a single curve setup is given by \cite{BRIGO2019}}.

Our aim in the present paper is to generalize the martingale pricing approach of \cite{BieRut15} and show that we can generalize martingale pricing to cover the results in the references above. In line with \cite{BieRut15} we exclude the possibility of default of both agents. While this assumption can in principle be relaxed to develop a full xVA framework, it is important to stress that the multiple curve phenomenon is a distinct feature with respect to the credit quality of agents involved in a specific transaction. With our framework we will be able to capture the discrepancy between: collateral rates, repo rates and we will also allow for the presence of cross currency bases. Such spreads are publicly observed and are independent of the credit quality of the agents involved in the transaction. Our choice of assuming that agents are default-free allows us to focus on pure funding aspects. Also, such choice allows us to derive valuation formulas for market instruments that can be used in the context of curve bootstraping/calibration. 

Unlike \cite{pit12} and \cite{mopa17} we do not postulate that contracts are natively collateralized. In our view collateralization is a feature of the relation between the hedger and the counterparty and absence of arbitrage should be guaranteed irrespective of the presence or not of collateralization agreements, hence our first objective is to discuss absence of arbitrage in an uncollateralized multi currency market in the presence of multiple interest rates. In a world market with an arbitrary number of currencies $L$, in each currency area we allow for country specific submarkets with $d_{k_1}$ risky assets $k_1=1\ldots,L$ and each risky assets has a dedicated funding (repo) account. Each currency area features an unsecured funding account and we employ strategies based on such unsecured accounts to construct arbitrage free transactions on the spot foreign exchange rate. The benefit of such approach is twofold: we do not need to introduce derivatives to discuss absence of arbitrage (FX swaps involve a spot and a forward transaction) so that we can discuss absence of arbitrage of the market featuring only underlying securities, and we also disentangle the issue absence of arbitrage from the description/modelization of collateralization agreements. In this sense we are following more closely the approach of \cite{fushita10b} which we fully map to the setting of \cite{BieRut15}. This constitutes the topic of Section \ref{sec:MultiCurrTrading} and Section \ref{sec:PricingUnsecured}

In Section \ref{sec:CollateralizedTrading}, we introduce collateralization agreements. We deliberately choose to closely follow the presentation of \cite{BieRut15} and we present collateralization under both segregation, rehypothecation and we allow for collateral to be posted in the form of cash or units of a risky asset. Our extension involves the possibility that the collateral is posted/received in an arbitrary currency $k_3\in\{1,\ldots,L\}$. The findings of Section \ref{sec:CollateralizedTrading} allow us to discuss in Section \ref{sec:pricingExogColl} pricing of contingent claims in the presence of collateral under any currency. We obtain first general formulas extending \cite{BieRut15}. Later we specialize our valuation formulas in a pure diffusive setting in Section \ref{sec:DiffusionModels} which extends the literature in two ways: on the one side, we obtain pricing formulas consistent with \cite{fushita10b} and \cite{mopa17} extending \cite{BieRut15}, on the other side, based on the findings of Section \ref{sec:PricingUnsecured} we provide a sound construction of cross currency diffusion models in the presence of multiple interest rates in each single currency. Such cross currency models are of paramount importance in the context of xVA computations in the industry: as explained in e.g. \cite{Cesari2009}, \cite{green2015}, \cite{listag2015} and \cite{sokol2014}, the market standard for xVA involves the computation of valuation adjustment at the level of the full portfolio as a way to capture the beneficial effect of netting agreements and this in turn implies the need to construct high dimensional Monte Carlo simulation models simultaneously covering all risk factors in all currencies relevant for the portfolio between the hedger and the counterparty. 

In the present paper, we treat contingent claims by means of a generic process of finite variation. It is clear that such process can be used to model a single claim or a whole collection of claims that share the same funding policy. A generalization to more complex portfolio structures is feasible but beyond the scope of the present paper. For a discussion of portfolio effects in the presence of multiple legal agreements between the hedger and the counterparty we refer the reader to \cite{BiaGnoOli2019}.

\section{Multi-currency trading under funding costs}\label{sec:MultiCurrTrading}
We follow the notations of \cite{BieRut15}. We fix a finite time horizon $T>0$. Let $\left(\Omega,\cG,\GG,\PP\right)$ be a filtered probability space where the filtration $\GG=\left(\cG\right)_{t\in [0,T]}$ satisfies the usual conditions. We assume that $\cG_0$ is trivial. All processes to be introduced in the sequel are assumed to be $\GG$-adapted RCLL semimartingales.

Let $k_1$, $k_1=1,\ldots, L$, $L\in\mathbb{N}$ be an index for different currency areas. For some $k_1$, $k_1=e$, which corresponds to the domestic currency. Let $S^{i,k_1}$ denote the ex-dividend price of the $i$-th risky asset traded in unit of currency $k_1$, $i=1,\ldots, d_{k_1}$, where $d_{k_1}$ is the number of risky assets traded in terms of the currency with index $k_1$. Every asset has a cumulative dividend stream $D^{i,k_1}$. As in \cite{BieRut15} we do not postulate that the processes $S^{i,k_1}$, $i=1,\ldots, d_{k_1}$,  $k_1=1,\ldots, L$ are positive.

The trading desk can use different sources of funding, each being represented by a suitable family of cash accounts. For unsecured funding, we assume that the trading desk can fund her activity by unsecured borrowing or lending in different currencies, hence we introduce the cash accounts $B^{0,k_1}=B^{k_1}$, $k_1=1,\ldots, L$.

For every risky asset, we have an asset-specific funding account, which we call repo-account. We introduce $B^{i,k_1}$ as the funding account associated to the asset $S^{i,k_1}$. 

\textcolor{black}{In case borrowing and lending rates differ, e.g. in Section \ref{sec:pricingExogColl} on collateralization, we introduce the further superscripts $b$ and $l$ and consider, in place of a generic cash account $B^{i,k_1}$, the borrowing and lending cash accounts $B^{i,k_1,b}$ and $B^{i,k_1,l}$}.

We introduce a notation for foreign exchange rates. Let $\mathcal{X}^{k_0,k}$, $k_0,k=1,\ldots,L$ the price of one unit of currency $k$ in terms of currency $k_0$. In terms of the usual FORDOM convention in currency markets we have, e.g. for EURUSD, that $\mathcal{X}^{USD,EUR}$ is the price in USD of 1 EUR.

\begin{assumption} We introduce the following processes:
\begin{enumerate}
\item[i)] ex-dividend price process $S^{i,k_1}$, $i=1,\ldots,d_{k_1}$, $k_1=1,\ldots,L$ are real-valued RCLL semimartingales.
\item[ii)] cumulative dividend streams $D^{i,k_1}$ , $i=1,\ldots,d_{k_1}$, $k=1,\ldots,L$ are processes of finite variation with $D^{i,k_1}_0=0$.
\item[iii)] exchange rate processes $\mathcal{X}^{k_0,k}$, $k_0,k=1,\ldots,L$ are positive-valued RCLL semimartingales.
\textcolor{black}{\item[iv)] funding accounts $B^{i,k_1}$ $i=0,\ldots,d_{k_1}$ are strictly positive and continuous processes of finite variation with $B^{i,k_1}_0=1$.  Positive or negative dividends from the $i$-$k_1$-th risky asset are invested in the corresponding funding account $B^{i,k_1}$}.
\end{enumerate}
\end{assumption}

In line with \cite{BieRut15} we assume that prices are real-valued (for example, the price of an interest rate swap might be negative), foreign exchange rates are however assumed to be strictly positive. Based on the last item of the above assumption, we introduce the following objects.

\begin{definition} The cumulative dividend price $S^{i,cld,k_1}$ in units of currency $k_1$ is given as
\begin{align}
S^{i,cld,k_1}_t:=S^{i,k_1}_t+B^{i,k_1}_t\int_{(0,t]}\left(B^{i,k_1}_u\right)^{-1}dD^{i,k_1}_u.
\end{align}
The cumulative dividend $S^{i,cld,k_0,k_1}$ of the asset traded in units of currency $k_1$, expressed in units of currency $k_0$ is given as
\begin{align}
S^{i,cld,k_0,k_1}_t:=S^{i,k_1}_t\mathcal{X}^{k_0,k_{1}}_t+B^{i,k_1}_t\int_{(0,t]}\left(B^{i,k_1}_u\right)^{-1}\mathcal{X}^{k_0,k_{1}}_udD^{i,k_1}_u.
\end{align}
the discounted cumulative dividend price $\hat{S}^{i,cld,k_1}:=(B^{i,k_1})^{-1}S^{i,cld,k_1}$ in units of currency $k_1$ satisfies
\begin{align}
\hat{S}^{i,cld,k_1}_t:=\hat{S}^{i,k_1}_t+\int_{(0,t]}\left(B^{i,k_1}_u\right)^{-1}dD^{i,k_1}_u.
\end{align}
The discounted cumulative dividend $\hat{S}^{i,cld,k_0,k_1}:=(B^{i,k_1})^{-1}S^{i,cld,k_0,k_1}_t$ of the asset traded in units of currency $k_1$, expressed in units of currency $k_0$ satisfies
\begin{align}
\hat{S}^{i,cld,k_0,k_1}=\hat{S}^{i,k_1}_t\mathcal{X}^{k_0,k_{1}}_t+\int_{(0,t]}\left(B^{i,k_1}_u\right)^{-1}\mathcal{X}^{k_0,k_{1}}_udD^{i,k_1}_u
\end{align}
\end{definition}
\textcolor{black}{For the sake of clarity, let us state the following.}
\begin{notation} \textcolor{black}{Concerning currency indeces, we will make the following choices in the sequel of the paper:
\begin{itemize}
\item $k_0$ will be used for the currency of denomination of the portfolio. In most cases we will have $k_0=e$, where $e$ represents the domestic currency, so that we will omit this index in the computations.
\item $k_1$ will be used for the native currency of denomination of risky assets, associated repo cash accounts and unsecured funding accounts.
\item $k_2$ will be used for the native currency of denomination of contractual cashflows.
\item $k_3$ will be introduced in Section \ref{sec:CollateralizedTrading} as the currency of denomination of the collateral.
\end{itemize}}
\end{notation}

\subsection{Contracts and trading strategies}

\begin{definition}
A dynamic portfolio, denoted as $\varphi=(\xi, \psi)$ with 
\begin{align}
\varphi=(\xi, \psi) =\left(\xi^{1,1}, \ldots, \xi^{d_1,1,},\xi^{1,2},\ldots ,\xi^{d_L,L}, \psi^{0,1}, \ldots, \psi^{d_1,1},\psi^{0,2},\ldots, \psi^{d_L,L}\right),
\end{align}
consists of risky securities $S^{i,k_1}$, $i=1,\ldots, d_{k_1}$,  $k_1=1,\ldots, L$, the cash accounts $B^{0,k_1}=B^{k_1}$, $k_1=1,\ldots, L$, for unsecured borrowing and lending, and funding/repo-accounts $B^{i,k_1}$, $i=1,\ldots, d_{k_1}$,  $k_1=1,\ldots, L$ used for funding of the $i$-$k_1$-th risky asset.
\end{definition}
We will use the shorthand $\psi^{k_1}=\psi^{0,k_1}$, $k_1=1,\ldots, L$.  In line with \cite{BieRut15} we consider the following contracts.
\begin{definition}\label{def:bilContr}
By a bilateral financial contract, or simply a contract, we mean an arbitrary RCLL process of finite variation, denoted by $A^{k_2}$ to emphasize that the contract is denominated in terms of currency $k_2$. The process $A^{k_2}$ is aimed to represent the cumulative cash flows of a given contract from time $0$ until its maturity date $T$. By convention, we set $A^{k_2}_{0-}=0$.
\end{definition}
The process $A^{k_2}$ represents the flows from the perspective of the hedger and includes the initial flow $A^{k_2}_0$ taking place at the contract's inception. As shown in \cite{BieRut15} it can be used to describe contracts with multiple cash flows during the contract's lifetime, with the cash flow at time $0$ representing the (yet to be determined) price $p^{k_2}$ of the claim, in units of currency $k_2$. For example, in the case of a European call option written on the exchange rate $\mathcal{X}^{e,k_2}$, one has $A^e_{t}=p^e \mathbf{1}_{[0, T]}(t)-\left(\mathcal{X}^{e,k_2}_{T}-K\right)^{+} \mathbf{1}_{[T]}(t)$, $K>0$.

However, cash flows of a contract might be expressed in any currency, hence we introduce also the notation $A^{k_0,k_2}$, to denote the flows of contracts natively denominated in units of currency $k_2$, when expressed in units of currency $k_0$. Assuming $\int_{(0,t]}\mathcal{X}^{k_0,k_2}_udA^{k_2}_u$ is a square integrable random variable, for any choice of the indices $k_0,k_2$ we write
\begin{align}
A_t^{k_0,k_2}:=p^{k_2}\mathcal{X}^{k_0,k_2}_0\mathbf{1}_{[0, T]}(t)\textcolor{black}{+}\int_{(0,t]}\mathcal{X}^{k_0,k_2}_udA^{k_2}_u
\end{align}
and set $p^{k_0}:=p^{k_2}\mathcal{X}^{k_0,k_2}_0$. For example a call option written on a generic asset $S^{i,k_2}$, has the following stream of cash flows in units of domestic currency $$A^{e,k_2}_{t}=p^e \mathbf{1}_{[0, T]}(t)-\mathcal{X}^{e,k_2}_{T}\left(S^{i,k_2}_T-K\right)^{+} \mathbf{1}_{[T]}(t),$$ for $K>0$.

\begin{definition}
A trading strategy is a triplet $(x, \varphi, A^{k_2})$, where $x$ is the initial endowment of the hedger, $\varphi$ represents the hedging portfolio and $A^{k_2}$ are contractual cash flows in currency $k_2$.
\end{definition}

We denote by $V^{k_0}(x, \varphi, A^{k_2})$ the wealth process of the trading strategy $(x, \varphi, A^{k_2})$ expressed under currency $k_0$. When $k_0=e$ we simply omit the currency index and write $V(x, \varphi, A^{k_2})=V^e(x, \varphi, A^{k_2})$. We have $V_0(x, \varphi, 0)=x$ and $V_0(x, \varphi, A^{k_2})=x+A^{e,k_2}_0=x+p^e$. We introduce the following regularity assumption.

\begin{assumption} We assume that
\begin{enumerate}
\item[i)] $\xi^{i,k_1}$ $i=1,2, \ldots, d_k$, $k_1=1,\ldots,L$ are arbitrary $\GG$-predictable processes.
\item[ii)] $\psi^{j,k_1}$ $j=0,1, \ldots, d_k$, $k_1=1,\ldots,L$ are arbitrary $\GG$-adapted processes.
\end{enumerate}
all processes above are such that the stochastic integrals used in what follows are well defined.
\end{assumption}
Let us introduce the concept of self-financing trading strategy.

\begin{definition}\label{def:firstSelfFinancing} Given the hedger's initial endowment $x$, we say that a trading strategy $(x, \varphi, A^{k_2})$, associated with a contract $A^{k_2}$, $k_2\in\{1,\ldots,L\}$ is self financing, whenever the wealth process $V(x, \varphi, A^{k_2})$, which is given by the formula
\begin{align}
\label{eq:FirstValueProcess}
V_t(x, \varphi, A^{k_2})=\sum_{k_1=1}^L\mathcal{X}^{e,k_1}_t\left(\sum_{i=1}^{d_{k_1}} \xi_{t}^{i,k_1} S_{t}^{i,k_1}+\sum_{j=0}^{d_{k_1}} \psi_{t}^{j,k_1} B_{t}^{j,k_1}\right),
\end{align}
satisfies
\begin{align}
\begin{aligned}
V_t(x, \varphi, A^{k_2})&=x+\sum_{k_1=1}^L\left\{\sum_{i=1}^{d_{k_1}}\left[\int_{(0, t]} \mathcal{X}_{u}^{e, k_{1}} \xi_{u}^{i, k_{1}}\left(d S_{u}^{i, k_{1}}+d D_{u}^{i, k_{1}}\right)\right.\right.\\
&\left.+\int_{(0, t]} \xi_{u}^{i, k_{1}} S_{u}^{i, k_{1}} d \mathcal{X}_{u}^{e, k_{1}}+\int_{(0, t]} \xi_{u}^{i, k_{1}} d\left[S^{i, k_{1}}, \mathcal{X}^{e, k_{1}}\right]_u\right]\\
&\left.+\sum_{j=0}^{d_{k_1}}\left[\int_{0}^{t} \mathcal{X}_{u}^{e, k_{1}} \psi_{u}^{j, k_{1}} d B_{u}^{j, k_{1}}+\textcolor{black}{\int_{(0,t]}} \psi_{u}^{j, k_{1}} B_{u}^{j, k_{1}} d \mathcal{X}_{u}^{e, k_{1}}\right]\right\}+A^{e,k_2}_t.
\end{aligned}
\end{align}
\end{definition}

\begin{remark} In a single currency case Definition~\ref{def:firstSelfFinancing} corresponds to Definition 2.3 in \cite{BieRut15}.
\end{remark}

\subsection{Basic multi-currency setting} Absence of arbitrage is a feature of the market that must hold irrespective of the particular funding strategy adopted: Absence of arbitrage should hold irrespective of the presence or absence of a collateralization agreement. Absence of arbitrage should hold first in a basic setting without any collateralization agreement. The introduction of collateralization agreements should be done in such a way as to preserve absence of arbitrage. In this section we start our discussion of absence of arbitrage. With this aim in mind, following \cite{BieRut15}, we introduce the multi-currency basic model.

\begin{definition} We call basic multi-currency model with funding costs a market model in which trading in funding accounts and risky assets is unconstrained.
\end{definition}
%lending and borrowing accounts coincide: we have $B^{0,k_1}=B^{0,k_1,b}=B^{0,k_1,l}$ and $B^{j,k_1}=B^{j,k_1,b}=B^{j,k_1,l}$. for all $j=1,\ldots,d_{k_1}$, $k_1=1,\ldots,L$. and 

This simple setting is instrumental in analyzing more realistic models with further trading covenants. Following \cite{BieRut15},
we introduce the concept of \textit{netted wealth}, which will be instrumental in characterizing absence of arbitrage in the multi-currency market: In fact, the concept of martingale measure will be that of a measure such that the discounted netted wealth is a (local) martingale. As explained in \cite{BieRut15}, this is necessary because the wealth process includes $A^{k_2}$, and one needs to compensate such position with holdings on $-A^{k_2}$.

\begin{definition}
The netted wealth $V^{net}(x,\varphi,A^{k_2})$ of a trading strategy is defined by the equality
\begin{align}
V^{net}(x,\varphi,A^{k_2}):=V(x,\varphi,A^{k_2})+V(0,\widetilde{\varphi},-A^{k_2}),
\end{align}
where $(0,\widetilde{\varphi},-A^{k_2})$ is the unique self-financing trading strategy that uses holdings in $B^e$ to finance a position $A^{k_2}$: the trader borrows money from treasury (i.e. borrows units of $B^e$), purchases units of the currency $k_2$ (i.e. buys units of $B^{0,k_2}$) and uses them to enter a position in the claim with dividend process $A^{k_2}$, and leaves the position unhedged, meaning that for $\widetilde{\varphi}$ we have $\xi^{i,k_1}=\psi^{j,k_1}=0$ for any $i=1,\ldots,d_k$ and  $j=1,\ldots,d_k$.
\end{definition}

Notice that the net effect in $\widetilde{\varphi}$ is that of a short position in the domestic unsecured account $B^e=B^{0,e}$ and a long position on $A^{k_2}$ with two opposite positions in $B^{0,k_2}$ compensating each other. \textcolor{black}{In the following result, we use the simplified notation $B^e:=B^{0,e}$}.

\begin{lemma}\label{lem:VnetV}
The following equality holds, for all $t\in[0,T]$.
\begin{align}
\label{eq:firstLemma}
V^{net}_t(x,\widetilde{\varphi},A^{k_2})=V_t(x,{\varphi},A^{k_2})-B^e_t\int_{[0,t]}\frac{dA_u^{e,k_2}}{B^e_u}.
\end{align}
\end{lemma}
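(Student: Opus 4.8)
The identity to establish is
\begin{align}
V^{net}_t(x,\varphi,A^{k_2})=V_t(x,\varphi,A^{k_2})-B^e_t\int_{[0,t]}\frac{dA_u^{e,k_2}}{B^e_u},
\end{align}
and by the definition of netted wealth this reduces to computing $V_t(0,\widetilde{\varphi},-A^{k_2})$ explicitly and showing it equals $-B^e_t\int_{[0,t]}(B^e_u)^{-1}dA_u^{e,k_2}$. The plan is therefore to unwind the self-financing condition of Definition~\ref{def:firstSelfFinancing} for the particular strategy $\widetilde{\varphi}$ described just before the lemma: the trader borrows in $B^e=B^{0,e}$, immediately converts into units of $B^{0,k_2}$, enters the claim $-A^{k_2}$, and holds nothing else ($\xi^{i,k_1}=0$ for all $i,k_1$, and $\psi^{j,k_1}=0$ except for $j=0$, $k_1\in\{e,k_2\}$).

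First I would write down the wealth equation \eqref{eq:FirstValueProcess} for $\widetilde{\varphi}$: only the two terms $\mathcal{X}^{e,e}_t\psi^{0,e}_t B^{0,e}_t$ and $\mathcal{X}^{e,k_2}_t\psi^{0,k_2}_t B^{0,k_2}_t$ survive, and since the strategy $(0,\widetilde{\varphi},-A^{k_2})$ has zero initial endowment and the two $B^{0,k_2}$ positions are set up to cancel (one long leg funding the claim, matched by the conversion of the borrowed domestic cash), the net exposure is a short position in $B^e$ of exactly the size needed to finance $-A^{k_2}$. Concretely, I expect $\psi^{0,k_2}_t B^{0,k_2}_t\mathcal{X}^{e,k_2}_t$ to contribute $+A^{e,k_2}_t$-worth of value offset against the claim leg, leaving $V_t(0,\widetilde{\varphi},-A^{k_2})=\mathcal{X}^{e,e}_t\psi^{0,e}_tB^{0,e}_t$ with $\mathcal{X}^{e,e}\equiv 1$. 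The self-financing equation then says this value equals $\int_{(0,t]}\psi^{0,e}_u\,dB^e_u + (\text{flows from}-A^{k_2}) = \int_{(0,t]}\psi^{0,e}_u\,dB^e_u - A^{e,k_2}_t$. Solving the resulting relation $\psi^{0,e}_tB^e_t = \int_{(0,t]}\psi^{0,e}_u\,dB^e_u - A^{e,k_2}_t$ for the process $\psi^{0,e}$ is a linear SDE driven by the finite-variation continuous process $B^e$ (Assumption iv); its unique solution, obtained by integration by parts applied to $(B^e_u)^{-1}$ against the above or equivalently by variation-of-constants, is $\psi^{0,e}_t = -(B^e_t)^{-1}\big(A^{e,k_2}_t + \text{correction}\big)$, yielding $\psi^{0,e}_tB^e_t = -B^e_t\int_{[0,t]}(B^e_u)^{-1}dA^{e,k_2}_u$. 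Adding this to $V_t(x,\varphi,A^{k_2})$ gives the claim; the initial instant $t=0$ must be handled separately because $A^{e,k_2}_0=p^e\neq 0$, which is why the integral is over the closed interval $[0,t]$.

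The main obstacle I anticipate is bookkeeping rather than conceptual: correctly identifying which of the many terms in \eqref{eq:FirstValueProcess} and its dynamics vanish for $\widetilde{\varphi}$, making sure the two $B^{0,k_2}$ legs and the jump of $A^{e,k_2}$ at time $0$ are treated consistently, and verifying that the candidate $\psi^{0,e}$ indeed solves the self-financing equation by an integration-by-parts computation $d\big(B^e_t\int_{[0,t]}(B^e_u)^{-1}dA^{e,k_2}_u\big)=\int_{[0,t]}(B^e_u)^{-1}dA^{e,k_2}_u\,dB^e_t + dA^{e,k_2}_t$, which works cleanly precisely because $B^e$ is continuous of finite variation so there is no quadratic covariation term. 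Since the single-currency specialization of this statement is Lemma~2.? of \cite{BieRut15}, I would also remark that the multi-currency case adds nothing new beyond the trivial conversion $A^{e,k_2}=p^{k_2}\mathcal{X}^{e,k_2}_0\mathbf{1}_{[0,T]}+\int_{(0,\cdot]}\mathcal{X}^{e,k_2}_u\,dA^{k_2}_u$ already fixed in the contract definition, so the proof is essentially a transcription with the foreign-exchange factors carried along.
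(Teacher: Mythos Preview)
Your proposal is correct and follows essentially the same route as the paper: identify $V_t(0,\widetilde{\varphi},-A^{k_2})=\tilde{\psi}^e_tB^e_t$, use the self-financing dynamics $dV_t=\tilde{\psi}^e_t\,dB^e_t-dA^{e,k_2}_t$, and integrate. The only cosmetic difference is that the paper computes $d\big((B^e)^{-1}V(0,\widetilde{\varphi},-A^{k_2})\big)=-(B^e_t)^{-1}dA^{e,k_2}_t$ directly rather than phrasing it as solving a linear SDE by variation of constants, but these are the same calculation.
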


\begin{proof}
This corresponds to Lemma 2.1 in \cite{BieRut15}. We provide the details in what follows. \textcolor{black}{Thanks to \eqref{eq:FirstValueProcess},} we have $V_t(0,\widetilde{\varphi},-A^{k_2})=\tilde{\psi}^e_t B^e_t$. We also have
\begin{align*}
d\left(\frac{V(0,\widetilde{\varphi},-A^{k_2})}{B^e}\right)_t&=\frac{dV_t(0,\widetilde{\varphi},-A^{k_2})}{B^e_t}-\frac{V_t(0,\widetilde{\varphi},-A^{k_2})}{\left(B^e_t\right)^2}dB^e_t\\
&=\frac{\tilde{\psi}^e_tdB^e_t-dA^{e,k_2}_t}{B^e_t}-\frac{V_t(0,\widetilde{\varphi},-A^{k_2})}{\left(B^e_t\right)^2}dB^e_t\\
&=-\left(B^e_t\right)^2dA^{e,k_2}_t,
\end{align*}
where we used $\tilde{\psi}^e_t=\frac{V_t(0,\widetilde{\varphi},-A^{k_2})}{B^e_t}$. \textcolor{black}{In the derivation above, we used the fact that the contractual stream $A^{k_2}$, when converted in units of the domestic currency, gives the cashflow stream $A^{e,k_2}$.}  Now, since we know that $V_0(0,\widetilde{\varphi},-A^{k_2})=-A^{e,k_2}_0$, we can integrate both sides to conclude that
\begin{align*}
V_t(0,\widetilde{\varphi},-A^{k_2})=B^e_t\int_{[0,t]}\frac{dA^{e,k_2}_u}{B^e_u},
\end{align*}
and the conclusion immediately follows from the definition of netted wealth.
\end{proof}

\subsubsection{Preliminary computation in the basic model}

Following \cite{BieRut15} we introduce, for $i=1,\ldots d_{k_1}$, $k_1=1,\ldots,L$, the processes
\begin{align}
\label{eq:definitionKi}
K^{i,k_0,k_1}_t:=\int_{(0,t]}B^{i,k_1}_ud\hat{S}^{i,cld,k_0,k_1}_u.
\end{align}
This process represents the wealth, denominated in units of currency $k_0$, discounted by the funding account $B^{i,k_1}$, of a self-financing trading strategy that invests in the asset $S^{i,k_1}$. For the sake of simplicity, the next process is only considered in terms of units of the domestic currency $e$:
\begin{align}
K^{\varphi,k_2}_t:=\int_{(0,t]}B^{e}_ud\tilde{V}_u(x,\varphi,A^{k_2})-(A^{e,k_2}_t-A^{e,k_2}_0)=\int_{(0,t]}B^{e}_ud\tilde{V}^{net}_u(x,\varphi,A^{k_2}),
\end{align}
where $\tilde{V}^{net}(x,\varphi,A^{k_2}):=(B^e)^{-1}V^{net}(x,\varphi,A^{k_2})$ and $\tilde{V}(x,\varphi,A^{k_2}):=(B^e)^{-1}V(x,\varphi,A^{k_2})$ and the last equality follows from \eqref{eq:firstLemma}.

The following proposition is instrumental for the analysis of absence of arbitrage in the basic model and more advanced settings. We remark again that we are adopting the point of view of the domestic currency $e$, but analogous computations make it possible to obtain the same claims with respect to any currency denomination.

\begin{proposition}\label{prop:PrelimRes}
For any self-financing strategy $\varphi$ we have that, for every $t\in[0,T]$
\begin{align}
\label{eq:Kvarphi}
\begin{aligned}
K^{\varphi,k_2}_t&=\sum_{k_1=1}^L\sum_{i=1}^{d_{k_1}}\int_{(0,t]}\xi^{i,k_1}_udK^{i,e,k_1}_u\\
&+\sum_{k_1=1}^L\sum_{i=1}^{d_{k_1}}\int_0^t\frac{B^e_u}{B^{i,k_1}_u}\left(\psi^{i,k_1}_uB^{i,k_1}_u+\xi^{i,k_1}_uS^{i,k_1}_u\right)\mathcal{X}^{e,k_1}_ud\left(\frac{B^{i,k_1}}{B^e}\right)_u\\
&+\sum_{k_1=1}^L\sum_{i=1}^{d_{k_1}} \textcolor{black}{\int_{(0,t]}} B^{i,k_1}_u\psi^{i,k_1}_ud\mathcal{X}^{e,k_1}_u+\sum_{k_1=1}^L\int_0^tB^e_u\psi^{k_1}_u
d\left(\frac{\mathcal{X}^{e,k_1}B^{k_1}}{B^e}\right)_u.
\end{aligned}
\end{align}
Assume also that the repo constraint holds, i.e. for all $i=1,\ldots,d_{k_1}$, $k_1=1,\ldots,L$ we have
\begin{align}
\label{eq:repoConstraint}
\zeta_t^{i,k_1}:=\psi^{i,k_1}_tB^{i,k_1}_t+\xi^{i,k_1}_tS^{i,k_1}_t=0,\quad t\in[0,T],
\end{align}
then we have
\begin{align}
\label{eq:assumeRepoConstraint}
\begin{aligned}
K^{\varphi,k_2}_t&=\sum_{k_1=1}^L\sum_{i=1}^{d_{k_1}}\int_{(0,t]}\xi^{i,k_1}_uB^{i,k_1}_u\left(\mathcal{X}^{e,k_1}_ud\left(\frac{S^{i,k_1}}{B^{i,k_1}}\right)_u+\frac{\mathcal{X}^{e,k_1}_u}{B^{i,k_1}_u}dD^{i,k_1}_u+d\left[\frac{S^{i,k_1}}{B^{i,k_1}},\mathcal{X}^{e,k_1}\right]_u\right)\\
&+\sum_{k_1=1}^L\int_0^tB^e_u\psi^{k_1}_u
d\left(\frac{\mathcal{X}^{e,k_1}B^{k_1}}{B^e}\right)_u.
\end{aligned}
\end{align}
\end{proposition}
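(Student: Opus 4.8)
\emph{Approach.} The identity is obtained by a direct, if lengthy, computation in the It\^o calculus of semimartingales; it is the multi-currency analogue of the corresponding single-currency computation in \cite{BieRut15}, and no probabilistic subtlety is involved. First I would exploit that $B^e$ is strictly positive, continuous and of finite variation, so that the product rule for $\tilde V=(B^e)^{-1}V$ gives $B^e_t\,d\tilde V_t=dV_t-(B^e_t)^{-1}V_t\,dB^e_t$ with no covariation term. Inserting this into the definition of $K^{\varphi,k_2}$ and using that the self-financing dynamics of Definition~\ref{def:firstSelfFinancing} carry the term $dA^{e,k_2}_t$, which cancels against $-(A^{e,k_2}_t-A^{e,k_2}_0)$, I would reduce to
\[
K^{\varphi,k_2}_t=\int_{(0,t]}dV^{\circ}_u-\int_{(0,t]}(B^e_u)^{-1}V_u\,dB^e_u,
\]
where $dV^{\circ}$ denotes the self-financing dynamics with the $dA^{e,k_2}$ term deleted and $V_u$ is read off from the pointwise formula \eqref{eq:FirstValueProcess}.

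\emph{Regrouping by funding account.} Next I would split the integrand into the part indexed by $j=0$ (the unsecured accounts $B^{k_1}=B^{0,k_1}$) and the part indexed by $i=1,\dots,d_{k_1}$ (the risky assets with their repo accounts). For the unsecured part, I would set $N^{k_1}:=\mathcal{X}^{e,k_1}B^{k_1}/B^e$; since $B^{k_1}$ and $B^e$ are continuous of finite variation, the quotient rule gives $B^e\,dN^{k_1}=B^{k_1}\,d\mathcal{X}^{e,k_1}+\mathcal{X}^{e,k_1}\,dB^{k_1}-(B^e)^{-1}\mathcal{X}^{e,k_1}B^{k_1}\,dB^e$, and multiplying by $\psi^{k_1}$ one recognizes exactly the sum of the $j=0$ contributions of $dV^{\circ}$ and of $-(B^e)^{-1}V\,dB^e$, which is the last term of \eqref{eq:Kvarphi}. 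For each risky index $i$, I would expand $dK^{i,e,k_1}=B^{i,k_1}\,d\hat S^{i,cld,e,k_1}$ using the defining relations for $\hat S^{i,cld,e,k_1}$ and $\hat S^{i,k_1}=(B^{i,k_1})^{-1}S^{i,k_1}$, the product rule, and the identity $[(B^{i,k_1})^{-1}S^{i,k_1},\mathcal{X}^{e,k_1}]=\int(B^{i,k_1}_u)^{-1}\,d[S^{i,k_1},\mathcal{X}^{e,k_1}]_u$ (valid because $B^{i,k_1}$ is continuous of finite variation), and I would likewise expand $d(B^{i,k_1}/B^e)$. Substituting these into the claimed right-hand side of \eqref{eq:Kvarphi} and comparing the coefficients of $d\mathcal{X}^{e,k_1}$, $dS^{i,k_1}$, $dB^{i,k_1}$, $dB^e$, $dD^{i,k_1}$ and $d[S^{i,k_1},\mathcal{X}^{e,k_1}]$, one verifies that it agrees term by term with the $i$-indexed part of $\int dV^{\circ}-\int(B^e)^{-1}V\,dB^e$; matching the $dB^{i,k_1}$ coefficients is where the relation $\xi^{i,k_1}S^{i,k_1}=\zeta^{i,k_1}-\psi^{i,k_1}B^{i,k_1}$ enters. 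This establishes \eqref{eq:Kvarphi}.

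\emph{The repo-constrained case.} Finally, imposing \eqref{eq:repoConstraint} makes $\zeta^{i,k_1}\equiv0$, so the second line of \eqref{eq:Kvarphi} drops out. For the surviving terms, the expansion of $dK^{i,e,k_1}$ found above rearranges to
\[
dK^{i,e,k_1}_u=S^{i,k_1}_u\,d\mathcal{X}^{e,k_1}_u+B^{i,k_1}_u\left(\mathcal{X}^{e,k_1}_u\,d\left(\frac{S^{i,k_1}}{B^{i,k_1}}\right)_u+\frac{\mathcal{X}^{e,k_1}_u}{B^{i,k_1}_u}\,dD^{i,k_1}_u+d\left[\frac{S^{i,k_1}}{B^{i,k_1}},\mathcal{X}^{e,k_1}\right]_u\right),
\]
and, since $\psi^{i,k_1}B^{i,k_1}=-\xi^{i,k_1}S^{i,k_1}$, the term $\int B^{i,k_1}\psi^{i,k_1}\,d\mathcal{X}^{e,k_1}$ appearing in \eqref{eq:Kvarphi} cancels the $\int\xi^{i,k_1}S^{i,k_1}\,d\mathcal{X}^{e,k_1}$ produced here, which leaves precisely \eqref{eq:assumeRepoConstraint}.

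\emph{Main obstacle.} There is no conceptual hurdle; the proof is bookkeeping in stochastic calculus. The points that demand care are keeping track of which processes ($B^e$, $B^{k_1}$, $B^{i,k_1}$) are continuous and of finite variation, so that their quadratic-covariation contributions vanish and the quotient rule is classical; pulling $(B^{i,k_1})^{-1}$ correctly through the covariation $[S^{i,k_1},\mathcal{X}^{e,k_1}]$; and organizing the large collection of terms so that the coefficient matching in the regrouping step goes through without error.
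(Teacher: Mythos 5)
Your proposal is correct and follows essentially the same route as the paper's proof: discount the self-financing wealth by the continuous finite-variation account $B^e$, cancel the $dA^{e,k_2}$ flows against the definition of $K^{\varphi,k_2}$, regroup the remaining terms around $dK^{i,e,k_1}=B^{i,k_1}d\hat S^{i,cld,e,k_1}$, $d(B^{i,k_1}/B^e)$ and $d(\mathcal{X}^{e,k_1}B^{k_1}/B^e)$, and then let the repo constraint kill the $\zeta^{i,k_1}$ line and cancel the $S^{i,k_1}d\mathcal{X}^{e,k_1}$ contribution of $dK^{i,e,k_1}$. The only difference is presentational (you verify \eqref{eq:Kvarphi} by coefficient matching rather than by the paper's forward regrouping), and your repo-constrained step even spells out the expansion of $dK^{i,e,k_1}$ that the paper leaves implicit.
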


\begin{proof}
Let $V := V(x,\varphi,A^{k_2}) $ and hence $\tilde{V}:=(B^e)^{-1}V$. Then we have
\begin{align*}
d\tilde{V}_t&=-\frac{V_t}{(B_t^e)^2}dB^e_t+\frac{1}{B^e_t}\left(\sum_{k_1=1}^L\left\{\sum_{i=1}^{d_{k_1}}\left[ \mathcal{X}_{t}^{e, k_{1}} \xi_{t}^{i, k_{1}}\left(d S_{t}^{i, k_{1}}+d D_{t}^{i, k_{1}}\right)+ \xi_{t}^{i, k_{1}} S_{t}^{i, k_{1}} d \mathcal{X}_{t}^{e, k_{1}}\right.\right.\right.\\
&\left.\left.\left.+ \xi_{t}^{i, k_{1}} d\left[S^{i, k_{1}}, \mathcal{X}^{e, k_{1}}\right]_t\right]+\sum_{j=0}^{d_{k_1}}\left[ \mathcal{X}_{t}^{e, k_{1}} \psi_{t}^{j, k_{1}} d B_{t}^{j, k_{1}}+\psi_{t}^{j, k_{1}} B_{t}^{j, k_{1}} d \mathcal{X}_{t}^{e, k_{1}}\right]\right\}+dA^{e,k_2}_t\right)\\
&=-\frac{1}{(B_t^e)^2}\left(\sum_{k_1=1}^L\mathcal{X}^{e,k_1}_t\left(\sum_{i=1}^{d_{k_1}} \xi_{t}^{i,k_1} S_{t}^{i,k_1}+\sum_{j=0}^{d_{k_1}} \psi_{t}^{j,k_1} B_{t}^{j,k_1}\right)\right)dB^e_t\\
&+\frac{1}{B^e_t}\left(\sum_{k_1=1}^L\left\{\sum_{i=1}^{d_{k_1}}\left[ \mathcal{X}_{t}^{e, k_{1}} \xi_{t}^{i, k_{1}}\left(d S_{t}^{i, k_{1}}+d D_{t}^{i, k_{1}}\right)+ \xi_{t}^{i, k_{1}} S_{t}^{i, k_{1}} d \mathcal{X}_{t}^{e, k_{1}}\right.\right.\right.\\
&\left.\left.\left.+ \xi_{t}^{i, k_{1}} d\left[S^{i, k_{1}}, \mathcal{X}^{e, k_{1}}\right]_t\right]+\sum_{j=0}^{d_{k_1}}\left[ \mathcal{X}_{t}^{e, k_{1}} \psi_{t}^{j, k_{1}} d B_{t}^{j, k_{1}}+\psi_{t}^{j, k_{1}} B_{t}^{j, k_{1}} d \mathcal{X}_{t}^{e, k_{1}}\right]\right\}+dA^{e,k_2}_t\right).
\end{align*}
By regrouping terms we obtain
\begin{align*}
d\tilde{V}_t&=\sum_{k_1=1}^L\sum_{i=1}^{d_{k_1}}\xi^{i,k_1}_td\left(\frac{\mathcal{X}^{e,k_1}S^{i,k_1}}{B_t^e}\right)_t+\sum_{k_1=1}^L\sum_{i=1}^{d_{k_1}}\xi^{i,k_1}_t\frac{\mathcal{X}^{e,k_1}_t}{B^{e}_t}dD^{i,k_1}_t\\
&+\sum_{k_1=1}^L\sum_{i=1}^{d_{k_1}}\psi^{i,k_1}_td\left(\frac{\mathcal{X}^{e,k_1}B^{i,k_1}}{B^e}\right)_t+\sum_{k_1=1}^L\psi^{k_1}_td\left(\frac{\mathcal{X}^{e,k_1}B^{k_1}}{B^e}\right)_t+(B_t^e)^{-1}dA^{e,k_2}_t.
\end{align*}
We set
\begin{align}
\tilde{S}^{i,cld,e,k_1}_t:=\frac{\mathcal{X}^{e,k_1}_tS^{i,k_1}_t}{B^e_t}+\int_{(0,t]}\frac{\mathcal{X}^{e,k_1}_u}{B^{e}_u}dD^{i,k_1}_u.
\end{align}
We can then focus on $K^{\varphi,k_2}$. We have
\begin{align*}
dK_t^{\varphi,k_2}&=B^e_td\tilde{V}_t(x,\varphi,A^{k_2})-dA^{e,k_2}_t\\
&=\sum_{k_1=1}^L\sum_{i=1}^{d_{k_1}}B^e_t\xi^{i,k_1}_td\tilde{S}^{i,cld,e,k_1}_t+\sum_{k_1=1}^L\sum_{i=1}^{d_{k_1}}B^e_t\psi^{i,k_1}_td\left(\frac{\mathcal{X}^{e,k_1}B^{i,k_1}}{B^e}\right)_t\\
&\quad+\sum_{k_1=1}^LB^e_t\psi^{k_1}_td\left(\frac{\mathcal{X}^{e,k_1}B^{k_1}}{B^e}\right)_t\\
&=\sum_{k_1=1}^L\sum_{i=1}^{d_{k_1}}B^e_t\xi^{i,k_1}_td\left(\frac{S^{i,k_1}\mathcal{X}^{e,k_1}}{B^{i,k_1}}\frac{B^{i,k_1}}{B^e}\right)_t+\sum_{k_1=1}^L\sum_{i=1}^{d_{k_1}}\xi^{i,k_1}_t\mathcal{X}^{e,k_1}_tdD^{i,k_1}_t\\
&\quad +\sum_{k_1=1}^L\sum_{i=1}^{d_{k_1}}B^e_t\psi^{i,k_1}_td\left(\frac{\mathcal{X}^{e,k_1}B^{i,k_1}}{B^e}\right)_t+\sum_{k_1=1}^LB^e_t\psi^{k_1}_td\left(\frac{\mathcal{X}^{e,k_1}B^{k_1}}{B^e}\right)_t\\
&=\sum_{k_1=1}^L\sum_{i=1}^{d_{k_1}}B^e_t\xi^{i,k_1}_t\frac{S^{i,k_1}_t\mathcal{X}^{e,k_1}_t}{B^{i,k_1}_t}d\left(\frac{B^{i,k_1}}{B^e}\right)_t+\sum_{k_1=1}^L\sum_{i=1}^{d_{k_1}}\xi^{i,k_1}_tB^{i,k_1}_td\left(\frac{S^{i,k_1}\mathcal{X}^{e,k_1}}{B^{i,k_1}}\right)_t\\
&\quad +\sum_{k_1=1}^L\sum_{i=1}^{d_{k_1}}B^{i,k_1}_t\xi^{i,k_1}_t\frac{\mathcal{X}^{e,k_1}_t}{B^{i,k_1}_t}dD^{i,k_1}_t+\sum_{k_1=1}^L\sum_{i=1}^{d_{k_1}}B^e_t\psi^{i,k_1}_td\left(\frac{\mathcal{X}^{e,k_1}B^{i,k_1}}{B^e}\right)_t\\
&\quad+\sum_{k_1=1}^LB^e_t\psi^{k_1}_td\left(\frac{\mathcal{X}^{e,k_1}B^{k_1}}{B^e}\right)_t.
\end{align*}
Since $dK^{i,e,k_1}_t=B_t^{i,k_1}d\hat{S}^{i,cld,e,k_1}_t$, we have
\begin{align*}
dK_t^{\varphi,k_2}&=\sum_{k_1=1}^L\sum_{i=1}^{d_{k_1}}\xi^{i,k_1}_tdK^{i,e,k_1}_t\\
&+\sum_{k_1=1}^L\sum_{i=1}^{d_{k_1}}\left(\frac{B^e_t}{B^{i,k_1}_t}\right)\left(B^{i,k_1}_t\psi_t^{i,k_1}\mathcal{X}^{e,k_1}_td\left(\frac{B^{i,k_1}}{B^e}\right)_t+B^{i,k_1}_t\psi^{i,k_1}_t\frac{B^{i,k_1}_t}{B^e_t}d\mathcal{X}^{e,k_1}_t\right.\\
&\quad\quad\quad\quad\quad\quad\quad\quad\quad\quad\left.+\xi^{e,k_1}_tS^{i,k_1}_t\mathcal{X}^{e,k_1}_td\left(\frac{B^{i,k_1}}{B^e}\right)_t\right)+\sum_{k_1=1}^LB^e_t\psi^{k_1}_td\left(\frac{\mathcal{X}^{e,k_1}B^{k_1}}{B^e}\right)_t
%&=\sum_{k_1=1}^L\sum_{i=1}^{d_{k_1}}\xi^{i,k_1}_tdK^{i,e,k_1}_t\\
%&+\sum_{k_1=1}^L\sum_{i=1}^{d_{k_1}}\left(\frac{B^e_t}{B^{i,k_1}_t}\right)\left(B^{i,k_1}_t\psi^{i,k_1}+\xi^{e,k_1}_tS^{i,k_1}_t\right)\mathcal{X}^{e,k_1}_td\left(\frac{B^{i,k_1}}{B^e}\right)_t
\end{align*}
and \eqref{eq:Kvarphi} part is proven. Now, under the repo constraint \eqref{eq:repoConstraint}, we have
\begin{align*}
\textcolor{black}{\int_{(0,t]}} B^{i,k_1}_u\psi^{i,k_1}_u\frac{B^{i,k_1}_u}{B^e_u}d\mathcal{X}^{e,k_1}_u=-\textcolor{black}{\int_{(0,t]}} S^{i,k_1}_u\xi^{i,k_1}_u\frac{B^{i,k_1}_u}{B^e_u}d\mathcal{X}^{e,k_1}_u
\end{align*}
and so we obtain \eqref{eq:assumeRepoConstraint}.
\end{proof}

\subsubsection{Wealth dynamics in the basic model}
In \cite{BieRut15}, the single currency analogue of the increment $dK^{i,e,k_1}_t$, i.e. $dK^{i,e,e}_t$  represents the change in price of the $i$-th asset, net of funding costs. This becomes clearer thanks to the following result, that follows from an application of the Ito product rule.

\begin{lemma}The following equality holds true for $t\in[0,T]$.
\begin{align}
\label{eq:gainProcessAsset}
\begin{aligned}
K^{i,e,k_1}_t&=\int_{(0,t]}\left(S^{i,k_1}_ud\mathcal{X}^{e,k_1}_u-\frac{S^{i,k_1}_u\mathcal{X}^{e,k_1}_u}{B^{i,k_1}_u}dB^{i,k_1}_u+\mathcal{X}^{e,k_1}_udS^{i,k_1}_u\right.\\
&\left.\quad\quad\quad\quad\quad+d\left[\mathcal{X}^{e,k_1},S^{i,k_1}\right]_u+\mathcal{X}^{e,k_1}_udD^{i,k_1}_u\right).
\end{aligned}
\end{align}
\end{lemma}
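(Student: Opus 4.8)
The plan is to combine the definition $K^{i,e,k_1}_t=\int_{(0,t]}B^{i,k_1}_u\,d\hat S^{i,cld,e,k_1}_u$ from \eqref{eq:definitionKi} with the explicit form of the discounted cumulative dividend price recorded in the Definition above, namely $\hat S^{i,cld,e,k_1}_t=\frac{S^{i,k_1}_t}{B^{i,k_1}_t}\mathcal{X}^{e,k_1}_t+\int_{(0,t]}\frac{\mathcal{X}^{e,k_1}_u}{B^{i,k_1}_u}\,dD^{i,k_1}_u$, and to differentiate the right-hand side by the It\^o product rule. The dividend integral is already in integrated form, so its contribution to $B^{i,k_1}_u\,d\hat S^{i,cld,e,k_1}_u$ is exactly $\mathcal{X}^{e,k_1}_u\,dD^{i,k_1}_u$, that is, the last term of \eqref{eq:gainProcessAsset}; all that remains is to identify $B^{i,k_1}_u\,d\big(\frac{S^{i,k_1}}{B^{i,k_1}}\mathcal{X}^{e,k_1}\big)_u$ with the other four terms on the right-hand side of \eqref{eq:gainProcessAsset}.

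To compute this differential I would use that $B^{i,k_1}$ is, by the standing assumptions, strictly positive, continuous and of finite variation, hence so is $(B^{i,k_1})^{-1}$, with $d\big((B^{i,k_1})^{-1}\big)=-(B^{i,k_1})^{-2}\,dB^{i,k_1}$ and vanishing quadratic covariation against any semimartingale. Setting $U:=S^{i,k_1}/B^{i,k_1}$, integration by parts gives $dU=\frac{1}{B^{i,k_1}}\,dS^{i,k_1}-\frac{S^{i,k_1}}{(B^{i,k_1})^2}\,dB^{i,k_1}$, and since the second summand has finite variation, $d[\mathcal{X}^{e,k_1},U]=\frac{1}{B^{i,k_1}}\,d[\mathcal{X}^{e,k_1},S^{i,k_1}]$. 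A second application of the product rule to $U\,\mathcal{X}^{e,k_1}$ then yields
\[
d\Big(\tfrac{S^{i,k_1}}{B^{i,k_1}}\mathcal{X}^{e,k_1}\Big)=\tfrac{S^{i,k_1}}{B^{i,k_1}}\,d\mathcal{X}^{e,k_1}+\tfrac{\mathcal{X}^{e,k_1}}{B^{i,k_1}}\,dS^{i,k_1}-\tfrac{S^{i,k_1}\mathcal{X}^{e,k_1}}{(B^{i,k_1})^2}\,dB^{i,k_1}+\tfrac{1}{B^{i,k_1}}\,d[\mathcal{X}^{e,k_1},S^{i,k_1}].
\]
Multiplying through by $B^{i,k_1}_u$, adding back the dividend term $\mathcal{X}^{e,k_1}_u\,dD^{i,k_1}_u$, and integrating over $(0,t]$ reproduces \eqref{eq:gainProcessAsset}.

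I do not expect a genuine obstacle here; the only delicate points are bookkeeping ones. One must keep track of which factors are evaluated at $u-$ rather than at $u$ — suppressed here in line with the continuity of $B^{i,k_1}$ and with the notation used throughout this subsection — and one must invoke the two standard facts that a continuous finite-variation process contributes nothing to any quadratic covariation and that $[\mathcal{X}^{e,k_1},\int H\,dS^{i,k_1}]=\int H\,d[\mathcal{X}^{e,k_1},S^{i,k_1}]$, which is precisely what makes the bracket term collapse as used above. Everything else is the routine two-step application of the It\^o product rule just described.
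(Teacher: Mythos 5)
Your argument is correct and is exactly the route the paper intends: the paper states the lemma as an application of the It\^o product rule to $K^{i,e,k_1}_t=\int_{(0,t]}B^{i,k_1}_u\,d\hat S^{i,cld,e,k_1}_u$ and gives no further details, and your two-step product-rule computation (using that $B^{i,k_1}$ is continuous and of finite variation, so its reciprocal contributes no bracket term) fills in precisely those details. The bookkeeping caveats you note (left limits in integrands, the bracket identity for stochastic integrals) are handled consistently with the paper's own notational conventions, so there is nothing to correct.
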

Let us specialize \eqref{eq:gainProcessAsset} in a single currency setting, under the additional assumption that the funding account $B^{i,e}$ is absolutely continuous with respect to the Lebesgue measure, so that we write $dB^{i,e}_t=r^{i,e}_tB^{i,e}_tdt$, for some $\GG$-progressively measurable process $r^{i,e}=(r^{i,e}_t)_{t\in[0,T]}$. Since we obviously have $\mathcal{X}^{e,e}_t\equiv 1,\  t\in[0,T]$ we obtain
\begin{align}
\begin{aligned}
K^{i,e,e}_t&=\int_{(0,t]}\left(dS^{i,e}_u-S^{i,e}_ur^{i,e}_udu+dD^{i,k_1}_u\right).
\end{aligned}
\end{align}
Such expression is often referred to in the literature as the gain process from the $i$-th risky asset. In a multi currency setting, however, this no longer holds, since we have a further term involving the currency risk related to the foreign repo cash account: In Proposition \ref{prop:PrelimRes} we also have the term
\begin{align*}
\sum_{k_1=1}^L\sum_{i=1}^{d_{k_1}} \textcolor{black}{\int_{(0,t]}} B^{i,k_1}_u\psi^{i,k_1}_ud\mathcal{X}^{e,k_1}_u,
\end{align*}
which captures the impact of fluctuations of foreign exchange rates on the funding costs related to foreign repo positions. The definition of the martingale property for the gain process of risky assets should account also for this last source of funding costs, which is identically zero in the single-currency case.

\begin{corollary}
Formula \eqref{eq:Kvarphi} in Proposition \ref{prop:PrelimRes} is equivalent to the following expressions.
\begin{align}
\label{eq:CorollaryFirst}
\begin{aligned}
d\tilde{V}^{net}_t(x,\varphi,A^{k_2})&=\sum_{k_1=1}^L\sum_{i=1}^{d_{k_1}}\xi^{i,k_1}_t\frac{B^{i,k_1}_t}{B^e_t}d\hat{S}^{i,cld,e,k_1}_t\\
&+\sum_{k_1=1}^L\sum_{i=1}^{d_{k_1}}\frac{1}{B^{i,k_1}}\left(\psi^{i,k_1}_t B^{i,k_1}_t+\xi^{i,k_1}_t S^{i,k_1}_t\right)\mathcal{X}^{e,k_1}_td\left(\frac{B^{i,k_1}}{B^e}\right)_t\\
&+\sum_{k_1=1}^L\sum_{i=1}^{d_{k_1}}\frac{B^{i,k_1}_t}{B^e_t}\psi^{i,k_1}_td\mathcal{X}^{e,k_1}_t+\sum_{k_1=1}^L\psi^{k_1}_td\left(\frac{\mathcal{X}^{e,k_1}B^{k_1}}{B^e}\right)_t,
\end{aligned}
\end{align}
\begin{align}
\label{eq:CorollarySecond}
\begin{aligned}
d\tilde{V}_t(x,\varphi,A^{k_2})&=\sum_{k_1=1}^L\sum_{i=1}^{d_{k_1}}\xi^{i,k_1}_t\frac{B^{i,k_1}_t}{B^e_t}d\hat{S}^{i,cld,e,k_1}_t\\
&+\sum_{k_1=1}^L\sum_{i=1}^{d_{k_1}}\frac{1}{B^{i,k_1}}\left(\psi^{i,k_1}_t B^{i,k_1}_t+\xi^{i,k_1}_t S^{i,k_1}_t\right)\mathcal{X}^{e,k_1}_td\left(\frac{B^{i,k_1}}{B^e}\right)_t\\
&+\sum_{k_1=1}^L\sum_{i=1}^{d_{k_1}}\frac{B^{i,k_1}_t}{B^e_t}\psi^{i,k_1}_td\mathcal{X}^{e,k_1}_t+\sum_{k_1=1}^L\psi^{k_1}_td\left(\frac{\mathcal{X}^{e,k_1}B^{k_1}}{B^e}\right)_t+(B^e_t)^{-1}dA^{e,k_2}_t,
\end{aligned}
\end{align}
\begin{align}
\label{eq:CorollaryThird}
\begin{aligned}
d{V}_t(x,\varphi,A^{k_2})&=\tilde{V}_t(x,\varphi,A^{k_2})dB^e_t+\sum_{k_1=1}^L\sum_{i=1}^{d_{k_1}}\xi^{i,k_1}_tdK^{i,e,k_1}_t\\
&+\sum_{k_1=1}^L\sum_{i=1}^{d_{k_1}}\frac{B^e_t}{B^{i,k_1}}\left(\psi^{i,k_1}_t B^{i,k_1}_t+\xi^{i,k_1}_t S^{i,k_1}_t\right)\mathcal{X}^{e,k_1}_td\left(\frac{B^{i,k_1}}{B^e}\right)_t\\
&+\sum_{k_1=1}^L\sum_{i=1}^{d_{k_1}}B^{i,k_1}_t\psi^{i,k_1}_td\mathcal{X}^{e,k_1}_t+\sum_{k_1=1}^L B^e_t\psi^{k_1}_td\left(\frac{\mathcal{X}^{e,k_1}B^{k_1}}{B^e}\right)_t+dA^{e,k_2}_t.
\end{aligned}
\end{align}
\end{corollary}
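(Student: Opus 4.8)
The plan is to read off all three identities directly from Proposition~\ref{prop:PrelimRes}, rescaling formula~\eqref{eq:Kvarphi} by the strictly positive, continuous finite-variation process $B^e$. No new idea is required beyond the elementary calculus of such rescalings, so I would only spell out the bookkeeping.

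First I would pass to the differential form of~\eqref{eq:Kvarphi}. Recalling from the definition of $K^{\varphi,k_2}$ that $K^{\varphi,k_2}_t=\int_{(0,t]}B^e_u\,d\tilde{V}^{net}_u(x,\varphi,A^{k_2})$, i.e.\ $dK^{\varphi,k_2}_t=B^e_t\,d\tilde{V}^{net}_t(x,\varphi,A^{k_2})$, I would divide~\eqref{eq:Kvarphi} by $B^e_t$. Using $dK^{i,e,k_1}_t=B^{i,k_1}_t\,d\hat{S}^{i,cld,e,k_1}_t$ from~\eqref{eq:definitionKi}, the first sum becomes $\sum_{k_1=1}^L\sum_{i=1}^{d_{k_1}}\xi^{i,k_1}_t\frac{B^{i,k_1}_t}{B^e_t}\,d\hat{S}^{i,cld,e,k_1}_t$, while the remaining three sums simply lose the factor $B^e_t$; this yields~\eqref{eq:CorollaryFirst}.

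To obtain~\eqref{eq:CorollarySecond} I would use Lemma~\ref{lem:VnetV}: dividing~\eqref{eq:firstLemma} by $B^e$ gives $\tilde{V}^{net}_t(x,\varphi,A^{k_2})=\tilde{V}_t(x,\varphi,A^{k_2})-\int_{[0,t]}(B^e_u)^{-1}\,dA^{e,k_2}_u$, hence $d\tilde{V}_t(x,\varphi,A^{k_2})=d\tilde{V}^{net}_t(x,\varphi,A^{k_2})+(B^e_t)^{-1}\,dA^{e,k_2}_t$, and adding the term $(B^e_t)^{-1}\,dA^{e,k_2}_t$ to~\eqref{eq:CorollaryFirst} is exactly~\eqref{eq:CorollarySecond}. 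Finally, for~\eqref{eq:CorollaryThird} I would apply the integration-by-parts formula to $V=B^e\tilde{V}$: since $B^e=B^{0,e}$ is continuous and of finite variation by Assumption~iv), the covariation $[B^e,\tilde{V}]$ vanishes and $dV_t=\tilde{V}_t\,dB^e_t+B^e_t\,d\tilde{V}_t$; substituting~\eqref{eq:CorollarySecond}, multiplying each summand by $B^e_t$, and recognizing $B^e_t\xi^{i,k_1}_t\frac{B^{i,k_1}_t}{B^e_t}\,d\hat{S}^{i,cld,e,k_1}_t=\xi^{i,k_1}_t\,dK^{i,e,k_1}_t$ gives~\eqref{eq:CorollaryThird}. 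Since each of these three operations — division by $B^e$, addition of $(B^e)^{-1}\,dA^{e,k_2}$, and the product rule for $B^e\tilde{V}$ — is reversible, \eqref{eq:Kvarphi} can be recovered from any of the three, which establishes the asserted equivalence.

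I do not expect a serious obstacle: the corollary is essentially a restatement of Proposition~\ref{prop:PrelimRes} in terms of the wealth processes. The one place I would be careful is the time origin — $A^{e,k_2}$ carries a jump $A^{e,k_2}_0=p^e$ at $t=0$, and $V_0(x,\varphi,A^{k_2})=x+A^{e,k_2}_0$ whereas $V^{net}_0(x,\varphi,A^{k_2})=x$ — so the integration ranges $(0,t]$ versus $[0,t]$ must be kept consistent; but this discrepancy is already absorbed into Lemma~\ref{lem:VnetV}, which I would invoke as a black box.
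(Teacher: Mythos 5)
Your proposal is correct and follows essentially the same route as the paper: deduce \eqref{eq:CorollaryFirst} from $dK^{\varphi,k_2}_t=B^e_t\,d\tilde{V}^{net}_t$ together with \eqref{eq:definitionKi}, pass to \eqref{eq:CorollarySecond} via $d\tilde{V}_t=d\tilde{V}^{net}_t+(B^e_t)^{-1}dA^{e,k_2}_t$ from Lemma~\ref{lem:VnetV}, and obtain \eqref{eq:CorollaryThird} from the product rule for $V=B^e\tilde{V}$ (your explicit remark that $[B^e,\tilde V]=0$ because $B^e$ is continuous and of finite variation is the correct justification of the step the paper states tersely). The additional care about the jump of $A^{e,k_2}$ at $t=0$ and the reversibility of the three operations is fine and does not change the argument.
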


\begin{proof}
We deduce \eqref{eq:CorollaryFirst} from $dK^{\varphi,e,k_2}_t=B^e_td\tilde{V}^{net}_t(x,\varphi,A^{k_2})$ and \eqref{eq:definitionKi}. From \eqref{eq:CorollaryFirst} we deduce \eqref{eq:CorollarySecond} due to
\begin{align*}
d\tilde{V}_t(x,\varphi,A^{k_2})=d\tilde{V}^{net}_t(x,\varphi,A^{k_2})+(B^e_t)^{-1}dA^{e,k_2}_t.
\end{align*}
Finally, using
\begin{align*}
d{V}_t(x,\varphi,A^{k_2})=d\tilde{V}_t(x,\varphi,A^{k_2})dB^e_t+B^e_td\tilde{V}_t(x,\varphi,A^{k_2}),
\end{align*}
we deduce \eqref{eq:CorollaryThird}.
\end{proof}

\begin{remark}
\textcolor{black}{In the present paper we limit ourselves to consider a linear setting without bid-ask spreads in the repo and the funding accounts that corresponds to the \textit{basic model with funding costs} of \cite{BieRut15}. In this particular case the following equality holds
\begin{align*}
V^{net}(x, \varphi, A^{k_2})=V(x, \varphi, 0)
\end{align*}
which is clear by inspecting \eqref{eq:CorollaryFirst} and \eqref{eq:CorollarySecond}. This means that, in the basic setting without bid-offer spreads in the repo and funding accounts, the introduction of $V^{net}$ is despensable and one can work instead with the process $V(x, \varphi, 0)$ in place of $V^{net}(x, \varphi, A^{k_2})$. To underline this fact while keeping at the same time a clear mapping with the notation of \cite{BieRut15}, we shall use in the sequel the more compact notation $V^{net}(x, \varphi)$.}
\end{remark}

\section{Pricing and hedging in an unsecured multi-currency market}\label{sec:PricingUnsecured}
In this section, we discuss the problem of pricing and hedging in a multi-currency market with funding costs (i.e. multiple curves for different assets) but in the absence of collateralization. This provides a sound foundation for a martingale pricing approach that we extend in subsequent sections to include collateral in different currencies. As usual, our discussion is based on and generalizes the work of \cite{BieRut15} in a single currency setting. For the sake of simplicity, we work in the setting of the basic multi-currency model with funding costs, i.e. we exclude the possibility of bid-offer spreads in the rates.

\subsection{Arbitrage for hedger}
\label{arbitrage1}
Let $x$ be the initial endowment in units of the local currency $e$. We denote by $V^0(x)$ the wealth process of a self-financing strategy $(x,\varphi^0,0)$, where $\varphi^0$ is the portfolio with all components set to zero except $\psi^{0,e}=\psi^e$. The wealth process of the strategy is simply $V^0_t(x)=xB^e_t$. Given a contract $A^{k_2}$, an arbitrage opportunity is present in the market if the hedger can create a higher netted wealth (i.e. by going long and short the contract, while leaving one position unhedged) at time $T$ than the future value of the initial endowment. We restrict ourselves to admissible trading strategies as defined by the following.

\begin{definition}\label{def:Admissible}
A self-financing trading strategy $(x,\varphi,A^{k_2})$ is admissible for the hedger whenever the discounted netted wealth $V^{net}(x,\varphi)$ is bounded from below by a constant.
\end{definition}

\begin{definition}\label{def:AoA}
An admissible trading strategy $(x,\varphi,A^{k_2})$ is an arbitrage opportunity for the hedger whenever the following conditions are satisfied
\begin{enumerate}
\item[i)] $\PP\left(V^{net}_T(x,\varphi)\geq V^0_T(x)\right)=1$,
\item[ii)] $\PP\left(V^{net}_T(x,\varphi)> V^0_T(x)\right)>0$.
\end{enumerate}
\end{definition}

\begin{remark} From $V^0_t(x)=xB^e_t$ we can rewrite the conditions in Definition~\ref{def:AoA} as 
\begin{enumerate}
\item[i)] $\PP\left(V^{net}_T(x,\varphi)\geq xB^e_T\right)=1$ $\Longleftrightarrow$ $\PP\left(\tilde{V}^{net}_T(x,\varphi)\geq x\right)=1$,
\item[ii)] $\PP\left(V^{net}_T(x,\varphi)>xB^e_t\right)>0$ $\Longleftrightarrow$ $\PP\left(\tilde{V}^{net}_T(x,\varphi)>x\right)>0$.
\end{enumerate} 
Also, from \eqref{eq:CorollaryFirst} we deduce that the condition is independent of the initial endowment. Independence of the initial endowment fails as soon as we postulate different borrowing and lending rates for unsecured positions, so that one has $V^0_t(x)=x^+B^{0,e,l}_t-x^-B^{0,e,b}_t$.
\end{remark}
A classical textbook arbitrage strategy can be constructed in a market with two locally risk-free assets growing at different rates. To preclude such trivial arbitrage opportunities, the repo constraint \eqref{eq:repoConstraint} becomes crucial. The financial meaning of the repo constraint is that the holdings on every risky asset are financed by a position on an asset-specific cash account and it is not possible to create long-short positions on different cash accounts that produce riskless profits. Intuitively speaking, this means that, for every fixed currency $k_1$, the corresponding market consists of a combination of $d_{k_1}$ sub-markets, each consisting of a single risky asset with an associated financing account.

\begin{proposition}\label{prop:aoabasemkt}
Assume that all strategies available to the hedger are admissible in the sense of Definition~\ref{def:Admissible} and satisfy the repo constraint \eqref{eq:repoConstraint}. If there exists a probability measure $\QQ^e$ on $(\Omega,\mathcal{G}_T)$, such that $\QQ^e\sim\PP$ and such that the processes
\begin{align}
\label{eq:firstMartingale}
\begin{aligned}
\left(\int_{(0,t]}\left(\mathcal{X}^{e,k_1}_ud\left(\frac{S^{i,k_1}}{B^{i,k_1}}\right)_u+\frac{\mathcal{X}^{e,k_1}_u}{B^{i,k_1}_u}dD^{i,k_1}_u+d\left[\frac{S^{i,k_1}}{B^{i,k_1}},\mathcal{X}^{e,k_1}\right]_u\right)\right)_{0\leq t\leq T},
\end{aligned}
\end{align}
\begin{align}
\label{eq:secondMartingale}
\left(\frac{\mathcal{X}^{e,k_1}_tB^{k_1}_t}{B^e_t}\right)_{0\leq t\leq T}.
\end{align}
$i=1,\ldots,d_{k_1}$, $k_1=1\ldots,L$ are local martingales, then the basic multi-currency model with funding costs is arbitrage free for the hedger.
\end{proposition}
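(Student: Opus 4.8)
The plan is to show that under the stated measure $\QQ^e$, the discounted netted wealth $\tilde V^{net}(x,\varphi)$ of any admissible, repo-constrained self-financing strategy is a local martingale, hence (being bounded from below by a constant by admissibility) a supermartingale, which precludes the arbitrage conditions in Definition~\ref{def:AoA}. The starting point is Proposition~\ref{prop:PrelimRes}: under the repo constraint \eqref{eq:repoConstraint}, the process $K^{\varphi,k_2}$ reduces to the form \eqref{eq:assumeRepoConstraint}, namely a sum of stochastic integrals of the predictable integrands $\xi^{i,k_1}B^{i,k_1}$ against the processes \eqref{eq:firstMartingale}, plus stochastic integrals of $\psi^{k_1}$ against the processes $\mathcal{X}^{e,k_1}B^{k_1}/B^e$ appearing in \eqref{eq:secondMartingale}. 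Since $K^{\varphi,k_2}_t=\int_{(0,t]}B^e_u\,d\tilde V^{net}_u(x,\varphi)$ and $B^e$ is continuous, strictly positive and of finite variation, the local martingale property transfers: $\tilde V^{net}(x,\varphi)=\tilde V^{net}_0+\int (B^e_u)^{-1}\,dK^{\varphi,k_2}_u$ is a local martingale whenever $K^{\varphi,k_2}$ is, because $(B^e)^{-1}$ is locally bounded and predictable.

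\textbf{Key steps, in order.} First I would invoke \eqref{eq:assumeRepoConstraint} to write $K^{\varphi,k_2}$ explicitly as a finite sum of stochastic integrals against the two families of processes hypothesized to be $\QQ^e$-local martingales. Second, I would note that the integrands $\xi^{i,k_1}_uB^{i,k_1}_u$ are predictable (by the regularity assumptions on $\xi$ together with continuity of $B^{i,k_1}$), and likewise that $\psi^{k_1}$, though only adapted, can be taken predictable up to indistinguishability in these integrals since the integrators are continuous where it matters, or more simply that the stochastic integrals are well defined by our standing assumptions; hence each summand is a $\QQ^e$-local martingale and therefore so is $K^{\varphi,k_2}$. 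Third, I would pass from $K^{\varphi,k_2}$ to $\tilde V^{net}(x,\varphi)$ via $d\tilde V^{net}_t = (B^e_t)^{-1}\,dK^{\varphi,k_2}_t$, using that $(B^e)^{-1}$ is continuous, finite-variation and locally bounded, so the class of local martingales is preserved. Fourth, by Definition~\ref{def:Admissible} the process $\tilde V^{net}(x,\varphi)$ is bounded below by a constant, so a nonnegative-shifted version is a nonnegative local martingale, hence a supermartingale; thus $\EE_{\QQ^e}[\tilde V^{net}_T(x,\varphi)]\le \tilde V^{net}_0(x,\varphi)=x$. Finally, condition (i) of Definition~\ref{def:AoA} (in the equivalent form $\tilde V^{net}_T\ge x$ $\PP$-a.s., hence $\QQ^e$-a.s.\ since $\QQ^e\sim\PP$) combined with this expectation inequality forces $\tilde V^{net}_T(x,\varphi)=x$ $\QQ^e$-a.s., contradicting condition (ii); therefore no arbitrage opportunity for the hedger exists.

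\textbf{Main obstacle.} The computational heart is already dispatched by Proposition~\ref{prop:PrelimRes}, so the delicate point is purely the measure-theoretic bookkeeping: ensuring that the stochastic integrals defining $K^{\varphi,k_2}$ are genuine (local) martingale transforms and not merely local-martingale-like up to integrability, i.e.\ that admissibility (boundedness below of $\tilde V^{net}$) is exactly the condition needed to upgrade ``local martingale'' to ``supermartingale'' and extract the expectation inequality. One must be a little careful that the integrand $\psi^{k_1}$ is only adapted rather than predictable; but since it multiplies $d(\mathcal{X}^{e,k_1}B^{k_1}/B^e)$ and enters through a well-defined stochastic integral by Assumption on strategies, and since a stochastic integral of a locally bounded adapted integrand against a local martingale (interpreted via its predictable projection) is again a local martingale, this causes no real difficulty. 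The rest is the standard ``(super)martingale implies no arbitrage'' argument.
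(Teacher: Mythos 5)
Your proposal is correct and follows essentially the same route as the paper: under the repo constraint, Proposition~\ref{prop:PrelimRes} (equivalently \eqref{eq:CorollaryFirst}, which the paper rewrites as \eqref{eq:vNetLocalMartingale}) expresses $\tilde V^{net}(x,\varphi)$ as $x$ plus stochastic integrals against the processes \eqref{eq:firstMartingale} and \eqref{eq:secondMartingale}, so it is a $\QQ^e$-local martingale bounded below, hence by Fatou a supermartingale, which rules out the arbitrage conditions of Definition~\ref{def:AoA}. Your detour through $K^{\varphi,k_2}$ and the factor $(B^e)^{-1}$, and your explicit final expectation/equivalence-of-measures step, are just slightly more spelled-out versions of what the paper does implicitly.
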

\begin{proof}
% equivalent to the following expressions.
Assume that the repo constraint \eqref{eq:repoConstraint} holds. This implies that we can write $\psi^{i,k_1}_t=-\frac{\xi^{i,k_1}_tS^{i,k_1}_t}{B^{i,k_1}_t}$. Then looking at %\eqref{eq:firstLemma} and 
\eqref{eq:CorollaryFirst} we have
\begin{align*}
\sum_{k_1=1}^L\sum_{i=1}^{d_{k_1}}\int_{(0,t]}\frac{B^{i,k_1}_u}{B^e_u}\psi^{i,k_1}_ud\mathcal{X}^{e,k_1}_u=-\sum_{k_1=1}^L\sum_{i=1}^{d_{k_1}}\int_{(0,t]}\frac{B^{i,k_1}_u}{B^e_u}\frac{\xi^{i,k_1}_uS^{i,k_1}_u}{B^{i,k_1}_u}d\mathcal{X}^{e,k_1}_u,
\end{align*}
and also
\begin{align*}
&\sum_{k_1=1}^L\sum_{i=1}^{d_{k_1}}\int_{(0,t]}\xi^{i,k_1}_u\frac{B^{i,k_1}_u}{B^e_u}d\hat{S}^{i,cld,e,k_1}_u\\
&=
\sum_{k_1=1}^L\sum_{i=1}^{d_{k_1}}\int_{(0,t]}\xi^{i,k_1}_u\frac{B^{i,k_1}_u}{B^e_u}\left(d\left(\frac{S^{i,k_1}\mathcal{X}^{e,k_1}}{B^{i,k_1}}\right)_u+\frac{\mathcal{X}^{e,k_1}_u}{B^{i,k_1}_u}dD^{i,k_1}_u\right).
\end{align*}
Using the above expressions we can write
\begin{align}
\begin{aligned}
\label{eq:vNetLocalMartingale}
\tilde{V}^{net}_t(x,\varphi)&=x+\sum_{k_1=1}^L\sum_{i=1}^{d_{k_1}}\int_{(0,t]}\xi^{i,k_1}_u\frac{B^{i,k_1}_u}{B^e_u}\left(\mathcal{X}^{e,k_1}_ud\left(\frac{S^{i,k_1}}{B^{i,k_1}}\right)_u\right..\\
&\quad\quad\quad\quad\quad\quad\quad\quad\quad\quad\left.+\frac{\mathcal{X}^{e,k_1}_u}{B^{i,k_1}_u}dD^{i,k_1}_u+d\left[\frac{S^{i,k_1}}{B^{i,k_1}},\mathcal{X}^{e,k_1}\right]_u\right)\\
&\quad\quad+\sum_{k_1=1}^L\int_0^t\psi^{k_1}_ud\left(\frac{\mathcal{X}^{e,k_1}B^{k_1}}{B^e}\right)_u.
\end{aligned}
\end{align}
Using assumptions \eqref{eq:firstMartingale} and \eqref{eq:secondMartingale}, we observe that \eqref{eq:vNetLocalMartingale} is a local martingale bounded from below by a constant, hence by Fatou Lemma it is a supermartingale.
\end{proof}

\subsection{Fair valuation in the presence of funding costs}\label{sec:pricingNoColl}
\label{Fair valuation}
We work under the assumption that the model is arbitrage free for the hedger for any contract. We wish to describe the fair price of a contract at time zero from the perspective of the hedger, (i.e. from the perspective of the seller of the contract). Recall the notation $p^e\in\RR$ for the price of the claim. Recall also that $p^e=A^{e,k_2}_0$. We use the following standard convention: if $p^e>0$ it means that the hedger receives the amount $p^e$ from the counterparty, whereas $p^e<0$ means that the hedger is paying $p^e$ to the counterparty. The following is along the lines of \cite{BieRut15}, Definition 3.5.

\begin{definition}
We say that $\bar{p}^e=A^{e,k_2}_0\in\RR$ is a hedger's fair price if, for any self-financing trading strategy $\left(x,\varphi,A^{k_2}\right)$ such that the discounted wealth $\tilde{V}\left(x,\varphi,A^{k_2}\right)$ is bounded from below by a constant we have either
\begin{align*}
\PP\left(V_T\left(x,\varphi,A^{k_2}\right)=V^0_T(x)\right)=1,
\end{align*}
or
\begin{align*}
\PP\left(V_T\left(x,\varphi,A^{k_2}\right)<V^0_T(x)\right)>0.
\end{align*}
\end{definition}
If the price $\bar{p}^e$ is too high, then we have an arbitrage as defined in the following, which is the analogue of \cite{BieRut15} Definition 3.6.

\begin{definition}
We say that a quadruplet $\left(p^e,x,\varphi,A^{k_2}\right)$, where $p^e\in\RR$ and $\left(x,\varphi,A^{k_2}\right)$ is an admissible trading strategy such that the discounted wealth process $\tilde{V}\left(x,\varphi,A^{k_2}\right)$ is bounded from below by a constant, is a hedger's arbitrage opportunity for $A^{k_2}$ at price $p^e$, if
\begin{align*}
\PP\left(V_T\left(x,\varphi,A^{k_2}\right)\geq V^0_T(x)\right)=1,
\end{align*}
or
\begin{align*}
\PP\left(V_T\left(x,\varphi,A^{k_2}\right)>V^0_T(x)\right)>0.
\end{align*}
\end{definition}
The following result characterizes the hedger's fair price and generalizes Proposition 3.2 in \cite{BieRut15}.

\begin{proposition}
Under the assumptions of Proposition~\ref{prop:aoabasemkt}, $\bar{p}^e\in\RR$ is a hedger's fair price, whenever, for any admissible trading strategy $\left(x,\varphi,A^{k_2}\right)$ satisfying the repo constraint \eqref{eq:repoConstraint}, we have either
\begin{align*}
&\PP\left(\bar{p}^e+\sum_{k_1=1}^L\sum_{i=1}^{d_{k_1}}\int_{(0,T]}\frac{B^{i,k_1}_u}{B^e_u}\left(\xi^{i,k_1}_ud\hat{S}^{i,cld,e,k_1}_u+\psi^{i,k_1}_ud\mathcal{X}^{e,k_1}_u\right)\right.\\
&\quad\quad\quad\quad\left.+\sum_{k_1=1}^L\int_{(0,T]}\psi^{k_1}_ud\left(\frac{\mathcal{X}^{e,k_1}B^{k_1}}{B^e}\right)_u+\int_{(0.T]}(B^e_u)^{-1}dA^{e,k_2}_u=0\right)=1,
\end{align*}
or
\begin{align*}
&\PP\left(\bar{p}^e+\sum_{k_1=1}^L\sum_{i=1}^{d_{k_1}}\int_{(0,T]}\frac{B^{i,k_1}_u}{B^e_u}\left(\xi^{i,k_1}_ud\hat{S}^{i,cld,e,k_1}_u+\psi^{i,k_1}_ud\mathcal{X}^{e,k_1}_u\right)\right.\\
&\quad\quad\quad\quad\left.+\sum_{k_1=1}^L\int_{(0,T]}\psi^{k_1}_ud\left(\frac{\mathcal{X}^{e,k_1}B^{k_1}}{B^e}\right)_u+\int_{(0.T]}(B^e_u)^{-1}dA^{e,k_2}_u<0\right)>0.
\end{align*}
\end{proposition}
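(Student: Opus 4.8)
The plan is to recognize that, under the standing repo constraint \eqref{eq:repoConstraint}, the two events appearing in the statement are nothing but the explicit form of the two alternatives in the definition of a hedger's fair price. Thus the proof reduces to inserting the wealth dynamics of \eqref{eq:CorollarySecond} into that definition and performing the resulting identification; no inequality manipulation beyond discounting is required.

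First I would start from \eqref{eq:CorollarySecond}. Because every admissible strategy in the setting of Proposition~\ref{prop:aoabasemkt} satisfies the repo constraint \eqref{eq:repoConstraint}, the factor $\zeta^{i,k_1}_t = \psi^{i,k_1}_t B^{i,k_1}_t + \xi^{i,k_1}_t S^{i,k_1}_t$ multiplying $d(B^{i,k_1}/B^e)$ vanishes identically, so the second sum in \eqref{eq:CorollarySecond} drops out. Integrating the remaining terms over $(0,T]$ and using $\tilde{V}_0(x,\varphi,A^{k_2}) = x + A^{e,k_2}_0 = x + \bar{p}^e$ (recall $B^e_0 = 1$ and that the price of the claim is $\bar{p}^e$), I obtain
\begin{align*}
\tilde{V}_T(x,\varphi,A^{k_2}) - x &= \bar{p}^e + \sum_{k_1=1}^L\sum_{i=1}^{d_{k_1}}\int_{(0,T]}\frac{B^{i,k_1}_u}{B^e_u}\left(\xi^{i,k_1}_u d\hat{S}^{i,cld,e,k_1}_u + \psi^{i,k_1}_u d\mathcal{X}^{e,k_1}_u\right)\\
&\quad + \sum_{k_1=1}^L\int_{(0,T]}\psi^{k_1}_u d\left(\frac{\mathcal{X}^{e,k_1}B^{k_1}}{B^e}\right)_u + \int_{(0,T]}(B^e_u)^{-1}dA^{e,k_2}_u.
\end{align*}
The right-hand side is exactly the random variable inside the two probabilities in the statement. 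Here it is essential that the time-$0$ flow $\bar{p}^e$ enters only once, through $\tilde{V}_0$, while the integral $\int_{(0,T]}(B^e_u)^{-1}dA^{e,k_2}_u$ runs over $(0,T]$ and hence excludes it.

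Next I would translate the events back to the undiscounted wealth. Since $V^0_T(x) = xB^e_T$ and $\tilde{V}_T = (B^e_T)^{-1}V_T$ with $B^e_T > 0$, one has $\{\tilde{V}_T = x\} = \{V_T = V^0_T\}$ and $\{\tilde{V}_T < x\} = \{V_T < V^0_T\}$. Consequently the first displayed condition in the statement coincides with $\PP(V_T(x,\varphi,A^{k_2}) = V^0_T(x)) = 1$ and the second with $\PP(V_T(x,\varphi,A^{k_2}) < V^0_T(x)) > 0$. The hypothesis --- that for every admissible strategy satisfying \eqref{eq:repoConstraint} at least one of the two displayed conditions holds --- is therefore word-for-word the defining requirement for $\bar{p}^e$ to be a hedger's fair price, and the conclusion follows.

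The only delicate point is bookkeeping rather than analysis: one must ensure that the price $\bar{p}^e$ is accounted for exactly once (it is the initial value of the discounted wealth, not part of the $(0,T]$ integral of $dA^{e,k_2}$) and that the repo term genuinely disappears under \eqref{eq:repoConstraint}. Once these are checked, the identification is exact and the statement is essentially a restatement of the fair-price definition in terms of the traded quantities.
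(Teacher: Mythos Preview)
Your proof is correct and follows essentially the same route as the paper: both arguments discount the fair-price conditions to compare $\tilde V_T(x,\varphi,A^{k_2})$ with $x$, and then substitute the explicit dynamics of $\tilde V$ under the repo constraint to obtain the displayed expressions. The only cosmetic difference is that the paper cites \eqref{eq:CorollaryFirst} together with Lemma~\ref{lem:VnetV}, whereas you invoke \eqref{eq:CorollarySecond} directly; since \eqref{eq:CorollarySecond} is the immediate consequence of those two, the arguments coincide.
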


\begin{proof} We recall Lemma~\ref{lem:VnetV} and make use of \eqref{eq:CorollaryFirst}. We have
\begin{align*}
1&=\PP\left(V_T\left(x,\varphi,A^{k_2}\right)=V^0_T(x)\right)\\
&=\PP\left(\tilde{V}_T\left(x,\varphi,A^{k_2}\right)=\frac{V^0_T(x)}{B^e_T}\right)\\
&=\PP\left(p^e+x+\sum_{k_1=1}^L\sum_{i=1}^{d_{k_1}}\int_{(0,T]}\frac{B^{i,k_1}_u}{B^e_u}\left(\xi^{i,k_1}_ud\hat{S}^{i,cld,e,k_1}_u+\psi^{i,k_1}_ud\mathcal{X}^{e,k_1}_u\right)\right.\\
&\quad\quad\quad\quad\left.+\sum_{k_1=1}^L\int_{(0,T]}\psi^{k_1}_ud\left(\frac{\mathcal{X}^{e,k_1}B^{k_1}}{B^e}\right)_u+\int_{(0.T]}(B^e_u)^{-1}dA^{e,k_2}_u=x\right),
\end{align*}
from which we obtain the first relation. The second is proven analogously.
\end{proof}

\section{Multi Currency trading under funding costs and collateralization}\label{sec:CollateralizedTrading}

We consider the situation where the hedger posts or receives collateral in a currency $k_3\in\{1,\ldots,L\}$, represented by a process $C^{k_3}$, which is right continuous and $\GG$-adapted. In the literature on counterparty credit risk the symbol $C$ is often used to denote the so-called \textit{variation margin}. Nowadays financial institutions also exchange another collateral called \textit{initial margin}, meant to provide a form of over-collateralization. Our discussion in the present section aims at covering most collateral conventions, so that the formulas we derive can be suitably combined in order to describe either variation margin or initial margin or even a situation where multiple types of collateral are present. As any random variable, $C_t$ can be split in its positive and negative part. In particular, we adopt the following convention:
\begin{itemize}
\item $C^{k_3,+}_t$ is the value of collateral received by the hedger to the counterparty in currency $k_3$.
\item $C^{k_3,-}_t$ is the value of collateral posted by the hedger from the counterparty in currency $k_3$.
\end{itemize}
We will allow for collateral to be posted in any currency and either in units of cash (which constitutes the most common form) or risky assets. For this, following \cite{BieRut15} we introduce two dedicated assets, denominated in the currency $k_3$. This means that for one of the currencies in the set $\{1,\ldots,L\}$ we will have $d_{k_3}+2$ assets. For simplicity we assume that, when collateral is posted in terms of a risky asset, the currency of the posted and received collateral are the same. The situation where the two currencies differ is a rather uncommon situation which can however be accommodated in principle in our set-up. We make the following assumptions:
\begin{itemize}
\item The risky asset $S^{d_{k_3}+1}$ is delivered by the hedger as collateral.
\item The risky asset $S^{d_{k_3}+2}$ is received by the hedger as collateral.
\item We also assume that the collateral account satisfies $C^{k_3}_T=0$, meaning that, when trading is over, the collateral is returned to its legal owner.
\item Depending on the underlying collateral convention, the hedger receives or pays interest contingent on being the poster or receiver of collateral: the hedger receives interest payments based on $B^{c,k_3,l}$ or pays interests based on $B^{c,k_3,b}$. More precisely, the amount of interest is determined by 
\begin{align}
\label{borrowlendingcond}
\eta^{k_3,l}:=(B^{c,k_3,l})^{-1} (C^{k_3})^{-} \text{ and respectively }\eta^{k_3,b}:=-(B^{c,k_3,b})^{-1} (C^{k_3})^{+}.
\end{align}

\end{itemize}

\subsection{Collateral conventions} To be self-contained, let us recall the standard conventions for collateral.
\begin{itemize}
\item \textbf{Segregation} Under segregation, the collateral amount must be kept in a separate account and is not available as a source of funding for the trading activity. The hedger, when he/she receives collateral, can not use it for trading: he/she is only allowed to receive possibly zero interest based on $B^{d_{k_3}+2,k_3}$. \color{black}The dynamics of the wealth of the hedger do not depend on the amount of cash or shares of the asset $S^{d_{k_3}+2,k_3}$ he/she receives. On the contrary the amount of cash or shares of the asset $S^{d_{k_3}+1}$ he/she posts to the counterparty has an effect on the dynamics of the portfolio. Segregation is the standard for the exchange of initial margin.
\item \textbf{Rehypothecation} Under rehypothecation, the hedger is allowed to use the cash or the shares of securities he/she receives to fund his/her trading activity. Rehypothecation constitutes the most adopted convention for variation margin.
\end{itemize}\ \\
The dynamics of the hedger's wealth differ under segregation or rehypothecation. There is also an impact of whether collateral is posted in form of cash or risky assets We also need to address the action that is undertaken by the hedger when he/she receives collateral: we need to discuss how such amount of wealth is reinvested. The reinvestment activity will be modelled by means of an additional cash account, that we wish to make specific on the fact that we have rehypothecation or segregation: to this aim we introduce the cash accounts $B^{d_{k_3}+2,k_3,s}$ in case of segregation and $B^{d_{k_3}+2,k_3,h}$ in case of rehypothecation. The following notation helps to distinguish between those cases. In a single currency case this reflects Definition 4.2 and Definition 4.3 of \cite{BieRut15}.

\begin{definition}[Cash collateral]\label{def:cashCollateral} Cash collateral is specified as follows:
\begin{itemize}
\item[(i)] If the hedger receives cash collateral denoted by $(C^{k_3})^+_t$, he/she pays interest determined by the borrowing account $B^{c,k_3,b}_t$ and $(C^{k_3})^+_t$. In case of segregation, he/she receives interest based on the amount $(C^{k_3})^+_t$ and the cash account $B^{d_{k_3}+2,k_3,s}_t$. Under rehypothecation, the hedger may use the amount $(C^{k_3})^+_t$ to fund the trading activity before maturity: in particular he/she uses units of $B^{d_{k_3}+2,k_3,h}_t$ for his/her own trading purposes. Recall that $\{B^{d_{k_3}+2,k_3,s}, B^{d_{k_3}+2,k_3,h}\}$ is a specification of $B^{d_{k_3}+2,k_3}$, monitoring the underlying collateral convention, as well for the corresponding strategy \{$\eta^{d_{k_3}+2,k_3,s},\eta^{d_{k_3}+2,k_3,h}\}$.
\item[(ii)] If the hedger posts cash collateral, he/she delivers the amount $(C^{k_3})^-_t$, borrowed from the cash account $B^{d_{k_3}+3,k_3}_t$ and receives interest determined by $B^{c,k_3,l}_t$ in return. As collateral is posted in form of cash, the following equalities hold for any $t \in [0,T]$:
\begin{align}
\xi_t^{d_{k_3}+1,k_3} =\psi_t^{d_{k_3}+1,k_3}=0, &\quad& \eta_t^{d_{k_3}+3,k_3}B_t^{d_{k_3}+3,k_3} = -(C_t^{k_3})^-.
\end{align}
\end{itemize}
\end{definition}

We assume the hedger receives or delivers shares of the risky asset $S^{d_{k_3}+1,k_3}$, which are supposed to have low credit risk and should be uncorrelated with the underlying trading portfolio. We stress the following fact: due to the assumption that the asset is uncorrelated with the underlying portfolio, in the case where the collateral is received there is no reason to include the holdings of such asset in the portfolio. Instead, if the asset is posted, the hedger needs to fund and create a position in such asset in order to fulfil the margin call.

\begin{definition}[Risky asset collateral]\label{def:riskyCollateral} Risky collateral is specified as follows:
\begin{itemize}
\item[(i)] If the hedger receives $\xi^{d_{k_3}+2,k_3}_t$ shares of the risky asset $S^{d_{k_3}+2,k_3}_t$, used as collateral, he/she is committed to pay interest to the counterparty determined by $B^{c,k_3,b}_t$ and $(C^{k_3})^+_t=\xi^{d_{k_3}+2,k_3}_tS^{d_{k_3}+2,k_3}_t$. In case of segregation where reinvesting collateral is not allowed, the hedger receives interest on a basic bank deposit determined by $B^{d_{k_3}+2,k_3,s}_t$ similar to the cash collateral case. 
\item[(ii)] If the hedger delivers a number of shares $\xi^{d_{k_3}+1,k_3}_t$ of the risky asset $S^{d_{k_3}+1,k_3}_t$ to the counterparty, funded by the account $B^{d_{k_3}+1,k_3}_t$, he/she receives interest determined by the collateral account $B^{c,k_3,l}_t$ in return. Hence, the following setting can be defined for $t \in [0,T]$:
\begin{align}
\begin{aligned}
(C_t^{k_3})^- &= \xi_t^{d_{k_3}+1,k_3}S_t^{d_{k_3}+1,k_3}, \quad \eta_t^{d_{k_3}+3,k_3}=0,\\
 & \xi_t^{d_{k_3}+1,k_3}S_t^{d_{k_3}+1,k_3} + \psi_t^{d_{k_3}+1,k_3}B_t^{d_{k_3}+1,k_3} = 0,
\end{aligned}
\end{align}
which implies $\psi_t^{d_{k_3}+1,k_3}B_t^{d_{k_3}+1,k_3} = -(C_t^{k_3})^-$ for all $t \in [0,T]$.
\end{itemize}
\end{definition}

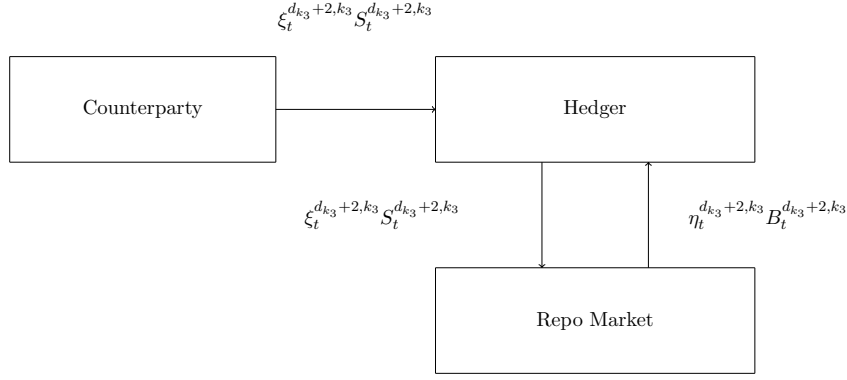
\begin{figure}
\centering
\scalebox{0.7}{
\begin{tikzpicture}
	\begin{pgfonlayer}{nodelayer}
		\node [style=none] (0) at (-2, 1) {};
		\node [style=none] (1) at (4, 1) {};
		\node [style=none] (2) at (-2, -1) {};
		\node [style=none] (3) at (4, -1) {};
		\node [style=none] (4) at (-2, -3) {};
		\node [style=none] (5) at (4, -3) {};
		\node [style=none] (6) at (-2, -5) {};
		\node [style=none] (7) at (4, -5) {};
		\node [style=none] (8) at (-10, 1) {};
		\node [style=none] (9) at (-5, 1) {};
		\node [style=none] (10) at (-10, -1) {};
		\node [style=none] (11) at (-5, -1) {};
		\node [style=none] (12) at (-5, 0) {};
		\node [style=none] (13) at (-2, 0) {};
		\node [style=none] (14) at (0, -1) {};
		\node [style=none] (15) at (0, -3) {};
		\node [style=none] (16) at (2, -3) {};
		\node [style=none] (17) at (2, -1) {};
		\node [style=none] (21) at (1, -4) {Repo Market};
		\node [style=none] (22) at (-7.5, 0) {Counterparty};
		\node [style=none] (23) at (1, 0) {Hedger};
		\node [style=none] (24) at (-3.5, 0.5) {};
		\node [style=none] (26) at (4.25, -2) {$\eta^{d_{k_3}+2,k_3}_tB^{d_{k_3}+2,k_3}_t$};
		\node [style=none] (27) at (-3.5, 1.75) {$\xi^{d_{k_3}+2,k_3}_tS^{d_{k_3}+2,k_3}_t$};
		\node [style=none] (29) at (-3, -2) {$\xi^{d_{k_3}+2,k_3}_tS^{d_{k_3}+2,k_3}_t$};
	\end{pgfonlayer}
	\begin{pgfonlayer}{edgelayer}
		\draw (0.center) to (2.center);
		\draw (2.center) to (3.center);
		\draw (3.center) to (1.center);
		\draw (1.center) to (0.center);
		\draw (4.center) to (6.center);
		\draw (6.center) to (7.center);
		\draw (7.center) to (5.center);
		\draw (5.center) to (4.center);
		\draw (8.center) to (10.center);
		\draw (10.center) to (11.center);
		\draw (11.center) to (9.center);
		\draw (9.center) to (8.center);
		\draw [style=new edge style 0] (12.center) to (13.center);
		\draw [style=new edge style 0] (14.center) to (15.center);
		\draw [style=new edge style 0] (16.center) to (17.center);
	\end{pgfonlayer}
\end{tikzpicture}
}
\caption{Collateral reinvestment process in case a risky asset is used.}
\end{figure}

At this point, it is important to stress an important aspect. During the trading activity the hedger will in general simultaneously be managing assets/amounts of cash he/she legally owns, together with assets/amounts of cash that belong to the counterparty, hence it is convenient to distinguish between the following
\begin{itemize}
\item $V_t(x,\varphi,A^{k_2},C^{k_3})$: this is the wealth of the hedger, representing the value of the portfolio of assets that belong to the hedger.
\item $V_t^p(x,\varphi,A^{k_2},C^{k_3})$: this is the value of the (full) portfolio of the hedger, including the assets/amounts of cash that belong to the counterparty.
\item $V_t^c(x,\varphi,A^{k_2},C^{k_3}):=V_t(x,\varphi,A^{k_2},C^{k_3})-V_t^p(x,\varphi,A^{k_2},C^{k_3})$ i.e. the difference between the legal wealth of the hedger and his/her portfolio, is  called adjustment process and represents the impact of collateralization.
\end{itemize}
Let us recall that the wealth processes are expressed in units of the local currency $e$. We now proceed to formally define the processes introduced above. It is rather clear that, in the absence of collateralization, we recover our previous formulation for the dynamics of the wealth process.

\begin{definition}
For any $t \in [0,T]$ and $\{k_2,k_3\} \in \{1,\ldots,L\}$, we call the process
\begin{align}
\begin{aligned}
\label{eq:vcollwealth}
V_t(x,\varphi,A^{k_2},C^{k_3}) :&= 
\sum_{k_1=1}^L \mathcal{X}_t^{e,k_1} \left(\sum_{i=1}^{d_{k_1}}\xi_t^{i,k_1}S_t^{i,k_1} + \sum_{j=0}^{d_{k_1}} \psi_t^{j,k_1}B_t^{j,k_1}\right)\\
&+\mathcal{X}_t^{e,k_3}\left(\xi_t^{d_{k_3}+1,k_3}S_t^{d_{k_3}+1,k_3} + \psi_t^{d_{k_3}+1,k_3}B_t^{d_{k_3}+1,k_3}\right.\\
&\left.+ \eta_t^{k_3,b}B_t^{c,k_3,b} + \eta_t^{k_3,l}B_t^{c,k_3,l} + \eta_t^{d_{k_3}+2,k_3}B_t^{d_{k_3}+2,k_3}+\eta_t^{d_{k_3}+3,k_3}B_t^{d_{k_3}+3,k_3}\right),
\end{aligned}
\end{align}
the extended wealth process under funding costs and collateralization, where $(x, \varphi,A^{k_2},C^{k_3})$ denotes the hedger's collateralized trading strategy for the portfolio
\begin{align}
\varphi = (\xi, \psi, \phi(k_3)) =\left(\xi^{1,1}, \ldots, \xi^{d_1,1,},\xi^{1,2},\ldots ,\xi^{d_L,L}, \psi^{0,1}, \ldots, \psi^{d_1,1},\psi^{0,2},\ldots, \psi^{d_L,L}, \phi(k_3)\right),
\end{align}
with
\begin{align*}
\phi(k_3):= (\xi^{d_{k_3}+1,k_3},\psi^{d_{k_3}+1,k_3}, \eta^{k_3,b}, \eta^{k_3,l}, \eta^{d_{k_3}+2,k_3},\eta^{d_{k_3}+3,k_3}).
\end{align*}
\end{definition}
We also introduce the following.

%other possibility
%\sum_{k_1=1, k_1 \neq k_3}^L \mathcal{X}_t^{e,k_1} \left(\sum_{i=1}^{d_{k_1}}\xi_t^{i,k_1}S_t^{i,k_1} + \sum_{j=0}^{d_{k_1}} \psi_t^{j,k_1}B_t^{j,k_1}\right) \\
%&+ \mathcal{X}_t^{e,k_3}\left( \sum_{i=1}^{d_{k_3}+1} \xi_t^{i,k_3} S_t^{i,k_3} + \sum_{j=0}^{d_{k_3}+1} \psi_t^{j,k_3} B_t^{j,k_3} + \eta_t^{k_3,b}B_t^{c,k_3,b} + \eta_t^{k_3,l}B_t^{c,k_3,l} + \eta_t^{d_{k_3}+2,k_3}B_t^{d_{k_3}+2,k_3} \right)\\

%For simplicity, denote $V(x,\varphi,A^{k_2},C^{k_3}) :=V(\varphi,k_2,k_3)$

\begin{definition}
Let $(x,\varphi,A^{k_2}, C^{k_3})$ be a collateralized trading strategy of the hedger and $t \in [0,T]$.
\begin{itemize}
\item[(i)] The value of the hedger's portfolio $V^p(x,\varphi,A^{k_2},C^{k_3})$ at time $t$ is defined by 
\begin{align}
\begin{aligned}
V_t^p(x,\varphi, A^{k_2},C^{k_3}) :&= \sum_{k_1=1}^L \mathcal{X}_t^{e,k_1} \left(\sum_{i=1}^{d_{k_1}}\xi_t^{i,k_1}S_t^{i,k_1} + \sum_{j=0}^{d_{k_1}} \psi_t^{j,k_1}B_t^{j,k_1}\right)\\
&+ \mathcal{X}_t^{e,k_3}\left(\xi_t^{d_{k_3}+1,k_3}S_t^{d_{k_3}+1,k_3} + \psi_t^{d_{k_3}+1,k_3}B_t^{d_{k_3}+1,k_3}\right.\\
&\left.+\eta_t^{d_{k_3}+3,k_3}B_t^{d_{k_3}+3,k_3}\right).
\end{aligned}
\end{align}
\item[(ii)] In addition, denote $V^c(x,\varphi,A^{k_2},C^{k_3})$ given by
\begin{align}
\begin{aligned}
V_t^c(x,\varphi,A^{k_2},C^{k_3}):&= V_t(x,\varphi,A^{k_2},C^{k_3}) - V_t^p(x,\varphi,A^{k_2},C^{k_3})\\
&=\mathcal{X}_t^{e,k_3}\left(\eta_t^{k_3,b}B_t^{c,k_3,b} + \eta_t^{k_3,l}B_t^{c,k_3,l} + \eta_t^{d_{k_3}+2,k_3}B_t^{d_{k_3}+2,k_3}\right),
\end{aligned}
\end{align}
as adjustment process of the hedger's wealth.
\end{itemize}
\end{definition}

The adjustment process reflects the presence of a collateralization agreement between the hedger and the counterparty. Let us recall that $ \eta_t^{d_{k_3}+2,k_3}$ might be either $ \eta_t^{d_{k_3}+2,k_3,s}$ or $ \eta_t^{d_{k_3}+2,k_3,h}$, depending on the particular convention agreed by the hedger and the counterparty.

\begin{remark}
By using assumption \eqref{borrowlendingcond}, we receive for the adjustment process
\begin{align}
V_t^c(x,\varphi,A^{k_2},C^{k_3})= \mathcal{X}_t^{e,k_3} \left(-C_t^{k_3} + \eta_t^{d_{k_3}+2,k_3}B_t^{d_{k_3}+2,k_3}\right).
\end{align}
for any $t \in [0,T]$.
\end{remark}

We introduce the following useful notation.

\begin{definition}
In the collateralized multi-currency model, the process of all interest generated by the collateral account, denoted by $F^{h}$ under rehypothecation and $F^{s}$ under segregation, is given by
\begin{align}
\label{def:Foverallinterest}
F_t^h := F_t^c + \int_0^t \mathcal{X}_u^{e,k_3}(B_u^{d_{k_3}+2,k_3,h})^{-1}(C_u^{k_3})^+ dB_u^{d_{k_3}+2,k_3,h},
\end{align}
where $F^c$ is the cumulative interest of the margin account defined by
\begin{align}
\label{def:Fcumulativeinterest}
F_t^c := \int_0^t \mathcal{X}_u^{e,k_3} (B_u^{c,k_3,l})^{-1}(C_u^{k_3})^- dB_u^{c,k_3,l} - \int_0^t \mathcal{X}_u^{e,k_3} (B_u^{c,k_3,b})^{-1}(C_u^{k_3})^+ dB_u^{c,k_3,b}.
\end{align}
\end{definition}

A standard assumption consists in assuming that all cash accounts are absolutely continuous, so that all cash accounts can be written as $dB_t^j=r^j_tB^j_tdt$ for some $\GG$-adapted processes $r^j$ and any arbitrary index $j$ we consider in the present setting. When this is the case, one can simplify \eqref{def:Fcumulativeinterest} as
\begin{align*}
F_t^c := \int_0^t \mathcal{X}_u^{e,k_3}(C_u^{k_3})^- r_u^{c,k_3,l}du - \int_0^t \mathcal{X}_u^{e,k_3} (C_u^{k_3})^+ r_u^{c,k_3,b}du.
\end{align*}
The above formulation explicitly features the borrowing and lending collateral rates: the interest received on the posted collateral has a positive impact net of the interest paid on the received collateral.

We need the following generalization of the definition of a self-financing trading strategy.
\begin{definition}
\label{def:secondSelfFinancing}
Let $(x,\varphi,A^{k_2}, C^{k_3})$ be a collateralized trading strategy, where $k_2$ and $k_3$ fulfil the usual conditions. The strategy is called self financing, if the hedger's portfolio value $V^p(x,\varphi,A^{k_2},C^{k_3})$ fulfils
\begin{align}
\begin{aligned}
V_t^p(x,\varphi,A^{k_2},C^{k_3}):&=x+\sum_{k_1=1}^L\left\{\sum_{i=1}^{d_{k_1}}\left[\int_{(0, t]} \mathcal{X}_{u}^{e, k_{1}} \xi_{u}^{i, k_{1}}\left(d S_{u}^{i, k_{1}}+d D_{u}^{i, k_{1}}\right)\right.\right.\\
&\left.+\int_{(0, t]} \xi_{u}^{i, k_{1}} S_{u}^{i, k_{1}} d \mathcal{X}_{u}^{e, k_{1}}+\int_{(0, t]} \xi_{u}^{i, k_{1}} d\left[S^{i, k_{1}}, \mathcal{X}^{e, k_{1}}\right]_u\right]\\
&\left.+\sum_{j=0}^{d_{k_1}}\left[\int_{0}^{t} \mathcal{X}_{u}^{e, k_{1}} \psi_{u}^{j, k_{1}} d B_{u}^{j, k_{1}}+\textcolor{black}{\int_{(0,t]}} \psi_{u}^{j, k_{1}} B_{u}^{j, k_{1}} d \mathcal{X}_{u}^{e, k_{1}}\right]\right\}\\ 
&+ \int_{(0,t]}\mathcal{X}_u^{e,k_3}\xi_u^{d_{k_3}+1,k_3}\left(dS_u^{d_{k_3}+1,k_3} + dD_u^{d_{k_3}+1,k_3} \right) \\
&+\int_{(0,t]} \xi_u^{d_{k_3}+1,k_3}S_u^{d_{k_3}+1,k_3}d\mathcal{X}_u^{e,k_3}+  \int_{(0,t]} \xi_u^{d_{k_3}+1,k_3}d\left[S^{d_{k_3}+1,k_3},\mathcal{X}^{e,k_3} \right]_u \\
&+ \int_{(0,t]} \mathcal{X}_u^{e,k_3}\psi_u^{d_{k_3}+1,k_3} dB_u^{d_{k_3}+1,k_3} +\int_{(0,t]} \psi_u^{d_{k_3}+1,k_3}B_u^{d_{k_3}+1,k_3} d\mathcal{X}_u^{e,k_3}\\
&+\int_0^t \mathcal{X}_u^{e,k_3}\eta_u^{d_{k_3}+2,k_3} dB_u^{d_{k_3}+2,k_3}+\int_0^t \mathcal{X}_u^{e,k_3}\eta_u^{d_{k_3}+3,k_3} dB_u^{d_{k_3}+3,k_3}\\
&+F^c_t- \int_{(0,t]} C^{k_3}_ud\mathcal{X}_u^{e,k_3}+  A^{e,k_2}_t - V_t^{c}(x,\varphi,A^{k_2}, C^{k_3}),
\end{aligned}
\end{align}
for any $t \in [0,T].$
\end{definition}

Let us provide some information concerning the adjustment process and the rules for the determination of the amount of collateral. In line with \cite{BieRut15} the adjustment process satisfies $V_t^c(x,\varphi,A^{k_2},C^{k_3})=g(C^{k_3}_t(\varphi))$ for some typically Lipschitz function $g$. In the cases considered in the sequel we have either $V_t^c(x,\varphi,A^{k_2},C^{k_3})=\mathcal{X}^{e,k_3}_tC^{k_3}_t$ or $V_t^c(x,\varphi,A^{k_2},C^{k_3})=-\mathcal{X}^{e,k_3}_t(C^{k_3}_t)^-$.

The amount of collateral $C^{k_3}$ can be determined in many different ways, as the determination of such process is the result of a legal negotiation between the hedger and the counterparty. However it is rather common to link the collateral with the value (mark-to-market) of the contract.

\begin{remark}[Collateral and mark-to-market] We let $M$ be a $\GG$-adapted RCLL process that represents the value of the contract expressed in units of the local currency $e$. One possible specification for $M$ is given by the setting
\begin{align}
\label{eq:Mdef}
M_t:=V^0_t(x)-V_t(x,\varphi,A^{k_2},C^{k_3}).
\end{align}
The formulation above captures the natural assumption that the collateral amount is linked to the value of the contract from the perspective of the hedger. In particular, recall that the portfolio $V$ is meant to cover the liabilities of the hedger towards the counterparty, meaning that the market value of the contract is $-V$. In terms of the process $M$ one has the following specification for the collateral account under a generic currency $k_3$.
\begin{align}
\label{eq:collAccount}
\mathcal{X}^{e,k_3}_tC^{k_3}_t=(1+\delta^1_t)M^+_t-(1+\delta^2_t)M^-_t,
\end{align}
where the processes $\delta^1$ and $\delta^2$ represent haircuts that reduce/increase in percentage the amount of collateral. Using \eqref{eq:collAccount} and \eqref{eq:Mdef} we write
\begin{align*}
\mathcal{X}^{e,k_3}_tC^{k_3}_t=(1+\delta^1_t)\left(V^0_t(x)-V_t(x,\varphi,A^{k_2},C^{k_3})\right)^+-(1+\delta^2_t)\left(V^0_t(x)-V_t(x,\varphi,A^{k_2},C^{k_3})\right)^-.
\end{align*}
\end{remark}

\begin{remark}[Full collateralization] One particularly important case is the case known as \textit{full collateralization}. In this case the value of the collateral is continuously updated in time in order to perfectly match the value of the contract. This can be obtained by setting $\delta^1_t=\delta^2_t=0$ for every $t$, which gives
\begin{align}
\label{eq:fullCollat}
\mathcal{X}^{e,k_3}_tC^{k_3}_t=V^0_t(x)-V_t(x,\varphi,A^{k_2},C^{k_3}).
\end{align}
Finally, in the case where the initial endowment is zero we have
\begin{align}
\label{eq:fullCollat2}
\mathcal{X}^{e,k_3}_tC^{k_3}_t=-V_t(0,\varphi,A^{k_2},C^{k_3}).
\end{align}
An important fact to note is that when the transaction is fully collateralized, the collateralization scheme completely funds the trading portfolio of the hedger, so that the cost of the collateral (which is proportional to the collateral rate) coincides with the funding rate for the trading activity. 
\end{remark}

\subsection{Cash collateral} We first proceed to study the case where collateral is exchanged in cash as illustrated in Definition \ref{def:cashCollateral}. This constitutes the most common collateralization covenant. Cash collateral is the case that is also most commonly treated in the literature. From the present treatment we will be able to recover the findings of, among others, \cite{mopa17}, \cite{fushita10c} in the case of full collateralization. The risky asset used for collateralization is of course immaterial and in fact we shall set $\xi_t^{d_{k_3}+1,k_3}=0$ in the subsequent results.

\subsubsection{Margin account under segregation} Let us recall that under segregation, if the hedger receives collateral from the counterparty, he/she is not allowed to use it as a source of funding for the trading activity: this means that $(C^{k_3})^+$ (i.e. the received collateral) is immaterial in the hedger's wealth, only the collateral posted by the hedger $(C^{k_3})^-$ will have a role in the hedger's wealth. Concerning the received collateral $(C^{k_3})^+$, we notice that this loan, received from the counterparty, must be remunerated according to the cash account $B^{d_{k_3}+2,k_3,s}_t$, so that this remuneration will have an impact via the self-financing condition. On the other hand, the posted collateral $(C^{k_3})^-$ is borrowed from the account $B^{d_{k_3}+1,k_3}_t$ and is remunerated by the counterparty  with interest from the cash account $B^{c,k_3,l}$.

\begin{proposition}\label{prop:cashSegregation}
We assume the hedger operates under segregation, hence he/she is posting or receiving collateral in form of cash. Let $(x,\varphi,A^{k_2},C^{k_3})$ be a self-financing strategy and the following conditions hold for $t \in [0,T]$:
\begin{align}
\label{eq:cashcollsegr}
\begin{aligned}
&\xi_t^{d_{k_3}+1,k_3}= \psi_t^{d_{k_3}+1,k_3}=0,\\
&\eta_t^{d_{k_3}+3,k_3}= -(B_t^{d_{k_3}+3,k_3})^{-1}(C_t^{k_3})^-, \quad \eta_t^{d_{k_3}+2,k_3,s} = (B_t^{d_{k_3}+2,k_3,s})^{-1}(C_t^{k_3})^+.
\end{aligned}
\end{align}
Then hedger's wealth process $V(x,\varphi,A^{k_2},C^{k_3})$ is given by 
\begin{align}
\label{eq:vcashcollsegr}
\begin{aligned}
V_t(x,\varphi,A^{k_2},C^{k_3})&= V_t^p(x,\varphi,A^{k_2},C^{k_3}) + \mathcal{X}_t^{e,k_3}(C_t^{k_3})^-\\
&=\sum_{k_1=1}^L \mathcal{X}_t^{e,k_1} \left( \sum_{i=1}^{d_{k_1}} \xi_t^{i,k_1}S_t^{i,k_1} + \sum_{j=0}^{d_{k_1}} \psi_t^{j,k_1}B_t^{j,k_1}\right)\\
&\quad\quad+\mathcal{X}_t^{e,k_3}\eta_t^{d_{k_3}+3,k_3}B_t^{d_{k_3}+3,k_3}+ \mathcal{X}_t^{e,k_3}(C_t^{k_3})^-,
\end{aligned}
\end{align}
for every $t \in [0,T]$. In addition, the dynamics of the hedger's portfolio wealth are as follows for any $t \in [0,T]$ and $V^p(\varphi):= V^p(x,\varphi,A^{k_2},C^{k_3}) $ and $V_t^c(\varphi):= V_t^c(x,\varphi,A^{k_2},C^{k_3})$:
\begin{align}
\label{eq:cashSegregationVp}
\begin{aligned}
dV_t^p(\varphi) &= \tilde{V}_t^p(\varphi)dB_t^e + \sum_{k_1=1}^L  \sum_{i=1}^{d_{k_1}} \left[ \xi_t^{i,k_1}dK_t^{i,e,k_1} +\mathcal{X}_t^{e,k_1} \zeta_t^{i,k_1} (\tilde{B}_t^{i,k_1})^{-1} d\tilde{B}_t^{i,k_1} + \psi_t^{i,k_1}B_t^{i,k_1} d\mathcal{X}_t^{e,k_1} \right] \\
&+\sum_{k_1=1,k_1 \neq e}^L\psi_t^{0,k_1}B^e_td\left(\frac{B^{k_1}\mathcal{X}^{e,k_1}}{B^{e}}\right)_t \\
&- \mathcal{X}_t^{e,k_3} ({B}_t^{d_{k_3}+3,k_3})^{-1}(C_t^{k_3})^- d{B}_t^{d_{k_3}+3,k_3}+ \mathcal{X}_t^{e,k_3} ({B}_t^{d_{k_3}+2,k_3})^{-1}(C_t^{k_3})^+ d{B}_t^{d_{k_3}+2,k_3}\\ 
&+dF_t^c -C^{k_3}_td\mathcal{X}^{e,k_3}_t+ dA_t^{e,k_2} - dV_t^c(\varphi).
\end{aligned}
\end{align}
Under the repo constraint \eqref{eq:repoConstraint}, the dynamics of the hedger's wealth are given by
\begin{align}
\label{eq:VcashSegregation}
\begin{aligned}
dV_t(\varphi) &= \tilde{V}_t(\varphi)dB_t^e + \sum_{k_1=1}^L  \sum_{i=1}^{d_{k_1}} \xi_t^{i,k_1}\left[dK_t^{i,e,k_1}-S^{i,k_1}_td\mathcal{X}^{e,k_1}_t\right]\\
&+\sum_{k_1=1,k_1 \neq e}^L\psi_t^{0,k_1}B^e_td\left(\frac{B^{k_1}\mathcal{X}^{e,k_1}}{B^{e}}\right)_t  + dA_t^{e,k_2} +d\hat{F}_t^s,
\end{aligned}
\end{align}
where we use the notation $V(\varphi):= V(x,\varphi,A^{k_2},C^{k_3})$ and 
\begin{align}
\label{eq:hatFsegregation}
d\hat{F}_t^{s} := dF_t^s -C^{k_3}_td\mathcal{X}^{e,k_3}_t - \mathcal{X}_t^{e,k_3} ({B}_t^{d_{k_3}+3,k_3})^{-1}(C_t^{k_3})^- d{B}_t^{d_{k_3}+3,k_3}. 
\end{align}
\end{proposition}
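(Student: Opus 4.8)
The plan is to mirror the derivation of the uncollateralized wealth dynamics (Proposition \ref{prop:PrelimRes} and its Corollary), inserting the collateral-specific terms that appear in the self-financing condition of Definition \ref{def:secondSelfFinancing}, and then exploiting the segregation constraints \eqref{eq:cashcollsegr} to collapse the collateral bookkeeping into the single process $\hat F^s$. First I would establish the static identity \eqref{eq:vcashcollsegr}: substituting $\xi^{d_{k_3}+1,k_3}_t=\psi^{d_{k_3}+1,k_3}_t=0$ and the prescribed values of $\eta^{d_{k_3}+2,k_3,s}_t$ and $\eta^{d_{k_3}+3,k_3}_t$ into the definition \eqref{eq:vcollwealth} of $V_t(x,\varphi,A^{k_2},C^{k_3})$, the terms $\eta^{k_3,b}_tB^{c,k_3,b}_t$, $\eta^{k_3,l}_tB^{c,k_3,l}_t$ and $\eta^{d_{k_3}+2,k_3,s}_tB^{d_{k_3}+2,k_3,s}_t$ reconstitute, via \eqref{borrowlendingcond}, exactly $\mathcal{X}^{e,k_3}_t(C^{k_3}_t)^-$ after cancellation against $-C^{k_3}_t$; comparison with the definition of $V^p$ then gives the claimed decomposition, and one also reads off $V^c_t(\varphi)=\mathcal{X}^{e,k_3}_t(C^{k_3}_t)^-$ (equivalently, via the earlier Remark, $V^c_t(\varphi)=-\mathcal{X}^{e,k_3}_t(C^{k_3}_t)^-$ up to sign conventions — I would pin this down carefully, since a sign slip here propagates everywhere).

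Next, for \eqref{eq:cashSegregationVp}: I would differentiate the self-financing expression for $V^p$ in Definition \ref{def:secondSelfFinancing}. The terms indexed by $k_1=1,\dots,L$ are handled verbatim as in the proof of Proposition \ref{prop:PrelimRes} — regroup the asset, dividend, quadratic-covariation and funding-account pieces into $\xi^{i,k_1}_tdK^{i,e,k_1}_t$, the repo-mismatch term $\mathcal{X}^{e,k_1}_t\zeta^{i,k_1}_t(\tilde B^{i,k_1}_t)^{-1}d\tilde B^{i,k_1}_t$ (here using $d(B^{i,k_1}/B^e)$ plus the Itô product rule to pull out $dB^e$), the FX term $\psi^{i,k_1}_tB^{i,k_1}_td\mathcal{X}^{e,k_1}_t$, and the unsecured-funding term $\psi^{0,k_1}_tB^e_td(B^{k_1}\mathcal{X}^{e,k_1}/B^e)_t$, together with the leading $\tilde V^p_t(\varphi)dB^e_t$. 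The $k_3$-indexed terms with $\xi^{d_{k_3}+1,k_3}=\psi^{d_{k_3}+1,k_3}=0$ vanish, leaving only $\int_0^t\mathcal{X}^{e,k_3}_u\eta^{d_{k_3}+2,k_3}_udB^{d_{k_3}+2,k_3}_u+\int_0^t\mathcal{X}^{e,k_3}_u\eta^{d_{k_3}+3,k_3}_udB^{d_{k_3}+3,k_3}_u$; substituting the constraints \eqref{eq:cashcollsegr} turns these into $\mathcal{X}^{e,k_3}_t(B^{d_{k_3}+2,k_3}_t)^{-1}(C^{k_3}_t)^+dB^{d_{k_3}+2,k_3}_t-\mathcal{X}^{e,k_3}_t(B^{d_{k_3}+3,k_3}_t)^{-1}(C^{k_3}_t)^-dB^{d_{k_3}+3,k_3}_t$, and keeping $dF^c_t-C^{k_3}_td\mathcal{X}^{e,k_3}_t+dA^{e,k_2}_t-dV^c_t(\varphi)$ from the definition gives \eqref{eq:cashSegregationVp}.

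Finally, for \eqref{eq:VcashSegregation} I would add $dV^c_t(\varphi)$ back to recover $dV_t(\varphi)$ from $dV^p_t(\varphi)$, then impose the repo constraint $\zeta^{i,k_1}_t=0$ to kill the $\mathcal{X}^{e,k_1}_t\zeta^{i,k_1}_t(\tilde B^{i,k_1}_t)^{-1}d\tilde B^{i,k_1}_t$ terms and, as in the passage from \eqref{eq:Kvarphi} to \eqref{eq:assumeRepoConstraint}, convert the remaining $\psi^{i,k_1}_tB^{i,k_1}_td\mathcal{X}^{e,k_1}_t$ into $-\xi^{i,k_1}_tS^{i,k_1}_td\mathcal{X}^{e,k_1}_t$, which combines with $\xi^{i,k_1}_tdK^{i,e,k_1}_t$ into $\xi^{i,k_1}_t[dK^{i,e,k_1}_t-S^{i,k_1}_td\mathcal{X}^{e,k_1}_t]$. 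What remains of the collateral terms — the two $B^{d_{k_3}+2,k_3}$ and $B^{d_{k_3}+3,k_3}$ integrals, $dF^c_t$, $-C^{k_3}_td\mathcal{X}^{e,k_3}_t$, and the $dV^c_t(\varphi)$ correction — is precisely what \eqref{def:Foverallinterest}–\eqref{def:Fcumulativeinterest} and \eqref{eq:hatFsegregation} package as $d\hat F^s_t$; checking this regrouping (in particular that the $B^{d_{k_3}+2,k_3,s}$ term from the $\eta$-integral matches the one hidden inside $F^s$ versus the $V^c$ correction) is the one place that needs care. \textbf{The main obstacle} I anticipate is exactly this last bookkeeping step together with the sign/positive-part conventions in $V^c$ and \eqref{borrowlendingcond}: the algebra is routine but the accounting of which $(C^{k_3})^{\pm}$ term lives in $V^p$, which in $V^c$, and which in $F^s$ versus $F^c$ is easy to get wrong, so I would track each of the five collateral cash accounts explicitly rather than by analogy.
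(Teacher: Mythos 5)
Your plan reproduces the paper's own argument step for step: substitute \eqref{eq:cashcollsegr} and \eqref{borrowlendingcond} into \eqref{eq:vcollwealth} to obtain \eqref{eq:vcashcollsegr} (with $V^c_t(\varphi)=\mathcal{X}^{e,k_3}_t(C^{k_3}_t)^-$, the sign you flag), then differentiate the self-financing identity of Definition \ref{def:secondSelfFinancing}, regrouping the $k_1$-terms exactly as in Proposition \ref{prop:PrelimRes} after expressing $\psi^{0,e}$ through $\tilde V^p$, and finally add $dV^c$, impose \eqref{eq:repoConstraint}, and package the remaining collateral terms into $\hat F^s$. The only ingredient your sketch leaves implicit, and which the paper writes out, is the substitution $\tilde V^p_t=\tilde V_t-\mathcal{X}^{e,k_3}_t(B^e_t)^{-1}(C^{k_3}_t)^-$ that absorbs the residual $\mathcal{X}^{e,k_3}_t(B^e_t)^{-1}(C^{k_3}_t)^-\,dB^e_t$ term coming from the $\psi^{0,e}$ substitution into $\tilde V_t\,dB^e_t$ (rather than any ``$dV^c$ correction'', which simply cancels); this is routine, so the proposal is correct and takes essentially the same route as the paper.
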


\begin{proof}
By combining the assumptions made in \eqref{eq:cashcollsegr} with equality \eqref{eq:vcollwealth} and \eqref{borrowlendingcond}, we obtain
\begin{align*}
V_t(x,\varphi,A^{k_2},C^{k_3}) &= \sum_{k_1=1}^L \mathcal{X}_t^{e,k_1} \left( \sum_{i=1}^{d_{k_1}} \xi_t^{i,k_1}S_t^{i,k_1} + \sum_{j=0}^{d_{k_1}} \psi_t^{j,k_1}B_t^{j,k_1}\right)\\
&\quad\quad+ \mathcal{X}_t^{e,k_3}\left( -C_t^{k_3}+(C_t^{k_3})^++\eta_t^{d_{k_3}+3,k_3}B_t^{d_{k_3}+3,k_3} \right)\\
&= V_t^{p}(x,\varphi,A^{k_2},C^{k_3}) + \mathcal{X}_t^{e,k_3} (C_t^{k_3})^-,\\
\end{align*}
which proves \eqref{eq:vcashcollsegr}.
If we take a closer look at the hedger's portfolio value $V^p(x,\varphi,A^{k_2},C^{k_3})$ and recall that for some $k_1 = 1,\ldots,L$ we have $k_1=e$ and hence $\mathcal{X}^{e,e}\equiv 1$, $B^{0,e}:=B^e$, we have that
\begin{align*}
V_t^p(x,\varphi,A^{k_2},C^{k_3})&= \sum_{k_1=1}^L \mathcal{X}_t^{e,k_1} \sum_{i=1}^{d_{k_1}} \zeta_t^{i,k_1} + \sum_{k_1=1,k_1 \neq e}^L \mathcal{X}_t^{e,k_1}\psi_t^{0,k_1}B_t^{k_1} + \psi_t^{0,e}B_t^e - \mathcal{X}_t^{e,k_3} (C_t^{k_3})^-,
\end{align*}
for any $t \in [0,T]$, where the quantity $\zeta_t^{i,k_1}$ was defined in \eqref{eq:repoConstraint}. Hence we get
\begin{align}
\label{eq:vpstrategy}
\begin{aligned}
\psi_t^{0,e}&= \tilde{V}_t^p(x,\varphi,A^{k_2},C^{k_3}) - \sum_{k_1=1}^L \mathcal{X}_t^{e,k_1} \sum_{i=1}^{d_{k_1}} (B_t^e)^{-1}\zeta_t^{i,k_1}\\
& \quad- \sum_{k_1=1,k_1 \neq e}^L \mathcal{X}_t^{e,k_1}\psi_t^{0,k_1}(B_t^e)^{-1}B_t^{k_1} + \mathcal{X}_t^{e,k_3} (B_t^e)^{-1}(C_t^{k_3})^-,
\end{aligned}
\end{align}
with $\tilde{V}^p(x,\varphi,A^{k_2},C^{k_3}):= (B^e)^{-1}V^p(x,\varphi,A^{k_2},C^{k_3})$. By using the dynamics of the self financing condition \eqref{def:secondSelfFinancing} in combination with \eqref{eq:vpstrategy}, the notation $V^p(x,\varphi,A^{k_2},C^{k_3}):=V^{p}(\varphi)$, $V^c(x,\varphi,A^{k_2},C^{k_3}):=V^{c}(\varphi)$ and $V(x,\varphi,A^{k_2},C^{k_3}):=V(\varphi)$, by using 
\begin{align}
\begin{aligned}
\label{eq:dynamicsK}
dK_t^{i,e,k_1}&= B_t^{i,k_1}d\hat{S}_t^{i,cld,e,k_1} = B_t^{i,k_1} d\left(\hat{S}_t^{i,k_1}\mathcal{X}_t^ {e,k_1}\right) + (B_t^{i,k_1})^{-1}\mathcal{X}_t^{e,k_1}dD_t^{i,k_1}\\
&=\mathcal{X}_t^{e,k_1}(dS_t^{i,k_1} + dD_t^{i,k_1}) - (B_t^{i,k_1})^{-1}\mathcal{X}_t^{e,k_1} S_t^{i,k_1}dB_t^{i,k_1} + S_t^{i,k_1}d\mathcal{X}_t^{e,k_1} + d\left[S^{i,k_1},\mathcal{X}^{e,k_1} \right]_t,
\end{aligned}
\end{align}
from equation \eqref{eq:definitionKi} for $i = 1,\ldots, d_{k_1}$, $k_1=1,\ldots, L$, we have that
\begin{align*}
dV_t^p(\varphi)&= \sum\limits_{k_1=1}^L \sum_{i=1}^{d_{k_1}} \left[\mathcal{X}_t^{e,k_1}\xi_t^{i,k_1}(dS_t^{i,k_1}+dD_t^{i,k_1})+\xi_t^{i,k_1} S_t^{i,k_1}d\mathcal{X}_t^{e,k_1} +\xi_t^{i,k_1}d\left[ S^{i,k_1},\mathcal{X}^{e,k_1}\right]_t \right. \\
&- \xi_t^{i,k_1}(B_t^{i,k_1})^{-1}\mathcal{X}_t^{e,k_1} S_t^{i,k_1}dB_t^{i,k_1} + \underbrace{\xi_t^{i,k_1}(B_t^{i,k_1})^{-1}\mathcal{X}_t^{e,k_1} S_t^{i,k_1}dB_t^{i,k_1} + \mathcal{X}_t^{e,k_1}\psi_t^{i,k_1}dB_t^{i,k_1}}_{= (B_t^{i,k_1})^{-1}\mathcal{X}_t^{e,k_1} \zeta_t^{i,k_1} dB_t^{i,k_1}} \\
&\left.+ \psi_t^{i,k_1}B_t^{i,k_1}d\mathcal{X}_t^{e,k_1} \right] +\sum_{k_1=1, k_1\neq
e}^L \left[ \mathcal{X}_t^{e,k_1} \psi_t^{0,k_1}dB_t^{k_1} + \psi_t^{0,k_1}B_t^{k_1}d\mathcal{X}_t^{e,k_1} \right] \\
&+ \left(\tilde{V}_t^p(\varphi) - \sum_{k_1=1}^L \mathcal{X}_t^{e,k_1} \sum_{i=1}^{d_{k_1}} (B_t^e)^{-1}\zeta_t^{i,k_1} - \sum_{k_1=1,k_1 \neq e}^L \mathcal{X}_t^{e,k_1}\psi_t^{0,k_1}(B_t^e)^{-1}B_t^{k_1} +\mathcal{X}_t^{e,k_3} (B_t^e)^{-1} (C_t^{k_3})^- \right) dB_t^e \\
&- \mathcal{X}_t^{e,k_3} (B_t^{d_{k_3}+3,k_3})^{-1} (C_t^{k_3})^- dB_t^{d_{k_3}+3,k_3} + \mathcal{X}_t^{e,k_3} ({B}_t^{d_{k_3}+2,k_3})^{-1}(C_t^{k_3})^+ d{B}_t^{d_{k_3}+2,k_3}\\
& + dF_t^c  -C_t^{k_3} d\mathcal{X}_t^{e,k_3}+ dA_t^{e,k_2} - dV_t^c(\varphi) \\
&= \tilde{V}_t^p(\varphi)dB_t^e + \sum_{k_1=1}^L  \sum_{i=1}^{d_{k_1}} \left[ \xi_t^{i,k_1}dK_t^{i,e,k_1} +\mathcal{X}_t^{e,k_1} \zeta_t^{i,k_1} \underbrace{((B_t^{i,k_1})^{-1}dB_t^{i,k_1} - (B_t^e)^{-1}dB_t^e)}_{=(\tilde{B}_t^{i,k_1})^{-1} d\tilde{B}_t^{i,k_1}} + \psi_t^{i,k_1}B_t^{i,k_1} d\mathcal{X}_t^{e,k_1} \right] \\
&+ \sum_{k_1=1,k_1 \neq e}^L \left( \mathcal{X}_t^{e,k_1} \psi_t^{0,k_1} B_t^{k_1}\underbrace{((B_t^{k_1})^{-1}dB_t^{k_1} - (B_t^e)^{-1}dB_t^e )}_{= (\tilde{B}_t^{k_1})^{-1} d\tilde{B}_t^{k_1}} + \psi_t^{0,k_1} B_t^{k_1}d\mathcal{X}_t^{e,k_1} \right)+\mathcal{X}_t^{e,k_3} (B_t^e)^{-1} (C_t^{k_3})^-dB_t^e \\
&- \mathcal{X}_t^{e,k_3} (B_t^{d_{k_3}+3,k_3})^{-1} (C_t^{k_3})^- dB_t^{d_{k_3}+3,k_3} + \mathcal{X}_t^{e,k_3} ({B}_t^{d_{k_3}+2,k_3})^{-1}(C_t^{k_3})^+ d{B}_t^{d_{k_3}+2,k_3}\\
& + dF_t^c  -C_t^{k_3} d\mathcal{X}_t^{e,k_3}+ dA_t^{e,k_2} - dV_t^c(\varphi)\\
&= \tilde{V}_t^p(\varphi)dB_t^e + \sum_{k_1=1}^L  \sum_{i=1}^{d_{k_1}} \left[ \xi_t^{i,k_1}dK_t^{i,e,k_1} +\mathcal{X}_t^{e,k_1} \zeta_t^{i,k_1} (\tilde{B}_t^{i,k_1})^{-1} d\tilde{B}_t^{i,k_1} + \psi_t^{i,k_1}B_t^{i,k_1} d\mathcal{X}_t^{e,k_1} \right] \\
&+ \sum_{k_1=1,k_1 \neq e}^L \left( \mathcal{X}_t^{e,k_1} \psi_t^{0,k_1} B_t^{k_1} (\tilde{B}_t^{k_1})^{-1} d\tilde{B}_t^{k_1} + \psi_t^{0,k_1} B_t^{k_1}d\mathcal{X}_t^{e,k_1} \right) +\mathcal{X}_t^{e,k_3} (B_t^e)^{-1} (C_t^{k_3})^-dB_t^e\\
&- \mathcal{X}_t^{e,k_3} ({B}_t^{d_{k_3}+1,k_3})^{-1}(C_t^{k_3})^- d{B}_t^{d_{k_3}+1,k_3} +dF_t^s  -C_t^{k_3} d\mathcal{X}_t^{e,k_3}+ dA_t^{e,k_2} - dV_t^c(\varphi).
\end{align*}
We obtain \eqref{eq:cashSegregationVp} by noticing that we can perform the following simplification
\begin{align*}
 \sum_{k_1=1,k_1 \neq e}^L \left( \mathcal{X}_t^{e,k_1} \psi_t^{0,k_1} B_t^{k_1} (\tilde{B}_t^{k_1})^{-1} d\tilde{B}_t^{k_1} + \psi_t^{0,k_1} B_t^{k_1}d\mathcal{X}_t^{e,k_1} \right)= \sum_{k_1=1,k_1 \neq e}^L\psi_t^{0,k_1}B^e_td\left(\frac{B^{k_1}\mathcal{X}^{e,k_1}}{B^{e}}\right)_t.
\end{align*}
Furthermore, if condition \eqref{eq:repoConstraint} holds, meaning that $\zeta^{i,k_1}_t = 0$ for all $t\in[0,T]$ $i=1,\ldots,d_{k_1}$, $k_1=1,\ldots,L$ and so $\psi^{i,k_1}_tB^{i,k_1}_t=-\xi^{i,k_1}_tS^{i,k_1}_t$, then the dynamics of the hedger's wealth process are given by
\begin{align*}
dV_t(\varphi) &= dV_t^p(\varphi)+dV_t^c(\varphi) \\
&= \tilde{V}_t(\varphi)dB_t^e + \sum_{k_1=1}^L  \sum_{i=1}^{d_{k_1}} \xi_t^{i,k_1}\left(dK_t^{i,e,k_1}-S^{i,k_1}_td\mathcal{X}^{e,k_1}_t\right)+\sum_{k_1=1,k_1 \neq e}^L\psi_t^{0,k_1}B^e_td\left(\frac{B^{k_1}\mathcal{X}^{e,k_1}}{B^{e}}\right)_t\\
&+ dA_t^{e,k_2} +dF_t^s - \mathcal{X}_t^{e,k_3} ({B}_t^{d_{k_3}+1,k_3})^{-1}(C_t^{k_3})^- d{B}_t^{d_{k_3}+1,k_3} -C_t^{k_3} d\mathcal{X}_t^{e,k_3},
\end{align*}
where we also used $\tilde{V}_t^p(\varphi)=\tilde{V}_t(\varphi)-\mathcal{X}_t^{e,k_3} (B_t^e)^{-1} (C_t^{k_3})^-$.
\end{proof}

\subsubsection{Margin account under rehypothecation} Let us recall that, under rehypothecation, when the hedger receives the collateral amount $\left(C^{k_3}\right)^+$ he/she  can use it to fund his/her trading activity. Interest is paid by the hedger to the counterparty based on $\left(C^{k_3}\right)^+$ and the cash account $B^{c,k_3,b}$.  Instead, in case the hedger posts the amount $\left(C^{k_3}\right)^-$ to the counterparty, then the hedger will receive from the counterparty an interest amount based on $\left(C^{k_3}\right)^-$  and the cash account $B^{c,k_3,l}$. As the hedger needs to rise the amount of cash $\left(C^{k_3}\right)^-$ he/she borrows such amount from the dedicated cash account $B^{d_{k_3}+1,k_3}$ that might coincide with the unsecured cash account in currency $k_3$ i.e. $B^{k_3}$.

The present case is the most common one in the market for bilateral trades (i.e. trades not involving a central counterparty) and, when the collateral is perfect (as in \eqref{eq:fullCollat}) then we will obtain in the sequel useful valuation formulas based on the present case.

\begin{proposition}\label{prop:cashRehypothecation}
Consider the market model, where the hedger delivers or posts collateral in form of cash under rehypothecation. Let $(x,\varphi,A^{k_2},C^{k_3})$ be a self financing trading strategy and the following conditions hold for any $t \in [0,T]$:
\begin{align}
\label{eq:cashcollrehy}
\xi_t^{d_{k_3}+1,k_3}= \psi_t^{d_{k_3}+1,k_3} = 0, \quad \eta_t^{d_{k_3}+3,k_3} = - (B_t^{d_{k_3}+3,k_3})^{-1}(C_t^{k_3})^-, \quad \eta_t^{d_{k_3}+2,k_3,h}=0.
\end{align}
Consequently, the hedger's wealth process $V(x,\varphi,A^{k_2},C^{k_3})$ is given by
\begin{align}
\label{eq:vcashcollrehy}
\begin{aligned}
V_t(x,\varphi,A^{k_2},C^{k_3}) &= \sum_{k_1=1}^L \mathcal{X}_t^{e,k_1} \left( \sum_{i=1}^{d_{k_1}} \xi_t^{i,k_1}S_t^{i,k_1} + \sum_{j=0}^{d_{k_1}} \psi_t^{j,k_1}B_t^{j,k_1}\right)- \mathcal{X}_t^{e,k_3}(C_t^{k_3})^+\\
&= V_t^p(x,\varphi,A^{k_2},C^{k_3}) - \mathcal{X}_t^{e,k_3} C_t^{k_3}.
\end{aligned}
\end{align}
and the dynamics of the hedger's portfolio value $V_t^p(\varphi):= V_t^p(x,\varphi,A^{k_2},C^{k_3})$ are
\begin{align}
\begin{aligned}
dV_t^p(\varphi)&= \tilde{V}_t^p(\varphi)dB_t^e + \sum_{k_1=1}^L \sum_{i=1}^{d_{k_1}} \left[ \xi_t^{i,k_1}dK_t^{i,e,k_1} + \mathcal{X}_t^{e,k_1} \zeta_t^{i,k_1} (\tilde{B}_t^{i,k_1})^{-1} d\tilde{B}_t^{i,k_1} + \psi_t^{i,k_1} B_t^{i,k_1} d\mathcal{X}_t^{e,k_1} \right] \\
&+\sum_{k_1=1,k_1 \neq e}^L\psi_t^{0,k_1}B^e_td\left(\frac{B^{k_1}\mathcal{X}^{e,k_1}}{B^{e}}\right)_t \\
&- \mathcal{X}_t^{e,k_3} (B_t^{d_{k_3}+3,k_3})^{-1} (C_t^{k_3})^- dB_t^{d_{k_3}+3,k_3}+ \mathcal{X}_t^{e,k_3} (B_t^{e})^{-1} (C_t^{k_3})^+ dB_t^{e} \\
& + dF_t^c- C_t^{k_3} d\mathcal{X}_t^{e,k_3} + dA_t^{e,k_2} - dV_t^c(\varphi).
\end{aligned}
\end{align}
Hence the dynamics of the hedger's wealth process under the repo constraint \eqref{eq:repoConstraint} can be denoted by
\begin{align}
\label{eq:VcashRehypothecation}
\begin{aligned}
dV_t(\varphi) &= \tilde{V}_t(\varphi)dB_t^e + \sum_{k_1=1}^L  \sum_{i=1}^{d_{k_1}} \xi_t^{i,k_1}\left[dK_t^{i,e,k_1}-S^{i,k_1}_td\mathcal{X}^{e,k_1}_t\right]\\
&+\sum_{k_1=1,k_1 \neq e}^L\psi_t^{0,k_1}B^e_td\left(\frac{B^{k_1}\mathcal{X}^{e,k_1}}{B^{e}}\right)_t  + dA_t^{e,k_2} +d\hat{F}_t^h,
\end{aligned}
\end{align}
where 
\begin{align}
\label{eq:hatFrehypothecation}
d\hat{F}_t^h = dF_t^c- C_t^{k_3} d\mathcal{X}_t^{e,k_3} - \mathcal{X}_t^{e,k_3} (B_t^{d_{k_3}+3,k_3})^{-1} (C_t^{k_3})^- dB_t^{d_{k_3}+3,k_3}+ \mathcal{X}_t^{e,k_3} (B_t^{e})^{-1} (C_t^{k_3})^+ dB_t^{e}.
\end{align}
\end{proposition}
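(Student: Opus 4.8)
The plan is to run the same argument as in the proof of Proposition~\ref{prop:cashSegregation}, substituting the rehypothecation data for the segregation data. The only genuine differences are: the portfolio conditions \eqref{eq:cashcollrehy} now impose $\eta_t^{d_{k_3}+2,k_3,h}=0$ (the received collateral is re-used instead of being parked in a segregated account); consequently, by \eqref{borrowlendingcond}, the adjustment process is $V_t^c(\varphi)=-\mathcal{X}_t^{e,k_3}C_t^{k_3}$ rather than $-\mathcal{X}_t^{e,k_3}(C_t^{k_3})^-$; and the remuneration of the re-used amount $(C^{k_3})^+$ is that of a holding in the domestic unsecured account $B^e$, which is exactly the term distinguishing $F^h$ from $F^c$ in \eqref{def:Foverallinterest}.

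I would first substitute \eqref{eq:cashcollrehy} and \eqref{borrowlendingcond} into the extended wealth \eqref{eq:vcollwealth}. Using $\eta_t^{k_3,b}B_t^{c,k_3,b}+\eta_t^{k_3,l}B_t^{c,k_3,l}=-C_t^{k_3}$, $\eta_t^{d_{k_3}+2,k_3,h}B_t^{d_{k_3}+2,k_3,h}=0$, $\eta_t^{d_{k_3}+3,k_3}B_t^{d_{k_3}+3,k_3}=-(C_t^{k_3})^-$ and $C_t^{k_3}=(C_t^{k_3})^+-(C_t^{k_3})^-$, the collateral block collapses to $-\mathcal{X}_t^{e,k_3}C_t^{k_3}$; matching with the definition of $V^p$ gives both lines of \eqref{eq:vcashcollrehy}, the identity $V_t(\varphi)=V_t^p(\varphi)-\mathcal{X}_t^{e,k_3}C_t^{k_3}$ and $V_t^c(\varphi)=-\mathcal{X}_t^{e,k_3}C_t^{k_3}$.

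Next, exactly as in Proposition~\ref{prop:cashSegregation}, I would peel off the index $k_1=e$ (with $\mathcal{X}^{e,e}\equiv1$, $B^{0,e}=B^e$), express $V^p_t(\varphi)$ through the quantities $\zeta_t^{i,k_1}$ of \eqref{eq:repoConstraint} and the unsecured holdings, and solve for $\psi_t^{0,e}$; the result is the same expression as \eqref{eq:vpstrategy}, carrying a $+\mathcal{X}_t^{e,k_3}(B_t^e)^{-1}(C_t^{k_3})^-$ term since the posted collateral is funded through $B^{d_{k_3}+3,k_3}$. Inserting this into the differential of the self-financing identity \eqref{def:secondSelfFinancing} — in which the term $\int_0^t\mathcal{X}_u^{e,k_3}\eta_u^{d_{k_3}+2,k_3,h}\,dB_u^{d_{k_3}+2,k_3,h}$ now vanishes and $\int_0^t\mathcal{X}_u^{e,k_3}\eta_u^{d_{k_3}+3,k_3}\,dB_u^{d_{k_3}+3,k_3}$ equals $-\int_0^t\mathcal{X}_u^{e,k_3}(B_u^{d_{k_3}+3,k_3})^{-1}(C_u^{k_3})^-\,dB_u^{d_{k_3}+3,k_3}$ — and substituting \eqref{eq:dynamicsK} for $dK_t^{i,e,k_1}$, the regrouping is literally the one performed in the proof of Proposition~\ref{prop:cashSegregation}: the differences $(B_t^{i,k_1})^{-1}dB_t^{i,k_1}-(B_t^e)^{-1}dB_t^e$ and $(B_t^{k_1})^{-1}dB_t^{k_1}-(B_t^e)^{-1}dB_t^e$ reassemble into $(\tilde{B}_t^{i,k_1})^{-1}d\tilde{B}_t^{i,k_1}$ and $(\tilde{B}_t^{k_1})^{-1}d\tilde{B}_t^{k_1}$ (no covariation correction, all cash accounts being continuous of finite variation), and the latter merges with $\psi_t^{0,k_1}B_t^{k_1}d\mathcal{X}_t^{e,k_1}$ into $\psi_t^{0,k_1}B_t^e\,d(B^{k_1}\mathcal{X}^{e,k_1}/B^e)_t$. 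Collecting the collateral increments then yields the stated dynamics of $V^p(\varphi)$.

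Finally, under the repo constraint \eqref{eq:repoConstraint} one has $\zeta_t^{i,k_1}=0$, whence $\psi_t^{i,k_1}B_t^{i,k_1}=-\xi_t^{i,k_1}S_t^{i,k_1}$: the $(\tilde{B}^{i,k_1})^{-1}d\tilde{B}^{i,k_1}$ terms disappear and $\psi_t^{i,k_1}B_t^{i,k_1}d\mathcal{X}_t^{e,k_1}$ turns into $-\xi_t^{i,k_1}S_t^{i,k_1}d\mathcal{X}_t^{e,k_1}$, so the asset block becomes $\sum_{k_1,i}\xi_t^{i,k_1}\bigl(dK_t^{i,e,k_1}-S_t^{i,k_1}d\mathcal{X}_t^{e,k_1}\bigr)$. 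Writing $dV_t(\varphi)=dV_t^p(\varphi)+dV_t^c(\varphi)$, the explicit $-dV_t^c(\varphi)$ sitting inside $dV_t^p(\varphi)$ cancels the added $dV_t^c(\varphi)$; then, using $\tilde{V}_t^p(\varphi)=\tilde{V}_t(\varphi)+\mathcal{X}_t^{e,k_3}(B_t^e)^{-1}C_t^{k_3}$ (from \eqref{eq:vcashcollrehy}) together with $C_t^{k_3}+(C_t^{k_3})^-=(C_t^{k_3})^+$, the $\tilde{V}_t^p\,dB^e$ term and the $(C^{k_3})^-\,dB^e$ term combine into $\tilde{V}_t(\varphi)\,dB^e+\mathcal{X}_t^{e,k_3}(B_t^e)^{-1}(C_t^{k_3})^+\,dB^e$, i.e. the re-used-collateral contribution to $d\hat F^h$. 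This gives \eqref{eq:VcashRehypothecation} with $d\hat F^h$ as in \eqref{eq:hatFrehypothecation}. The step I expect to be the main obstacle is precisely this last bookkeeping: keeping straight which account — $B^{c,k_3,b}$, $B^{c,k_3,l}$, $B^{d_{k_3}+3,k_3}$, or $B^e$ — remunerates which of $(C^{k_3})^{\pm}$, and verifying that, after the cancellation of $dV^c$ and the absorption into $\tilde{V}$, these reassemble into exactly $d\hat F^h$; the Itô and foreign-exchange regrouping itself is identical to that already carried out in Proposition~\ref{prop:cashSegregation}.
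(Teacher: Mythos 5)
Your proposal is correct and follows essentially the same route as the paper's proof: substitute \eqref{eq:cashcollrehy} and \eqref{borrowlendingcond} into \eqref{eq:vcollwealth} to obtain \eqref{eq:vcashcollrehy} and $V^c_t(\varphi)=-\mathcal{X}^{e,k_3}_tC^{k_3}_t$, solve for $\psi^{0,e}$ (with the extra $+\mathcal{X}^{e,k_3}_t(B^e_t)^{-1}(C^{k_3}_t)^-$ term), insert into the self-financing identity \eqref{def:secondSelfFinancing} together with \eqref{eq:dynamicsK}, regroup exactly as in Proposition~\ref{prop:cashSegregation}, and then pass from $\tilde V^p$ to $\tilde V$ under the repo constraint. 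Your concluding bookkeeping — merging the $\tilde V^p_t\,dB^e_t$ term with the $\mathcal{X}^{e,k_3}_t(B^e_t)^{-1}(C^{k_3}_t)^-\,dB^e_t$ term via $\tilde V^p_t=\tilde V_t+\mathcal{X}^{e,k_3}_t(B^e_t)^{-1}C^{k_3}_t$ to produce $\tilde V_t\,dB^e_t+\mathcal{X}^{e,k_3}_t(B^e_t)^{-1}(C^{k_3}_t)^+\,dB^e_t$ inside $d\hat F^h$ — is precisely the paper's final observation.
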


\begin{proof}
By using the assumptions \eqref{eq:cashcollrehy} combined with \eqref{borrowlendingcond} and \eqref{eq:vcollwealth}, we get for any $t \in [0,T]$
\begin{align*}
V_t(x,\varphi,A^{k_2},C^{k_3}) &= \sum_{k_1=1}^L \mathcal{X}_t^{e,k_1} \left( \sum_{i=1}^{d_{k_1}} \xi_t^{i,k_1}S_t^{i,k_1} + \sum_{j=0}^{d_{k_1}} \psi_t^{j,k_1}B_t^{j,k_1}\right)+ \mathcal{X}_t^{e,k_3}\left( -(C_t^{k_3})^- -C_t^{k_3}\right) \\
&= \sum_{k_1=1}^L \mathcal{X}_t^{e,k_1} \left( \sum_{i=1}^{d_{k_1}} \xi_t^{i,k_1}S_t^{i,k_1} + \sum_{j=0}^{d_{k_1}} \psi_t^{j,k_1}B_t^{j,k_1}\right)- \mathcal{X}_t^{e,k_3}(C_t^{k_3})^+\\
&= V_t^p(x,\varphi,A^{k_2},C^{k_3}) - \mathcal{X}_t^{e,k_3} C_t^{k_3}.
\end{align*}
Hence the hedger's portfolio wealth gives us
\begin{align}
\begin{aligned}
\label{eq:vpstrategy2}
\psi_t^{0,e} &= \tilde{V}_t^p(x,\varphi,A^{k_2},C^{k_3}) - \sum_{k_1=1}^L \mathcal{X}_t^{e,k_1} \sum_{i=1}^{d_{k_1}}(B^e_t)^{-1} \zeta_t^{i,k_1}\\
&\quad - \sum_{k_1=1, k_1 \neq e}^L \mathcal{X}_t^{e,k_1}\psi_t^{0,k_1}(B_t^e)^{-1}B_t^{k_1} + \mathcal{X}_t^{e,k_3} (B_t^e)^{-1}(C_t^{k_3})^-,
\end{aligned}
\end{align}
and combining \eqref{eq:vpstrategy2} with the self financing condition \eqref{def:secondSelfFinancing} and \eqref{eq:dynamicsK}, we receive
\begin{align*}
dV_t^p(\varphi) &= \tilde{V}_t^p(\varphi)dB_t^e + \sum_{k_1=1}^L \sum_{i=1}^{d_{k_1}} \left[ \xi_t^{i,k_1}dK_t^{i,e,k_1} + \mathcal{X}_t^{e,k_1} \zeta_t^{i,k_1} (\tilde{B}_t^{i,k_1})^{-1} d\tilde{B}_t^{i,k_1} + \psi_t^{i,k_1} B_t^{i,k_1} d\mathcal{X}_t^{e,k_1} \right] \\
&+ \sum_{k_1=1,k_1 \neq e}^L \left( \mathcal{X}_t^{e,k_1} \psi_t^{0,k_1} B_t^{k_1} (\tilde{B}_t^{k_1})^{-1} d\tilde{B}_t^{k_1} + \psi_t^{0,k_1} B_t^{k_1}d\mathcal{X}_t^{e,k_1} \right) \\
&- \mathcal{X}_t^{e,k_3} (B_t^{d_{k_3}+3,k_3})^{-1} (C_t^{k_3})^- dB_t^{d_{k_3}+3,k_3}+\mathcal{X}_t^{e,k_3}(B_t^e)^{-1} (C_t^{k_3})^- dB_t^e- C_t^{k_3}d\mathcal{X}_t^{e,k_3} \\
& + dF_t^c + dA_t^{e,k_2} - dV_t^c(\varphi),
\end{align*}
where $V(\varphi),\, V^p(\varphi)$ and $V^c(\varphi)$ are defined as before. In addition, let the repo constraint \eqref{eq:repoConstraint} be fulfilled and the dynamics of the hedger's wealth process $V(\varphi)$ are given by
\begin{align*}
dV_t(\varphi) &= \tilde{V}_t^p(\varphi)dB_t^e + \sum_{k_1=1}^L \sum_{i=1}^{d_{k_1}}  \xi_t^{i,k_1}\left[dK_t^{i,e,k_1}-S^{i,k_1}_td\mathcal{X}^{e,k_1}_t\right] \\
&+\sum_{k_1=1,k_1 \neq e}^L\psi_t^{0,k_1}B^e_td\left(\frac{B^{k_1}\mathcal{X}^{e,k_1}}{B^{e}}\right)_t\\
&- \mathcal{X}_t^{e,k_3} (B_t^{d_{k_3}+3,k_3})^{-1} (C_t^{k_3})^- dB_t^{d_{k_3}+3,k_3}+\mathcal{X}_t^{e,k_3}(B_t^e)^{-1} (C_t^{k_3})^- dB_t^e- C_t^{k_3}d\mathcal{X}_t^{e,k_3} \\
& + dF_t^c + dA_t^{e,k_2} .
\end{align*}
By observing that $\tilde{V}_t^p(\varphi)=\tilde{V}_t(\varphi)+\mathcal{X}_t^{e,k_3}(B_t^e)^{-1} (C_t^{k_3})^+-\mathcal{X}_t^{e,k_3}(B_t^e)^{-1} (C_t^{k_3})^-$ we conclude.
\end{proof}

\subsection{Risky asset collateral}

Formally, there is no need to distinguish between the case where the hedger posts or receives collateral in form of shares of the risky asset $S^{d_{k_3}+1,k_3}$ under segregation or rehypothecation, since the hedger's wealth process behaves in the the same way modulo the different reinvestment rates $B^{d_{k_3}+2,k_3,s}$ and respectively $B^{d_{k_3}+2,k_3,h}$. In the following, the index $h$ can be replaced by $s$ without loss of generality to formally make a distinction between the underlying collateral conventions.

\subsubsection{Risky asset collateral under segregation and rehypothecation}

\begin{proposition}\label{prop:RiskyCollateral}
Consider the hedger posting or receiving collateral in form of shares of the risky asset $S^{d_{k_3}+1,k_3}$ with no further restrictions concerning the underlying collateral conventions. Let $(x,\varphi,A^{k_2},C^{k_3})$ be a self financing trading strategy and assume that the following conditions hold for $t \in [0,T]$:
\begin{align}
\label{eq:riskycoll}
\begin{aligned}
&\xi_t^{d_{k_3}+1,k_3} =(S_t^{d_{k_3}+1,k_3})^{-1}(C_t^{k_3})^{-}, \quad \psi_t^{d_{k_3}+1,k_3} = -(B_t^{d_{k_3}+1,k_3})^{-1} (C_t^{k_3})^-,\\ &\eta_t^{d_{k_3}+2,k_3,h} = (B_t^{d_{k_3}+2,k_3,h})^{-1} (C_t^{k_3})^+,\quad \eta_t^{d_{k_3}+3,k_3} = 0.
\end{aligned}
\end{align}
The hedger's wealth process is now given by 
\begin{align}
\label{eq:vriskycoll}
\begin{aligned}
V_t(x,\varphi,A^{k_2},C^{k_3}) &= \sum_{k_1=1}^L \mathcal{X}_t^{e,k_1} \left(\sum_{i=1}^{d_{k_1}}\xi_t^{i,k_1}S_t^{i,k_1} + \sum_{j=0}^{d_{k_1}} \psi_t^{j,k_1}B_t^{j,k_1}\right) + \mathcal{X}_t^{e,k_3}(C_t^{k_3})^- \\
&= V_t^p(x,\varphi,A^{k_2}, C^{k_3}) + \mathcal{X}_t^{e,k_3}(C_t^{k_3})^-
\end{aligned}
\end{align}
and $V^c_t(\varphi)=\mathcal{X}_t^{e,k_3}(C_t^{k_3})^-$. The dynamics of the hedger's portfolio value $V_t^p(\varphi):= V_t^p(x,\varphi,A^{k_2},C^{k_3})$ are
\begin{align}
\begin{aligned}
dV_t^p(\varphi)&=\tilde{V}_t^p(\varphi)dB_t^e + \sum_{k_1=1}^L \sum_{i=1}^{d_{k_1}} \left[ \xi_t^{i,k_1}dK_t^{i,e,k_1} + \mathcal{X}_t^{e,k_1} \zeta_t^{i,k_1} (\tilde{B}_t^{i,k_1})^{-1} d\tilde{B}_t^{i,k_1} + \psi_t^{i,k_1} B_t^{i,k_1} d\mathcal{X}_t^{e,k_1} \right] \\
&+\sum_{k_1=1,k_1 \neq e}^L\psi_t^{0,k_1}B^e_td\left(\frac{B^{k_1}\mathcal{X}^{e,k_1}}{B^{e}}\right)_t + \xi_t^{d_{k_3}+1,k_3}dK_t^{d_{k_3}+1,e,k_3}+ \psi_t^{d_{k_3}+1,k_3}B_t^{d_{k_3}+1,k_3}d\mathcal{X}_t^{e,k_3}\\
&+\mathcal{X}_t^{e,k_3}(B_t^{d_{k_3}+2,k_3,h})^{-1} (C_t^{k_3})^+dB_t^{d_{k_3}+2,k_3,h}+dF_t^c-C_t^{k_3}d\mathcal{X}_t^{e,k_3} + dA_t^{e,k_2} - dV_t^c(\varphi). 
\end{aligned}
\end{align}
It follows that the dynamics of the hedger's wealth process are given by
\begin{align}
\begin{aligned}
\label{eq:dVriskycoll}
dV_t(\varphi) &=\tilde{V}_t(\varphi)dB_t^e + \sum_{k_1=1}^L \sum_{i=1}^{d_{k_1}} \xi_t^{i,k_1}\left[ dK_t^{i,e,k_1} -S_t^{i,k_1} d\mathcal{X}_t^{e,k_1} \right] \\
&+\sum_{k_1=1,k_1 \neq e}^L\psi_t^{0,k_1}B^e_td\left(\frac{B^{k_1}\mathcal{X}^{e,k_1}}{B^{e}}\right)_t + \xi_t^{d_{k_3}+1,k_3}\left[dK_t^{d_{k_3}+1,e,k_3}-S_t^{d_{k_3}+1,k_3}d\mathcal{X}_t^{e,k_3}\right]\\
&+ dA_t^{e,k_2} +d\bar{F}_t^h, 
\end{aligned}
\end{align} 
under the repo constraint \eqref{eq:repoConstraint} with
\begin{align}
\label{eq:barFriskyassetcoll}
d\bar{F}_t^h =dF_t^c- C_t^{k_3} d\mathcal{X}_t^{e,k_3}+\mathcal{X}_t^{e,k_3}(B_t^{d_{k_3}+2,k_3,h})^{-1} (C_t^{k_3})^+dB_t^{d_{k_3}+2,k_3,h} - \mathcal{X}_t^{e,k_3} (B_t^{e})^{-1} (C_t^{k_3})^- dB_t^{e}.
\end{align}
\end{proposition}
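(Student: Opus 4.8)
The plan is to mirror the proofs of Propositions \ref{prop:cashSegregation} and \ref{prop:cashRehypothecation}, the only structural novelty being the presence of the collateral asset $S^{d_{k_3}+1,k_3}$ together with its funding account $B^{d_{k_3}+1,k_3}$. First I would substitute the conditions \eqref{eq:riskycoll} together with \eqref{borrowlendingcond} into the wealth decomposition \eqref{eq:vcollwealth}. The relations $\xi_t^{d_{k_3}+1,k_3}=(S_t^{d_{k_3}+1,k_3})^{-1}(C_t^{k_3})^-$ and $\psi_t^{d_{k_3}+1,k_3}=-(B_t^{d_{k_3}+1,k_3})^{-1}(C_t^{k_3})^-$ yield $\xi_t^{d_{k_3}+1,k_3}S_t^{d_{k_3}+1,k_3}+\psi_t^{d_{k_3}+1,k_3}B_t^{d_{k_3}+1,k_3}=0$, while $\eta_t^{k_3,b}B_t^{c,k_3,b}=-(C_t^{k_3})^+$, $\eta_t^{k_3,l}B_t^{c,k_3,l}=(C_t^{k_3})^-$, $\eta_t^{d_{k_3}+2,k_3,h}B_t^{d_{k_3}+2,k_3,h}=(C_t^{k_3})^+$ and $\eta_t^{d_{k_3}+3,k_3}=0$. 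Inserting these collapses the collateral block of \eqref{eq:vcollwealth} to $\mathcal{X}_t^{e,k_3}(C_t^{k_3})^-$, which gives \eqref{eq:vriskycoll} and the identity $V_t^c(\varphi)=\mathcal{X}_t^{e,k_3}(C_t^{k_3})^-$; note that, unlike the cash-collateral cases, $V^p$ here carries no collateral term.

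Next I would read off $\psi^{0,e}$ from $V^p$. Isolating the domestic unsecured account in the definition of $V^p$ gives $V_t^p(\varphi)=\sum_{k_1=1}^L\mathcal{X}_t^{e,k_1}\sum_{i=1}^{d_{k_1}}\zeta_t^{i,k_1}+\sum_{k_1\neq e}\mathcal{X}_t^{e,k_1}\psi_t^{0,k_1}B_t^{k_1}+\psi_t^{0,e}B_t^e$, hence $\psi_t^{0,e}=\tilde{V}_t^p(\varphi)-\sum_{k_1,i}(B_t^e)^{-1}\mathcal{X}_t^{e,k_1}\zeta_t^{i,k_1}-\sum_{k_1\neq e}(B_t^e)^{-1}\mathcal{X}_t^{e,k_1}\psi_t^{0,k_1}B_t^{k_1}$. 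I would then plug this into the self-financing condition \eqref{def:secondSelfFinancing}, expand every $dK^{i,e,k_1}_t$ and $dK^{d_{k_3}+1,e,k_3}_t$ via \eqref{eq:dynamicsK}, and regroup the currency-$k_1$ terms as in the cash-collateral propositions. The only genuinely new manipulation is on the collateral-asset side: collecting the $S^{d_{k_3}+1,k_3}$ and $B^{d_{k_3}+1,k_3}$ increments produces a spurious term $\mathcal{X}_t^{e,k_3}(B_t^{d_{k_3}+1,k_3})^{-1}(\xi_t^{d_{k_3}+1,k_3}S_t^{d_{k_3}+1,k_3}+\psi_t^{d_{k_3}+1,k_3}B_t^{d_{k_3}+1,k_3})dB_t^{d_{k_3}+1,k_3}$ which vanishes by the constraint noted above, so this block reduces exactly to $\xi_t^{d_{k_3}+1,k_3}dK_t^{d_{k_3}+1,e,k_3}+\psi_t^{d_{k_3}+1,k_3}B_t^{d_{k_3}+1,k_3}d\mathcal{X}_t^{e,k_3}$, while the two $\eta$-integrals reduce via \eqref{eq:riskycoll} and \eqref{borrowlendingcond} to $\mathcal{X}_t^{e,k_3}(B_t^{d_{k_3}+2,k_3,h})^{-1}(C_t^{k_3})^+dB_t^{d_{k_3}+2,k_3,h}$ together with $dF_t^c$, $-C_t^{k_3}d\mathcal{X}_t^{e,k_3}$, $dA_t^{e,k_2}$ and $-dV_t^c(\varphi)$. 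This establishes the stated dynamics of $V^p(\varphi)$.

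Finally I would impose the repo constraint \eqref{eq:repoConstraint} and pass from $V^p$ to $V$. Under \eqref{eq:repoConstraint} the terms $\mathcal{X}_t^{e,k_1}\zeta_t^{i,k_1}(\tilde{B}_t^{i,k_1})^{-1}d\tilde{B}_t^{i,k_1}$ drop and $\psi_t^{i,k_1}B_t^{i,k_1}=-\xi_t^{i,k_1}S_t^{i,k_1}$, so the trading block becomes $\sum_{k_1,i}\xi_t^{i,k_1}(dK_t^{i,e,k_1}-S_t^{i,k_1}d\mathcal{X}_t^{e,k_1})$; similarly $\psi_t^{d_{k_3}+1,k_3}B_t^{d_{k_3}+1,k_3}=-\xi_t^{d_{k_3}+1,k_3}S_t^{d_{k_3}+1,k_3}$ turns the collateral-asset block into $\xi_t^{d_{k_3}+1,k_3}(dK_t^{d_{k_3}+1,e,k_3}-S_t^{d_{k_3}+1,k_3}d\mathcal{X}_t^{e,k_3})$. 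Using $V_t^c(\varphi)=\mathcal{X}_t^{e,k_3}(C_t^{k_3})^-$, so that $\tilde{V}_t^p(\varphi)=\tilde{V}_t(\varphi)-(B_t^e)^{-1}\mathcal{X}_t^{e,k_3}(C_t^{k_3})^-$, I would add $dV_t^c(\varphi)$ to $dV_t^p(\varphi)$: this cancels the $-dV_t^c(\varphi)$ term, converts $\tilde{V}_t^p(\varphi)dB_t^e$ into $\tilde{V}_t(\varphi)dB_t^e-(B_t^e)^{-1}\mathcal{X}_t^{e,k_3}(C_t^{k_3})^-dB_t^e$, and gathers all remaining collateral-driven increments into $d\bar{F}_t^h$ as in \eqref{eq:barFriskyassetcoll}, yielding \eqref{eq:dVriskycoll}. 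The main obstacle is purely organizational: bookkeeping the many integrands while the collateral-asset contributions reorganize into the same $dK-S\,d\mathcal{X}$ shape as the repo-constrained assets; no estimates or measure-theoretic input beyond the Itô product rule underlying \eqref{eq:dynamicsK} are required.
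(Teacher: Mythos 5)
Your proposal is correct and follows essentially the same route as the paper's own proof: substituting \eqref{eq:riskycoll} and \eqref{borrowlendingcond} into \eqref{eq:vcollwealth} to identify $V^p$, $V^c$ and \eqref{eq:vriskycoll}, solving for $\psi^{0,e}$, inserting it into the self-financing condition with \eqref{eq:dynamicsK}, cancelling the spurious $\mathcal{X}^{e,k_3}(B^{d_{k_3}+1,k_3})^{-1}\bigl(\xi^{d_{k_3}+1,k_3}S^{d_{k_3}+1,k_3}+\psi^{d_{k_3}+1,k_3}B^{d_{k_3}+1,k_3}\bigr)dB^{d_{k_3}+1,k_3}$ term, and finally passing from $\tilde{V}^p$ to $\tilde{V}$ under the repo constraint to assemble $d\bar{F}^h$. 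No gaps.
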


\begin{proof}
Combining assumption \eqref{eq:riskycoll} with \eqref{borrowlendingcond} and \eqref{eq:vcollwealth}, we receive
\begin{align*}
V_t(x,\varphi,A^{k_2},C^{k_3}) &= \sum_{k_1=1}^L \mathcal{X}_t^{e,k_1} \left(\sum_{i=1}^{d_{k_1}}\xi_t^{i,k_1}S_t^{i,k_1} + \sum_{j=0}^{d_{k_1}} \psi_t^{j,k_1}B_t^{j,k_1}\right) \\
&+ \mathcal{X}_t^{e,k_3}\left( \underbrace{ (C_t^{k_3})^- - (C_t^{k_3})^- - C_t^{k_3} + (C_t^{k_3})^+ }_{= (C_t^{k_3})^-}\right) \\
&= V_t^p(x,\varphi,A^{k_2}, C^{k_3}) + \mathcal{X}_t^{e,k_3}(C_t^{k_3})^-
\end{align*}
and thus \eqref{eq:vriskycoll} for any $t \in [0,T]$. By using
\begin{align}
\psi_t^{0,e} = \tilde{V}_t^p(x,\varphi,A^{k_2},C^{k_3}) - \sum_{k_1=1}^L \mathcal{X}_t^{e,k_1} \sum_{i=1}^{d_{k_1}}(B^e_t )^{-1}\zeta_t^{i,k_1} -  \sum_{k_1=1, k_1 \neq e}^L \mathcal{X}_t^{e,k_1} \psi_t^{0,k_1} (B_t^e)^{-1} B_t^{k_1},
\end{align}
the self financing condition \eqref{def:secondSelfFinancing} and the dynamics of $K^{d_{k_3}+1,e,k_3}$ given by \eqref{eq:dynamicsK}, the dynamics of the hedger's portfolio wealth $V^p(x,\varphi,A^{k_2},C^{k_3}):= V^p(\varphi)$ are given by
\begin{align*}
dV_t^p(\varphi)&=\tilde{V}_t^p(\varphi)dB_t^e + \sum_{k_1=1}^L \sum_{i=1}^{d_{k_1}} \left[ \xi_t^{i,k_1}dK_t^{i,e,k_1} + \mathcal{X}_t^{e,k_1} \zeta_t^{i,k_1} (\tilde{B}_t^{i,k_1})^{-1} d\tilde{B}_t^{i,k_1} + \psi_t^{i,k_1} B_t^{i,k_1} d\mathcal{X}_t^{e,k_1} \right] \\
&+\sum_{k_1=1,k_1 \neq e}^L\psi_t^{0,k_1}B^e_td\left(\frac{B^{k_1}\mathcal{X}^{e,k_1}}{B^{e}}\right)_t\\
&+ \xi_t^{d_{k_3}+1,k_3} \left(\mathcal{X}_t^{e,k_3} (dS_t^{d_{k_3}+1,k_3} + dD_t^{d_{k_3}+1,k_3}) +d\left[ S^{d_{k_3}+1,k_3}, \mathcal{X}^{e,k_3}  \right]_t + S_t^{d_{k_3}+1,k_3} d\mathcal{X}_t^{e,k_1}\right. \\
& \quad\quad\quad\quad\quad \left. - (B_t^{d_{k_3}+1,k_3})^{-1} \mathcal{X}_t^{e,k_3} S_t^{d_{k_3}+1,k_3} dB_t^{d_{k_3}+1,k_3} \right) + \psi_t^{d_{k_3}+1,k_3}B_t^{d_{k_3}+1,k_3}d\mathcal{X}_t^{e,k_3} \\
&+\underbrace{\xi_t^{d_{k_3}+1,k_3} (B_t^{d_{k_3}+1,k_3})^{-1}\mathcal{X}_t^{e,k_3} S_t^{d_{k_3}+1,k_3} dB_t^{d_{k_3}+1,k_3} + \mathcal{X}_t^{e,k_3} \psi_t^{d_{k_3}+1,k_3} dB_t^{d_{k_3}+1,k_3}}_{= \mathcal{X}_t^{e,k_3} \left( B_t^{d_{k_3}+1,k_3}\right)^{-1}  \underbrace{ \left(\xi_t^{d_{k_3}+1,k_3}S_t^{d_{k_3}+1,k_3} + \psi_t^{d_{k_3}+1,k_3} B_t^{d_{k_3}+1,k_3} \right) }_{\overset{\eqref{eq:riskycoll}}{=} 0} dB_t^{d_{k_3}+1,k_3} }\\
&+\mathcal{X}_t^{e,k_3}(B_t^{d_{k_3}+2,k_3,h})^{-1} (C_t^{k_3})^+dB_t^{d_{k_3}+2,k_3,h}+dF_t^c-C_t^{k_3}d\mathcal{X}_t^{e,k_3} + dA_t^{e,k_2} - dV_t^c(\varphi) \\
&=\tilde{V}_t^p(\varphi)dB_t^e + \sum_{k_1=1}^L \sum_{i=1}^{d_{k_1}} \left[ \xi_t^{i,k_1}dK_t^{i,e,k_1} + \mathcal{X}_t^{e,k_1} \zeta_t^{i,k_1} (\tilde{B}_t^{i,k_1})^{-1} d\tilde{B}_t^{i,k_1} + \psi_t^{i,k_1} B_t^{i,k_1} d\mathcal{X}_t^{e,k_1} \right] \\
&+\sum_{k_1=1,k_1 \neq e}^L\psi_t^{0,k_1}B^e_td\left(\frac{B^{k_1}\mathcal{X}^{e,k_1}}{B^{e}}\right)_t + \xi_t^{d_{k_3}+1,k_3}dK_t^{d_{k_3}+1,e,k_3}+ \psi_t^{d_{k_3}+1,k_3}B_t^{d_{k_3}+1,k_3}d\mathcal{X}_t^{e,k_3}\\
&+\mathcal{X}_t^{e,k_3}(B_t^{d_{k_3}+2,k_3,h})^{-1} (C_t^{k_3})^+dB_t^{d_{k_3}+2,k_3,h}+dF_t^c-C_t^{k_3}d\mathcal{X}_t^{e,k_3} + dA_t^{e,k_2} - dV_t^c(\varphi). 
\end{align*}
Let the repo constraint \eqref{eq:repoConstraint} be fulfilled. Hence the dynamics of the wealth process $V(\varphi):=V(x,\varphi,A^{k_2},C^{k_3})$ are given by
\begin{align*}
dV_t(\varphi) &=\tilde{V}_t^p(\varphi)dB_t^e + \sum_{k_1=1}^L \sum_{i=1}^{d_{k_1}} \left[ \xi_t^{i,k_1}dK_t^{i,e,k_1} -S_t^{i,k_1} d\mathcal{X}_t^{e,k_1} \right] \\
&+\sum_{k_1=1,k_1 \neq e}^L\psi_t^{0,k_1}B^e_td\left(\frac{B^{k_1}\mathcal{X}^{e,k_1}}{B^{e}}\right)_t + \xi_t^{d_{k_3}+1,k_3}\left[dK_t^{d_{k_3}+1,e,k_3}-S_t^{d_{k_3}+1,k_3}d\mathcal{X}_t^{e,k_3}\right]\\
&+\mathcal{X}_t^{e,k_3}(B_t^{d_{k_3}+2,k_3,h})^{-1} (C_t^{k_3})^+dB_t^{d_{k_3}+2,k_3,h}+dF_t^c-C_t^{k_3}d\mathcal{X}_t^{e,k_3} + dA_t^{e,k_2}. 
\end{align*}
Since we have $\tilde{V}_t^p(\varphi)=\tilde{V}_t(\varphi)-\mathcal{X}_t^{e,k_3}(B^e_t)^{-1}(C^{k_3}_t)^-$ we conclude.
\end{proof}

\section{Pricing under funding costs and collateralization}\label{sec:pricingExogColl}

Pricing in the absence of collateralization was discussed in Section \ref{sec:pricingNoColl}, where we defined the hedger's fair price $\bar{p}^e$. In this section we want to show that pricing in a multi-currency setting can be processed similarly to Proposition 5.1 of \cite{BieRut15}. 

Pricing will be analysed from the perspective of the hedger: given the contractually agreed cumulative stream of cashflows  $A^{k_2}-A^{k_2}_0$, the objective of the hedger is to find $p^e_0=A^{k_2}_0$ by means of replication, i.e. by investing according to an admissible trading strategy.

\begin{definition}[\cite{BieRut15} Definition 5.1]
\label{def:replicating}
Let $t \in [0,T]$ and $p_t^e$ be a $\cG_t$-measurable random variable. A self financing trading strategy 
\begin{align}
\label{def:V0replicating}
(V_t^0(x) + p_t^e, \varphi, A^{k_2} - A_t^{k_2}, C^{k_3})
\end{align}
replicates the collateralized contract $(A^{k_2},C^{k_3})$ on the interval $[t,T]$ whenever
\begin{align}
\label{eq:V0replicating}
V_T(V_t^0(x) + p_t^e, \varphi, A^{k_2} - A_t^{k_2}, C^{k_3}) = V_T^0(x).
\end{align}
\end{definition}

\begin{definition}[\cite{BieRut15} Definition 5.2]
Let $t \in [0,T]$. Any $\cG_t$-measurable random variable $p_t^e$ for which there exists a replicating strategy for $(A^{k_2},C^{k_3})$ over $[t,T]$  is called ex-dividend price at time $t$ of the contract $A^{e,k_2}$ associated with $\varphi$, also denoted by $S_t(x,\varphi,A^{k_2},C^{k_3})$.
\end{definition}

The following points from \cite{BieRut15} are important: first, any ex-dividend price $p^e_0$ of $A^{e,k_2}$ is also a hedger's fair price $\bar{p}^e_0$ for $A^{e,k_2}$ at time $0$. Secondly, the ex-dividend price in general might depend on the initial endowment $x$ and the choice of $\varphi$. However, for the sake of the present treatment, the ex-dividend price will be independent of the choice of $x$ and $\varphi$ and equivalent to the valuation ex dividend price defined below. Recall from section \eqref{arbitrage1} that the future value of the hedger's initial endowment is given as $V_t^{0}(x) = xB_t^e$ for any $t \in [0,T]$.

\begin{definition}[\cite{BieRut15} Definition 5.3]
Assume that an admissible self-financing trading strategy $(x,\varphi,A^{k_2},C^{k_3})$ replicates $(A^{k_2},C^{k_3})$ on $[0,T]$. Then the process $\hat{p}^e_t:=V_t(x,\varphi,A^{k_2},C^{k_3})-V^0_t(x)$ is called the valuation ex-dividend price of $A^{k_2}$, denoted by $\hat{S}_t(x,\varphi,A^{k_2},C^{k_3})$.
\end{definition}

The following assumptions are crucial for the next steps:
\begin{itemize}
\item[(i)] \label{PricingAssump1} The assumptions of Proposition \ref{prop:aoabasemkt} are met. This means in particular that the exists a probability measure $\QQ^e$ on $(\Omega,\cG_T)$, such that the processes \eqref{eq:firstMartingale} and \eqref{eq:secondMartingale}, i.e.
\begin{align*} 
\left(\int_{(0,t]}\left(\mathcal{X}^{e,k_1}_ud\left(\frac{S^{i,k_1}}{B^{i,k_1}}\right)_u+\frac{\mathcal{X}^{e,k_1}_u}{B^{i,k_1}_u}dD^{i,k_1}_u+d\left[\frac{S^{i,k_1}}{B^{i,k_1}},\mathcal{X}^{e,k_1}\right]_u\right)\right)_{0\leq t\leq T},
\end{align*}
\begin{align*}
\left(\frac{\mathcal{X}^{e,k_1}_tB^{k_1}_t}{B^e_t}\right)_{0\leq t\leq T},
\end{align*}
$i=1,\ldots,d_{k_1}$, $k_1=1\ldots,L$ are local martingales under the assumption that the repo constraint \eqref{eq:repoConstraint} is fulfilled.
\item[(ii)]\label{PricingAssump2} The collateral process $C^{k_3}$ is independent of the hedger's portfolio $\varphi$.
\end{itemize}

\begin{notation} \
\begin{itemize}
\item[•] In the sequel we will make use of the notation $\hat{A}^c$ or $\bar{A}^c$ to stress out the impact of the collateral in form of cash or respectively risky asset $S^{d_{k_3}+1,k_3}$, independent of its collateral convention and the contractual cash flows. In the cash collateral case we have that $\hat{A}^c \in \{\hat{F}^{s}+A^{e,k_2}, \hat{F}^h + A^{e,k_2}\}$ and if the hedger posts or receives risky asset collateral, $\bar{A}^c$ is given by $\bar{A}^c \in \{\bar{F}^s + A^{e,k_2}, \bar{F}^h + A^{e,k_2} \}$. To ensure that the integrals over the FX-processes $\mathcal{X}^{e,k_1}$ are well-defined, we assume those to be finite. The cash account $B^e$ remains an increasing process.
\item[•] Let $\QQ^e$ be a martingale measure for the discounted cumulative dividend price processes $\hat{S}^{i,cld,e,k_1}$ with $i \in \{1, \ldots d_{k_1}\}_{k_1 = 1, \ldots,L}\cup \{ d_{k_3}+1\}_{k_3 \in \{1, \ldots, L\}}$.
\item[•] For any $t \in [0,T]$, denote the ex-dividend price by $S_t(x,\varphi,A^{k_2},C^{k_3}):= S_t(A^{k_2},C^{k_3})$.
\end{itemize}
\end{notation}

\textcolor{black}{Since we are not considering a particular model setup, we assume that the random variables considered in the sequel  are integrale and we use the notation  $\EE_t^{\QQ^e}(\cdot) := \EE_{\QQ^e}(\cdot \mid \cG_t)$ to indicate the conditional expectation of some integrable random variable for any $t \in [0,T]$}.

\begin{proposition}\label{prop:pricinCollExo}
Assume that (i)-(ii) hold and the collateralized contract $(A^{k_2},C^{k_3})$ can be replicated by an admissible trading strategy $(x,\varphi,A^{k_2},C^{k_3})$ on the interval $[0,T]$. If the stochastic integrals with respect to \eqref{eq:firstMartingale}, for the indices $i = 1, \ldots, d_{k_1}$ with $k_1 = 1, \ldots, L$ and for the risky asset collateral case $k_3 \in \{1, \ldots,L\}$ are $\QQ^e$-martingales, then its corresponding ex-dividend price process $S(x,\varphi,A^{k_2},C^{k_3})$ is independent of $(x,\varphi)$ and equals 
\begin{align}
\label{eq:pricingcashcoll}
S_t(A,C)= -B_t^e \EE_t^{\QQ^e} \left( \int_{(t,T]} (B_u^e)^{-1} d\hat{A}_u^{c} \right),
\end{align}
for the cash collateral case and respectively 
\begin{align}
\label{eq:pricingriskycoll}
S_t(A,C)= -B_t^e \EE_t^{\QQ^e} \left( \int_{(t,T]} (B_u^e)^{-1} d\bar{A}_u^{c} \right),
\end{align}
for the risky asset collateral case.
\end{proposition}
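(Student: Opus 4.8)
The plan is to combine the replication condition with the dynamics of the hedger's wealth derived in Section~\ref{sec:CollateralizedTrading}, and then apply the martingale property of the resulting discounted wealth process under $\QQ^e$. I would treat the cash collateral case \eqref{eq:pricingcashcoll} and the risky asset case \eqref{eq:pricingriskycoll} in parallel, since the structure of the argument is identical once one knows the wealth dynamics; the only difference is which of the propositions from Section~\ref{sec:CollateralizedTrading} is invoked and whether an extra term $\xi^{d_{k_3}+1,k_3}_t[dK^{d_{k_3}+1,e,k_3}_t - S^{d_{k_3}+1,k_3}_t d\mathcal{X}^{e,k_3}_t]$ appears.

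\textbf{Step 1: Passing to discounted quantities.} Starting from the wealth dynamics under the repo constraint — equation \eqref{eq:VcashSegregation} or \eqref{eq:VcashRehypothecation} for the cash case, and \eqref{eq:dVriskycoll} for the risky asset case — I would divide by $B^e$ and use the Itô product rule (exactly as in the proof of Proposition~\ref{prop:PrelimRes}) to cancel the $\tilde V_t(\varphi)\,dB^e_t$ term. This yields, for cash collateral,
\begin{align*}
d\tilde V_t(\varphi) &= \sum_{k_1=1}^L\sum_{i=1}^{d_{k_1}}\xi^{i,k_1}_t\frac{B^{i,k_1}_t}{B^e_t}\left(\mathcal{X}^{e,k_1}_t d\left(\frac{S^{i,k_1}}{B^{i,k_1}}\right)_t + \frac{\mathcal{X}^{e,k_1}_t}{B^{i,k_1}_t}dD^{i,k_1}_t + d\left[\frac{S^{i,k_1}}{B^{i,k_1}},\mathcal{X}^{e,k_1}\right]_t\right)\\
&\quad + \sum_{k_1=1,k_1\neq e}^L\psi^{0,k_1}_t d\left(\frac{\mathcal{X}^{e,k_1}B^{k_1}}{B^e}\right)_t + (B^e_t)^{-1}d\hat A^c_t,
\end{align*}
where I have used the identity $dK^{i,e,k_1}_t - S^{i,k_1}_t d\mathcal{X}^{e,k_1}_t = B^{i,k_1}_t\bigl(\mathcal{X}^{e,k_1}_t d(S^{i,k_1}/B^{i,k_1})_t + (B^{i,k_1}_t)^{-1}\mathcal{X}^{e,k_1}_t dD^{i,k_1}_t + d[S^{i,k_1}/B^{i,k_1},\mathcal{X}^{e,k_1}]_t\bigr)$, which follows from \eqref{eq:gainProcessAsset} by the product rule. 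The analogous computation for the risky asset case produces one more term of the same form indexed by $d_{k_3}+1$, plus $(B^e_t)^{-1}d\bar A^c_t$.

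\textbf{Step 2: Martingale property and integration.} By assumption (i) together with the hypothesis that the relevant stochastic integrals are $\QQ^e$-martingales (not merely local martingales), every term in $d\tilde V_t(\varphi)$ except $(B^e_t)^{-1}d\hat A^c_t$ (resp.\ $(B^e_t)^{-1}d\bar A^c_t$) is a $\QQ^e$-martingale increment. Hence $\tilde V_t(\varphi) - \int_{(0,t]}(B^e_u)^{-1}d\hat A^c_u$ is a $\QQ^e$-martingale. Taking $\EE^{\QQ^e}_t$ of the increment from $t$ to $T$ gives
\begin{align*}
\EE^{\QQ^e}_t\bigl(\tilde V_T(\varphi)\bigr) - \tilde V_t(\varphi) = \EE^{\QQ^e}_t\left(\int_{(t,T]}(B^e_u)^{-1}d\hat A^c_u\right).
\end{align*}

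\textbf{Step 3: Applying the replication and terminal conditions.} Here I would use Definition~\ref{def:replicating}: the replicating strategy satisfies $V_T(V^0_t(x)+p^e_t,\varphi,A^{k_2}-A^{k_2}_t,C^{k_3}) = V^0_T(x) = xB^e_T$, while by the valuation ex-dividend price definition $\hat p^e_t = V_t(x,\varphi,A^{k_2},C^{k_3}) - V^0_t(x)$. Rearranging, $\tilde V_T(\varphi) = x$ and $\tilde V_t(\varphi) = x + (B^e_t)^{-1}S_t(A,C)$, where I also use that $C^{k_3}_T = 0$ so the collateral leg has no terminal jump and that the ex-dividend price at time $t$ corresponds to the residual cashflow stream $A^{k_2}-A^{k_2}_t$ (so the integral in Step~2 runs over $(t,T]$, matching $d\hat A^c_u$ which collects $dA^{e,k_2}_u$ on that interval together with the collateral interest terms $d\hat F^s_u$ or $d\hat F^h_u$). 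Substituting into the identity of Step~2 gives $-(B^e_t)^{-1}S_t(A,C) = \EE^{\QQ^e}_t\bigl(\int_{(t,T]}(B^e_u)^{-1}d\hat A^c_u\bigr)$, which is \eqref{eq:pricingcashcoll} after multiplying by $-B^e_t$; the risky asset case is identical with $\hat A^c$ replaced by $\bar A^c$. Finally, independence of $(x,\varphi)$ is immediate since the right-hand side of \eqref{eq:pricingcashcoll}–\eqref{eq:pricingriskycoll} involves only $B^e$, $\mathcal{X}^{e,k_3}$, the collateral process $C^{k_3}$ (which by assumption (ii) does not depend on $\varphi$), and the contractual cashflows.

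\textbf{The main obstacle} I anticipate is the careful bookkeeping in Step~3: matching the interval of integration and the precise content of $d\hat A^c$ (resp.\ $d\bar A^c$) against what appears in the wealth dynamics when the contract is shifted to $A^{k_2}-A^{k_2}_t$, and verifying that the terms involving $d\mathcal{X}^{e,k_3}$ and the collateral accounts $B^{d_{k_3}+3,k_3}$, $B^{d_{k_3}+2,k_3,\cdot}$ are exactly the ones absorbed into the definitions \eqref{eq:hatFsegregation}, \eqref{eq:hatFrehypothecation}, \eqref{eq:barFriskyassetcoll}. The genuinely analytic content — that a local martingale bounded below plus a finite-variation drift, when the drift is integrated out, yields a conditional-expectation formula — is routine once the decomposition in Step~1 is in place; the strengthening from local martingale to true martingale is exactly what the extra hypothesis in the statement provides, so no localization argument is needed.
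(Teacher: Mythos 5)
Your proposal is correct and follows essentially the same route as the paper's proof: pass to the discounted wealth $\tilde V=(B^e)^{-1}V$, use the collateralized wealth dynamics under the repo constraint to isolate the $\QQ^e$-martingale stochastic-integral terms from the $(B^e)^{-1}d\hat A^c$ (resp.\ $d\bar A^c$) increment, and then combine the replication condition with the (valuation) ex-dividend price identity before taking $\cG_t$-conditional expectations over $(t,T]$, the independence of $(x,\varphi)$ being read off the resulting formula. The only cosmetic difference is that you phrase Step 2 as ``$\tilde V-\int(B^e)^{-1}d\hat A^c$ is a $\QQ^e$-martingale'' and evaluate at the endpoints, whereas the paper integrates the dynamics from $t$ to $T$ and identifies the left-hand side with $-(B^e_t)^{-1}p^e_t$ directly; these are the same argument.
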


\begin{proof}
We will start with the case, where the hedger posts or receives collateral in form of shares of risky asset $S ^{d_{k_3}+1,k_3}$ since the cash collateral case will follow immediately. Assume that there exists an admissible trading strategy $(x,\varphi,A^{k_2},C^{k_3})$ replicating the collateralized contract $(A^{k_2},C^{k_3})$ on the interval $(t,T]$. With \eqref{eq:definitionKi}, i.e. $dK_t^{i,e,k_1} = B_t^{e,k_1} d\hat{S}_t^{i,cld,e,k_1}$ for $i,k_1,t$ fulfilling the usual conditions including the risky asset collateral $S_t^{d_{k_3}+1,k_3}$, the dynamics of the discounted wealth process $\tilde{V}(x,\varphi,A^{k_2},C^{k_3}):= \tilde{V}(\varphi)$ are given by
\begin{align}
\begin{aligned}
d\tilde{V}_t(\varphi) &= d((B_t^e)^{-1}V_t(\varphi)) = V_t(\varphi) d(B_t^e)^{-1} + (B_t^e)^{-1} d\bar{A}_t^{c}\\
&= \sum_{k_1=1}^L \sum_{i=1}^{d_{k_1}} \xi_t^{i,k_1} \left(\tilde{B}_t^{i,k_1} d\hat{S}_t^{i,cld,e,k_1} - S_t^{i,k_1} d\mathcal{X}_t^{e,k_1} \right) + \sum_{k_1=1, k_1 \neq e}^L \psi_t^{0,k_1} B_t^e d\left(\frac{B^{k_1}\mathcal{X}^{e,k_1}}{B^e}  \right)_t \\
&+ \xi_t^{d_{k_3}+1,k_3} \left( \tilde{B}_t^{d_{k_3}+1,k_3}d\hat{S}_t^{d_{k_3}+1, cld,e ,k_3} - S_t^{d_{k_3}+1,k_3}d\mathcal{X}_t^{e,k_3} \right) \\
&+ (B_t^e)^{-1}d\bar{A}_t^{c},
\end{aligned}
\end{align}
by using Ito's formula and the dynamics of the hedger's wealth process \eqref{eq:dVriskycoll}. By definition, this process is self financing and fulfils the repo constraint \eqref{eq:repoConstraint}. We fix some $t \in [0,T)$. By assumption, there exists a replicating trading strategy 
\begin{align*}
(V_t^0(x) + p_t^e, \varphi, A^{k_2} - A_t^{k_2},C^{k_3}),
\end{align*}
in the sense of Definition \ref{def:replicating} fulfilling $V_T(V_t^0(x) + p_t^e,\varphi, A^{k_2} - A_t^{k_2},C^{k_3})=V_T^0(x) $ and hence
\begin{align*}
V_T(x,\varphi,A^{k_2},C^{k_3}) = V_T(x + (B_T^e)^{-1} p_T^e) = (x + (B_T^e)^{-1} p_T^e)B_T^e,
\end{align*}
leading to 
\begin{align*}
- (B_t^e)^{-1} p_t^e &= (B_T^{e})^{-1}V_T^e(x, \varphi, A^{k_2}, C^{k_3}) - (B_t^{e})^{-1}V_t^e(x, \varphi, A^{k_2}, C^{k_3})\\
&= \sum_{k_1=1}^L \sum_{i=1}^{d_{k_1}} \int_{(t,T]} \xi_u^{i,k_1} \left( \tilde{B}_u^{i,k_1} d\hat{S}_u^{i,cld,e,k_1}  - S_u^{i,k_1}d\mathcal{X}_u^{e,k_1}\right)\\
& +  \sum_{k_1=1, k_1 \neq e}^L  \int_{(t,T]} \psi_u^{0,k_1} B_u^e d\left(\frac{B^{k_1}\mathcal{X}^{e,k_1}}{B^e}  \right)_u\\
&+ \int_{(t,T]} \xi_u^{d_{k_3}+1,k_3} \left( \tilde{B}_u^{d_{k_3}+1,k_3} d\hat{S}_u^{i,cld,e,k_3} - S_u^{d_{k_1}+1,k_3} d\mathcal{X}_u^{e,k_3} \right)+ \int_{(t,T]} (B_u^e)^{-1} d\bar{A}_u^c
\end{align*}
by using that 
\begin{align}
\begin{aligned}
\tilde{B}_t^{i,k_1} d\hat{S}_t^{i,cld,e,k_1} - S_t^{i,k_1} d\mathcal{X}_t^{e,k_1} = \mathcal{X}_t^{e,k_1} d\left(\frac{S^{i,k_1}}{B^{i,k_1}}  \right)_t + \frac{\mathcal{X}_t^{e,k_1}}{B_t^{i,k_1}}dD_t^{i,k_1} +  d\left[\frac{S^{i,k_1}}{B^{i,k_1}},\mathcal{X}^{e,k_1} \right]_t ,
\end{aligned}
\end{align}
for all indices mentioned above as a direct consequence out of equation \eqref{eq:vNetLocalMartingale} and \eqref{eq:dynamicsK}.
Since the integrals with respect to \eqref{eq:firstMartingale} and \eqref{eq:secondMartingale} are true $\QQ^e$-martingales for $i=1, \ldots, d_{k_1}$, $k_1 = 1, \ldots, L$ and $k_3 \in \{1, \ldots, L \}$ , the ex-dividend price of the collateralized contract $(A^{k_2}, C^{k_3})$ can be computed via
\begin{align*}
S_t (&A^{k_2},C^{k_3})\\
& = - B_t^e \EE_{\QQ^e} \left( \sum_{k_1=1}^L \sum_{i=1}^{d_{k_1}} \int_{(0,T]} \xi_u^{i,k_1} \left( \mathcal{X}_u^{e,k_1} d\left(\frac{S^{i,k_1}}{B^{i,k_1}}  \right)_u + \frac{\mathcal{X}_u^{e,k_1}}{B_u^{i,k_1}}dD_u^{i,k_1} +  d\left[\frac{S^{i,k_1}}{B^{i,k_1}},\mathcal{X}^{e,k_1} \right]_u \right) \right. \\
&\left. - \sum_{k_1=1}^L \sum_{i=1}^{d_{k_1}} \int_{(0,t]} \xi_u^{i,k_1}  \left( \mathcal{X}_u^{e,k_1} d\left(\frac{S^{i,k_1}}{B^{i,k_1}}  \right)_u + \frac{\mathcal{X}_u^{e,k_1}}{B_u^{i,k_1}}dD_u^{i,k_1} +  d\left[\frac{S^{i,k_1}}{B^{i,k_1}},\mathcal{X}^{e,k_1} \right]_u \right) \mid \cG_t \right) \\
& - B_t^e \EE_{\QQ^e} \left(\int_{(0,T]} \xi_u^{d_{k_3}+1,k_3} \left(\mathcal{X}_u^{e,k_3} d\left(\frac{S^{d_{k_3}+1,k_3}}{B^{d_{k_3}+1,k_3}}  \right)_u\right.\right.\\
&\left.\left.\quad\quad\quad\quad + \frac{\mathcal{X}_u^{e,k_3}}{B_u^{d_{k_3}+1,k_3}}dD_u^{d_{k_3}+1,k_3} +  d\left[\frac{S^{d_{k_3}+1,k_3}}{B^{d_{k_3}+1,k_3}},\mathcal{X}^{e,k_3} \right]_u \right) \right.\\
&   - \int_{(0,t]} \xi_u^{d_{k_3}+1,k_3} \left(\mathcal{X}_u^{e,k_3} d\left(\frac{S^{d_{k_3}+1,k_3}}{B^{d_{k_3}+1,k_3}}  \right)_u\right.\\
&\quad\quad\quad\quad\left.\left. + \frac{\mathcal{X}_u^{e,k_3}}{B_u^{d_{k_3}+1,k_3}}dD_u^{d_{k_3}+1,k_3} +  d\left[\frac{S^{d_{k_3}+1,k_3}}{B^{d_{k_3}+1,k_3}},\mathcal{X}^{e,k_3} \right]_u \right) \mid \cG_t \right)\\
& - B_t^e \EE_{\QQ^e} \left( \int_{(0,T]} (B_u^e)^{-1} d\bar{A}_u^c  - \int_{(0,t]} (B_u^e)^{-1} d\bar{A}_u^c \mid \cG_t \right)\\
=& -B_t^e \EE_{\QQ^e} \left( \int_{(t,T]} (B_u^e)^{-1} d\bar{A}_u^c \mid \cG_t \right) = -B_t^e \EE_t^{\QQ^e} \left( \int_{(t,T]} (B_u^e)^{-1} d\bar{A}_u^{c} \right),
\end{align*}
for any $t \in [0,T]$ by using the martingale and measurability properties and is independent of $(x, \varphi)$. By following the same steps for the cash collateral case, \eqref{eq:pricingcashcoll} follows immediately.
\end{proof}
\color{black}

\begin{remark}\label{eq:firstPricingWithColl}
Note that in case of absolute continuity of all repo accounts, the ex-dividend price process $S_t(A^{k_2},C^{k_3})$ for any $t \in [0,T]$ is given as follows:
\begin{itemize}
\item[1)] \underline{Cash collateral under segregation:}
By using equation \eqref{eq:hatFsegregation}, we write
\begin{align}
\begin{aligned}
S_t(A^{k_2},C^{k_3})=& -B_t^e \EE_t^{\QQ^e} \left[ \int_{(t,T]} (B_u^e)^{-1} dA_u^{e,k_2} \right] - B_t^e \EE_t^{\QQ^e} \left[ \int_{(t,T]} (B_u^e)^{-1} d\hat{F}_u^s \right]\\
=& -B_t^e \EE_t^{\QQ^e} \left[ \int_{(t,T]} (B_u^e)^{-1} dA_u^{e,k_2} \right]\\
& -B_t^e \EE_t^{\QQ^e} \left[ \int_{(t,T]} \left[ \left( r_u^{d_{k_3}+2,k_2,s} (C_u^{k_3})^+  - r_u^{d_{k_3}+3,k_3}(C_u^{k_3})^- \right.\right.\right.\\
& \left.\left.\left.+ r_u^{c,k_3,l}(C_u^{k_3})^- - r_u^{c,k_3,b}(C_u^{k_3})^+ \right]\mathcal{X}_u^{e,k_3}du - C_u^{k_3} d\mathcal{X}_u^{e,k_3} \right) \frac{1}{B_u^e}\right]\\
=& -B_t^e \EE_t^{\QQ^e} \left[ \int_{(t,T]} (B_u^e)^{-1} dA_u^{e,k_2} \right]\\
& - B_t^e \EE_t^{\QQ^e} \left[ \int_{(t,T]} \frac{1}{B_u^e} \left[\left( r_u^{d_{k_3}+2,k_3,s} - r_u^{c,k_3,b}\right) (C_u^{k_3})^+ \right. \right.\\
& \left. \left. - \left(r_u^{d_{k_3}+3,k_3} - r_u^{c,k_3,l}\right) (C_u^{k_3})^-  \right]\mathcal{X}_u^{e,k_3}du - \int_{(t,T]} \frac{C_u^{k_3}}{B_u^e} d\mathcal{X}_u^{e,k_3} \right].
\end{aligned}
\end{align}
\item[2)] \underline{Cash collateral under rehypothecation:} With similar calculations by using equation \eqref{eq:hatFrehypothecation}, we get
\begin{align}
\begin{aligned}
S_t(A^{k_2},C^{k_3}) =& -B_t^e \EE_t^{\QQ^e} \left[ \int_{(t,T]} (B_u^e)^{-1} dA_u^{e,k_2} \right]\\
& -B_t^e \EE_t^{\QQ^e} \left[ \int_{(t,T]} \frac{1}{B_u^e} \left[ \left( r_u^e - r_u^{c,k_3,b} \right)(C_u^{k_3})^+  \right. \right.\\
&\left. \left. - \left(r_u^{d_{k_3}+3,k_3} - r_u^{c,k_3,l}  \right)(C_u^{k_3})^- \right] \mathcal{X}_u^{e,k_3} du - \int_{(t,T]} \frac{C_u^{k_3}}{B_u^e} d\mathcal{X}_u^{e,k_3} \right].
\end{aligned}
\end{align}
\item[3)] \underline{Risky asset collateral under segregation:} By using equation \eqref{eq:barFriskyassetcoll}, we derive
\begin{align}
\begin{aligned}
S_t(A^{k_2},C^{k_3}) = &-B_t^e \EE_t^{\QQ^e} \left[ \int_{(t,T]} (B_u^e)^{-1} dA_u^{e,k_2} \right]\\
& -B_t^e \EE_t^{\QQ^e} \left[ \int_{(t,T]} \frac{1}{B_u^e} \left[ \left(r_u^{d_{k_3}+2,k_3,s} - r_u^{c,k_3,b}  \right)(C_u^{k_3})^+ \right. \right.\\
& \left. \left. - \left(r_u^{d_{k_3}+1,k_3} - r_u^{c,k_3,l} \right) (C_u^{k_3})^- \right] \mathcal{X}_u^{e,k_3}du - \int_{(t,T]} \frac{C_u^{k_3}}{B_u^e} d\mathcal{X}_u^{e,k_3} \right].
\end{aligned}
\end{align}
\item[4)] \underline{Risky asset collateral under rehypothecation:} Analogously, replacing index $s$ by $h$, we receive
\begin{align}
\begin{aligned}
S_t(A^{k_2},C^{k_3}) = &-B_t^e \EE_t^{\QQ^e} \left[ \int_{(t,T]} (B_u^e)^{-1} dA_u^{e,k_2} \right]\\
& -B_t^e \EE_t^{\QQ^e} \left[ \int_{(t,T]} \frac{1}{B_u^e} \left[ \left(r_u^{d_{k_3}+2,k_3,h} - r_u^{c,k_3,b}  \right)(C_u^{k_3})^+ \right. \right.\\
& \left. \left. - \left(r_u^{d_{k_3}+1,k_3} - r_u^{c,k_3,l} \right) (C_u^{k_3})^- \right] \mathcal{X}_u^{e,k_3}du - \int_{(t,T]} \frac{C_u^{k_3}}{B_u^e} d\mathcal{X}_u^{e,k_3} \right].
\end{aligned}
\end{align}
\end{itemize}
\end{remark}

\section{Diffusion models}\label{sec:DiffusionModels}

The aim of the present section is to provide concrete examples concerning the valuation of cross currency products. %on the other side we would like to discuss the case of endogenous and perfect collateral: in the previous section collateral was assumed to be an exegenoulsy given process without a relation to the mark to market of the contract. More realistically, collateral depends on the value of the contract. In this case the valuation equation becomes a fixed-point equation, or, more precisely, the problem must be treated as a forward backward stochastic differential equation (FBSDE). 
The diffusion model we present can be thought of as a footprint to construct cross currency simulation models for the computation of various valuation adjustments known in the literature under the acronym of xVA.

\textcolor{black}{We fix a filtered probability space $(\Omega, \mathcal{G}, \mathbb{G}, \mathbb{P})$, where the filtration $\mathbb{G}=(\mathcal{G})_{t \in[0, T]}$ satisfies the usual conditions. We assume that $\mathcal{G}_0$ is trivial.} We will assume that all cash accounts are absolutely continuous with respect to the Lebesgue measure, so that they can be written in the form $dB_t^\cdot=r^\cdot_t B^\cdot_tdt$ for some $\GG$-adapted RCLL bounded processes $r^\cdot$. In each currency area $k_1=1,\ldots,L$ we postulate the existence of $d_{k_1}$ traded risky assets $S^{i,k_1}$. For the collateral currency $k_3$ we also postulate the existence of the traded assets $S^{d_{k_3}+1,k_1}$ and $S^{d_{k_3}+2,k_1}$. Finally, we also assume that the repo constraint \eqref{eq:repoConstraint} is satisfied.

\subsection{Model Dynamics and martingale measure}

We construct the model and the domestic martingale measure $\mathbb{Q}^e$. In line with the single currency model of \cite{BieRut15} we postulate the following dynamics for each asset $S^{i,k_1}$ under the physical measure $\mathbb{P}$. \textcolor{black}{Risky assets evolve according to SDEs, defined on $(\Omega, \mathcal{G}, \mathbb{G}, \mathbb{P})$, of the form}
\begin{align}
dS^{i,k_1}_t=S^{i,k_1}_t\left(\mu^{S^{i,k_1}}_tdt+\sigma^{S^{i,k_1}}_tdW^{S^{i,k_1},\mathbb{P}}_t\right),
\end{align}
for $k_1=1,\ldots ,L$, $k_3\in\{1,\ldots ,L\}$,  $i=1,\ldots ,d_{k_1}$ for the hedging assets and $d_{k_3}+1,d_{k_3}+2$ for the collateral assets. The \textcolor{black}{$\GG$-adapted} drift functions $\mu^{S^{i,k_1}}$ are bounded while the \textcolor{black}{$\GG$-adapted} volatility functions $\sigma^{S^{i,k_1}}$ are strictly positive and bounded. The \textcolor{black}{$\mathcal{G}_t$-}Brownian motions $W^{S^{i,k_1},\mathbb{P}}$ are correlated via
\begin{align*}
d\left\langle W^{S^{i, k_{1}}, \mathbb{P}}, W^{S^{i^\prime, k^\prime_1}, \mathbb{P}}\right\rangle_t = \rho_t^{S^{i,k_1},S^{i^\prime,k^\prime_1}}dt,
\end{align*}
for $-1\leq \rho^{S^{i,k_1},S^{i^\prime,k^\prime_1}}\leq 1$.
The \textcolor{black}{$\GG$-adapted} dividend processes of the risky assets are given by $D^{i,k_1}_t=\int_0^t \kappa^{i,k_1}_uS^{i,k_1}_udu$, where the bounded processes $\kappa^{i,k_1}$ represent dividend yields. We also assume that exchange rates evolve according to \textcolor{black}{to SDEs, defined on $(\Omega, \mathcal{G}, \mathbb{G}, \mathbb{P})$, of the form}
\begin{align}
d\mathcal{X}^{e,k_1}_t=\mathcal{X}^{e,k_1}_t\left(\mu^{\mathcal{X}^{e,k_1}}_tdt+\sigma^{\mathcal{X}^{e,k_1}}_tdW^{\mathcal{X}^{e,k_1},\mathbb{P}}_t\right),
\end{align}
with analogous assumptions on drifts and volatilities. \textcolor{black}{We allow for correlations among exchange rates, via
\begin{align*}
&d\left\langle W^{\mathcal{X}^{e,k_1}, \mathbb{P}}, W^{\mathcal{X}^{e,k^\prime_1}, \mathbb{P}}\right\rangle_t = \rho_t^{\mathcal{X}^{e,k_1},\mathcal{X}^{e,k^\prime_1}}dt,\\
&d\left\langle W^{\mathcal{X}^{e,k_1}, \mathbb{P}}, W^{S^{i,k^\prime_1}, \mathbb{P}}\right\rangle_t = \rho_t^{\mathcal{X}^{e,k_1},S^{i,k^\prime_1}}dt,
\end{align*}
for $-1\leq\rho^{\mathcal{X}^{e,k_1},\mathcal{X}^{e,k^\prime_1}}\leq 1$ and  $-1\leq\rho^{\mathcal{X}^{e,k_1},S^{i,k^\prime_1}}\leq 1$}, for $k_1,k_1^\prime=1,\ldots, L$ and $i=1,\ldots,d_{k_1}$. The correlation coefficient functions are such that the resulting correlation matrix is positive semi-definite.

The following generalizes Lemma 5.2 in \cite{BieRut15}.

\begin{lemma}\label{lem:dynamicsTradedAssets}
Under the measure $\mathbb{Q}^e$ the following holds.
\begin{enumerate}
\item The dynamics of domestic assets $S^{i,e}$ are of the form
\begin{align}
dS^{i,e}_t=S^{i,e}_t\left((r^{i,k_1}_t-\kappa^{i,k_1}_t)dt+\sigma^{S^{i,e}}_tdW^{S^{i,e},\mathbb{Q}^e}_t\right).
\end{align}
Equivalently
\begin{align*}
d\hat{S}^{i,cld,e}_t=\hat{S}^{i,cld,e}_t\sigma^{S^{i,e}}_tdW^{S^{i,e},\mathbb{Q}^e}_t
\end{align*}
and
\begin{align*}
d{K}^{i,e,e}_t=dS^{i,e}_t-r^{i,e}_tS^{i,e}_tdt+\kappa^{i,e}_tS^{i,e}_tdt=S^{i,e}_t\sigma^{S^{i,e}}_tdW^{S^{i,e},\mathbb{Q}^e}_t
\end{align*}
are local martingales under $\mathbb{Q}^e$.
\item The dynamics of foreign assets $S^{i,k_1}$ are of the form
\begin{align}
dS^{i,k_1}_t=S^{i,k_1}_t\left((r^{i,k_1}_t-\kappa^{i,k_1}_t-\rho^{S^{i,k_1},\mathcal{X}^{e,k_1}}_t\sigma^{S^{i,k_1}}_t\sigma^{\mathcal{X}^{e,k_1}}_t)dt+\sigma^{S^{i,k_1}}_tdW^{S^{i,k_1},\QQ^e}_t\right)
\end{align}
and the processes
\begin{align*}
d K_{t}^{i, e, k_{1}}-S_{t}^{i, k_{1}} d \mathcal{X}_{t}^{e, k_{1}}=\mathcal{X}^{e,k_1}_tS^{i,k_1}_t\sigma^{S^{i,k_1}}_tdW^{S^{i,k_1},\QQ^e}_t
\end{align*}
are local martingales under $\mathbb{Q}^e$.
\item The dynamics of all exchange rates are of the form
\begin{align}
d\mathcal{X}^{e,k_1}_t=\mathcal{X}^{e,k_1}_t\left((r^e_t-r^{k_1}_t)dt+\sigma^{\mathcal{X}^{e,k_1}}_tdW^{\mathcal{X}^{e,k_1},\QQ^e}_t\right)
\end{align}
and the processes
\begin{align*}
d\left(\frac{\mathcal{X}^{e,k_1}_tB^{k_1}_t}{B^e_t}\right)=\frac{\mathcal{X}^{e,k_1}_tB^{k_1}_t}{B^e_t}\sigma^{\mathcal{X}^{e,k_1}}_tdW^{\mathcal{X}^{e,k_1},\QQ^e}_t
\end{align*}
are local martingales under $\mathbb{Q}^e$.
\end{enumerate}
\end{lemma}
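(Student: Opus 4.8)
The plan is to follow the proof of Lemma 5.2 in \cite{BieRut15}, adapted to the multi-currency setting. The starting point is the characterisation of $\QQ^e$ from Proposition \ref{prop:aoabasemkt}: $\QQ^e\sim\PP$, and under $\QQ^e$ the processes \eqref{eq:firstMartingale} and \eqref{eq:secondMartingale} — together with their analogues for the collateral assets $i=d_{k_3}+1,d_{k_3}+2$ — are local martingales. Here $\QQ^e$ is obtained by a Girsanov transformation; since all drift functions are bounded and all volatilities are bounded and bounded away from zero, the market-price-of-risk process fixed by the drift conditions below is bounded, Novikov's criterion holds, and the associated density defines an equivalent measure. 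Under $\QQ^e$ each driving Brownian motion acquires a drift shift while volatilities and correlations are preserved, so every $S^{i,k_1}$ and $\mathcal{X}^{e,k_1}$ keeps its diffusion coefficient, and the task reduces to identifying the new drifts from the local-martingale requirements and then reading off the displayed representations.

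I would first handle part (3). Applying Ito's formula to $\mathcal{X}^{e,k_1}B^{k_1}/B^e$ and using that $B^{k_1},B^e$ are continuous of finite variation (hence contribute no covariation), one obtains $d\!\left(\frac{\mathcal{X}^{e,k_1}B^{k_1}}{B^e}\right)_t=\frac{B^{k_1}_t}{B^e_t}\left(d\mathcal{X}^{e,k_1}_t+\mathcal{X}^{e,k_1}_t(r^{k_1}_t-r^e_t)\,dt\right)$. Since the left side must be a $\QQ^e$-local martingale by \eqref{eq:secondMartingale}, the $\QQ^e$-drift of $\mathcal{X}^{e,k_1}$ is forced to be $\mathcal{X}^{e,k_1}_t(r^e_t-r^{k_1}_t)$, which is the third displayed SDE; substituting back gives $d\!\left(\frac{\mathcal{X}^{e,k_1}B^{k_1}}{B^e}\right)_t=\frac{\mathcal{X}^{e,k_1}_tB^{k_1}_t}{B^e_t}\sigma^{\mathcal{X}^{e,k_1}}_t\,dW^{\mathcal{X}^{e,k_1},\QQ^e}_t$. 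The choice $k_1=e$ is consistent since $\mathcal{X}^{e,e}\equiv1$.

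For parts (1)--(2) I would expand the integrand of \eqref{eq:firstMartingale} using $\hat S^{i,k_1}=S^{i,k_1}/B^{i,k_1}$, $dB^{i,k_1}_t=r^{i,k_1}_tB^{i,k_1}_t\,dt$, $dD^{i,k_1}_t=\kappa^{i,k_1}_tS^{i,k_1}_t\,dt$ and $d[\hat S^{i,k_1},\mathcal{X}^{e,k_1}]_t=\hat S^{i,k_1}_t\mathcal{X}^{e,k_1}_t\rho^{S^{i,k_1},\mathcal{X}^{e,k_1}}_t\sigma^{S^{i,k_1}}_t\sigma^{\mathcal{X}^{e,k_1}}_t\,dt$; the $dt$-coefficient of this integrand equals $\mathcal{X}^{e,k_1}_t\hat S^{i,k_1}_t\big(\mu^{S^{i,k_1},\QQ^e}_t-r^{i,k_1}_t+\kappa^{i,k_1}_t+\rho^{S^{i,k_1},\mathcal{X}^{e,k_1}}_t\sigma^{S^{i,k_1}}_t\sigma^{\mathcal{X}^{e,k_1}}_t\big)$, where $\mu^{S^{i,k_1},\QQ^e}$ is the (unknown) $\QQ^e$-drift rate of $S^{i,k_1}$. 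Requiring this to vanish gives $\mu^{S^{i,k_1},\QQ^e}_t=r^{i,k_1}_t-\kappa^{i,k_1}_t-\rho^{S^{i,k_1},\mathcal{X}^{e,k_1}}_t\sigma^{S^{i,k_1}}_t\sigma^{\mathcal{X}^{e,k_1}}_t$, which is the SDE in (2); (1) is the case $k_1=e$, where the $\rho\sigma\sigma$ term drops out because $\mathcal{X}^{e,e}\equiv1$. For the two martingale representations I would insert these drifts into the formula \eqref{eq:gainProcessAsset} for $K^{i,e,k_1}$: a direct computation shows $dK^{i,e,k_1}_t-S^{i,k_1}_td\mathcal{X}^{e,k_1}_t$ has zero drift and equals $\mathcal{X}^{e,k_1}_tS^{i,k_1}_t\sigma^{S^{i,k_1}}_t\,dW^{S^{i,k_1},\QQ^e}_t$; specialising to $k_1=e$, and noting that the reinvested-dividend term in $\hat S^{i,cld,e}$ cancels the $-\kappa^{i,e}$ in the drift of $\hat S^{i,e}$, yields the $\hat S^{i,cld,e}$ and $K^{i,e,e}$ statements in (1). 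The same computations apply verbatim to $S^{d_{k_3}+1,k_3}$ and $S^{d_{k_3}+2,k_3}$.

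The main obstacle is not any individual calculation, each being a routine Ito/Girsanov manipulation, but making the passage to $\QQ^e$ airtight: one must check that the drift conditions above can be realised simultaneously for all assets and all exchange rates by a single equivalent change of measure. This works because each $S^{i,k_1}$ and each $\mathcal{X}^{e,k_1}$ carries its own (correlated) Brownian motion, so the conditions decouple into one scalar equation per driving Brownian motion, solvable thanks to strict positivity of the volatilities; the standing assumption that the correlation matrix is positive semi-definite then ensures these drift shifts come from a genuine market-price-of-risk process, and boundedness of all coefficients secures the Novikov condition, so that $\QQ^e\sim\PP$ as required by Proposition \ref{prop:aoabasemkt}. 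The only other point needing attention is careful bookkeeping of the covariation terms involving the exchange rates throughout the expansions.
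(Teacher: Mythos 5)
Your proposal is correct and follows essentially the same route as the paper's proof: you impose the local-martingale requirements on \eqref{eq:firstMartingale} and \eqref{eq:secondMartingale}, expand the integrands (including the covariation term $\rho^{S^{i,k_1},\mathcal{X}^{e,k_1}}\sigma^{S^{i,k_1}}\sigma^{\mathcal{X}^{e,k_1}}$), read off the Girsanov drift shifts, and obtain the stated dynamics, with the domestic case recovered as $k_1=e$ exactly as the paper does by deferring to the single-currency result. The only difference is that you make explicit the simultaneous realisability of the measure change and the Novikov-type verification, which the paper leaves implicit in its conditional phrasing ``if the process \dots is a Brownian motion under $\QQ^e$''.
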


\begin{proof}
The statement on the domestic assets corresponds to that of Lemma 5.2 in \cite{BieRut15} and thus the proof is omitted. Let us concentrate on the foreign assets. Under $\mathbb{Q}^e$, the process \eqref{eq:firstMartingale} is a local martingale. The quadratic covariation between the repo-discounted  asset price $\frac{S^{i,k_1}}{B^{i,k_1}}$ and the exchange rate is 
\begin{align*}
\left[\frac{S^{i,k_1}}{B^{i,k_1}},\mathcal{X}^{e,k_1}\right]_t=\left\langle\frac{S^{i,k_1}}{B^{i,k_1}},\mathcal{X}^{e,k_1}\right\rangle_t=\int_{(0,t]}\rho^{S^{i,k_1},\mathcal{X}^{e,k_1}}_u\sigma^{S^{i,k_1}}_u\sigma^{\mathcal{X}^{e,k_1}}_u\frac{S^{i,k_1}_u\mathcal{X}^{e,k_1}_u}{B^{i,k_1}_u}du.
\end{align*} 
We can write in explicit form
\begin{align*}
&d K_{t}^{i, e, k_{1}}-S_{t}^{i, k_{1}} d \mathcal{X}_{t}^{e, k_{1}}\\
&\quad=B^{i,k_1}_t\left(\mathcal{X}^{e,k_1}_td\left(\frac{S^{i,k_1}}{B^{i,k_1}}\right)_t+\frac{\mathcal{X}^{e,k_1}_t}{B^{i,k_1}_u}dD^{i,k_1}_t+d\left[\frac{S^{i,k_1}}{B^{i,k_1}},\mathcal{X}^{e,k_1}\right]_t\right)\\
&\quad=B^{i,k_1}_t\left(\frac{\mathcal{X}^{e,k_1}_t}{B^{i,k_1}_t}S^{i,k_1}_t\left(\mu^{S^{i,k_1}}_tdt+\sigma^{S^{i,k_1}}_tdW^{S^{i,k_1},\mathbb{P}}_t\right)-r^{i,k_1}_t\frac{S^{i,k_1}_t\mathcal{X}^{e,k_1}_t}{B^{i,k_1}_t}dt+\kappa^{i,k_1}_t\frac{S^{i,k_1}_t\mathcal{X}^{e,k_1}_t}{B^{i,k_1}_t}dt\right.\\
&\quad\quad\left.+\rho^{S^{i,k_1},\mathcal{X}^{e,k_1}}_t\sigma^{S^{i,k_1}}_t\sigma^{\mathcal{X}^{e,k_1}}_t\frac{S^{i,k_1}_t\mathcal{X}^{e,k_1}_t}{B^{i,k_1}_t}dt\right)\\
&\quad=\mathcal{X}^{e,k_1}_tS^{i,k_1}_t\left((\mu^{S^{i,k_1}}_t-r^{i,k_1}_t+\kappa^{i,k_1}_t+\rho^{S^{i,k_1},\mathcal{X}^{e,k_1}}_t\sigma^{S^{i,k_1}}_t\sigma^{\mathcal{X}^{e,k_1}}_t)dt+\sigma^{S^{i,k_1}}_tdW^{S^{i,k_1},\mathbb{P}}_t\right).
\end{align*}
If the process
\begin{align*}
dW^{S^{i,k_1},\QQ^e}_t:=dW^{S^{i,k_1},\PP}_t+\frac{1}{\sigma^{S^{i,k_1}}_t}\left(\mu^{S^{i,k_1}}_t-r^{i,k_1}_t+\kappa^{i,k_1}_t+\rho^{S^{i,k_1},\mathcal{X}^{e,k_1}}_t\sigma^{S^{i,k_1}}_t\sigma^{\mathcal{X}^{e,k_1}}_t\right)dt
\end{align*}
is a Brownian motion under $\QQ^e$ then
\begin{align*}
d K_{t}^{i, e, k_{1}}-S_{t}^{i, k_{1}} d \mathcal{X}_{t}^{e, k_{1}}=\mathcal{X}^{e,k_1}_tS^{i,k_1}_t\sigma^{S^{i,k_1}}_tdW^{S^{i,k_1},\QQ^e}_t
\end{align*}
is a local martingale under $\QQ^e$. Finally, for the dynamics of the asset we obtain
\begin{align*}
dS^{i,k_1}_t&=S^{i,k_1}_t\left(\mu^{S^{i,k_1}}_tdt+\sigma^{S^{i,k_1}}_tdW^{S^{i,k_1},\mathbb{P}}_t\right)\\
&=S^{i,k_1}_t\left((r^{i,k_1}_t-\kappa^{i,k_1}_t-\rho^{S^{i,k_1},\mathcal{X}^{e,k_1}}_t\sigma^{S^{i,k_1}}_t\sigma^{\mathcal{X}^{e,k_1}}_t)dt+\sigma^{S^{i,k_1}}_tdW^{S^{i,k_1},\QQ^e}_t\right),
\end{align*}
which completes the proof of the second statement.

For the exchange rates we proceed analogously. Under $\QQ^e$ we require that the process \eqref{eq:secondMartingale} is a local martingale. The computation is straightforward. We have
\begin{align*}
d\left(\frac{\mathcal{X}^{e,k_1}_tB^{k_1}_t}{B^e_t}\right)=\frac{\mathcal{X}^{e,k_1}_tB^{k_1}_t}{B^e_t}\left((\mu^{\mathcal{X}^{e,k_1}}_t+r^{k_1}_t-r^e_t)dt+\sigma^{\mathcal{X}^{e,k_1}}_tdW^{\mathcal{X}^{e,k_1},\mathbb{P}}_t\right).
\end{align*}
If the process
\begin{align*}
dW^{\mathcal{X}^{e,k_1},\QQ^e}_t=dW^{\mathcal{X}^{e,k_1},\PP}_t+\frac{1}{\sigma^{\mathcal{X}^{e,k_1}}_t}\left(\mu^{\mathcal{X}^{e,k_1}}_t+r^{k_1}_t-r^e_t\right)dt
\end{align*}
is a Brownian motion under $\QQ^e$ then we obtain a local martingale and the resulting dynamics of the exchange rates are given by
\begin{align*}
d\mathcal{X}^{e,k_1}_t=\mathcal{X}^{e,k_1}_t\left((r^e_t-r^{k_1}_t)dt+\sigma^{\mathcal{X}^{e,k_1}}_tdW^{\mathcal{X}^{e,k_1},\QQ^e}_t\right),
\end{align*}
which completes the proof.
\end{proof}

The dynamics we obtained above are of independent interest: they provide a sound framework for the construction of hybrid models for a multitude of risky assets in a multi currency setting. Such models can be used for the Monte Carlo simulation of risk factors that affect a portfolio of contingent claims. Such high-dimensional hybrid models for a multitude of risk factors constitute the market standard for the computation of valuation adjustments (xVA) for a whole portfolio of claims between the hedger and the counterparty. A by product of our valuation framework is then a sound derivation of multi-currency hybrid models for the generation of exposure profiles for counterparty credit risk. Hybrid models for xVA are presented in \cite{sokol2014}, \cite{green2015}, \cite{listag2015}.

The basic model above can be extended in multiple directions: our choice for the driving processes is rather simplicistic and mainly meant to provide an illustration of how one can construct a cross currency hybrid model in a multi curve framework. One natural stream of generalization is to consider more general driving processes. One possibility is to extend the market by introducing instruments which are by definition fully collateralized, i.e. natively collateralized assets such as OIS bonds and (textbook) FRAs as in \cite{Cuchiero2016} and \cite{Cuchiero2019}. The resulting model would allow for the joint evolution of interbank spreads, overnight rates, foreign exchange and risky assets. We leave such extensions to future research.

\subsection{Wealth dynamics with collateral}
We can now provide explicit expressions for the wealth dynamics under any collateralization scheme thanks to Lemma \ref{lem:dynamicsTradedAssets}. We assume again, as in Remark \ref{eq:firstPricingWithColl} that all cash accounts are absolutely continuous. In line with Section \ref{sec:pricingExogColl}, we assume for the moment that the collateral $C^{k_3}$ is exogenously given.

\subsubsection{Cash Collateral under segregation} In Proposition \ref{prop:cashSegregation} we have that \eqref{eq:VcashSegregation} takes now the form  
\begin{align}
\begin{aligned}
dV_t(\varphi) &= V_t(\varphi)r_t^edt + \sum_{k_1=1}^L  \sum_{i=1}^{d_{k_1}} \xi_t^{i,k_1}\mathcal{X}^{e,k_1}_tS^{i,k_1}_t\sigma^{S^{i,k_1}}_tdW^{S^{i,k_1},\QQ^e}_t\\
&+\sum_{k_1=1,k_1 \neq e}^L\psi_t^{0,k_1}B^{k_1}_t\mathcal{X}^{e,k_1}_t\sigma^{\mathcal{X}^{e,k_1}}_tdW^{\mathcal{X}^{e,k_1},\QQ^e}_t  + dA_t^{e,k_2} +d\hat{F}_t^s,
\end{aligned}
\end{align}
with
\begin{align*}
\hat{F}_t^s&=\int_0^t\left( r_u^{d_{k_3}+2,k_3,s} - r_u^{c,k_3,b}\right) (C_u^{k_3})^+\mathcal{X}_u^{e,k_3}du\\
&\quad -\int_0^t \left(r_u^{d_{k_3}+3,k_3} - r_u^{c,k_3,l}\right) (C_u^{k_3})^-\mathcal{X}_u^{e,k_3}du - \int_{(0,t]} C_u^{k_3} d\mathcal{X}_u^{e,k_3}.
\end{align*}

\subsubsection{Cash Collateral under rehypothecation}
In Proposition \ref{prop:cashRehypothecation} we have that \eqref{eq:VcashRehypothecation} takes now the form
\begin{align}
\label{eq:ccrehypDiff}
\begin{aligned}
dV_t(\varphi) &= V_t(\varphi)r_t^edt + \sum_{k_1=1}^L  \sum_{i=1}^{d_{k_1}} \xi_t^{i,k_1}\mathcal{X}^{e,k_1}_tS^{i,k_1}_t\sigma^{S^{i,k_1}}_tdW^{S^{i,k_1},\QQ^e}_t\\
&+\sum_{k_1=1,k_1 \neq e}^L\psi_t^{0,k_1}B^{k_1}_t\mathcal{X}^{e,k_1}_t\sigma^{\mathcal{X}^{e,k_1}}_tdW^{\mathcal{X}^{e,k_1},\QQ^e}_t  + dA_t^{e,k_2} +d\hat{F}_t^h,
\end{aligned}
\end{align}
with
\begin{align*}
\hat{F}_t^h&=\int_0^t\left( r_u^e - r_u^{c,k_3,b} \right) (C_u^{k_3})^+\mathcal{X}_u^{e,k_3}du\\
&\quad -\int_0^t\left(r_u^{d_{k_3}+3,k_3} - r_u^{c,k_3,l}  \right) (C_u^{k_3})^-\mathcal{X}_u^{e,k_3}du - \int_{(0,t]} C_u^{k_3} d\mathcal{X}_u^{e,k_3}.
\end{align*}
We observe that, for the case of cash collateral, the difference between segregation and rehypothecation is reflected only by the presence of $r^{d_{k_3}+2,k_3,s}$ and $r^e$ respectively.

\subsubsection{Risky asset collateral}

Risky asset collateral was treated in Proposition \ref{prop:RiskyCollateral} both under segregation and rehypothecation. In the diffusive setting of the present section \eqref{eq:dVriskycoll} under rehypothecation takes now the form
\begin{align}
\begin{aligned}
dV_t(\varphi) &=V_t(\varphi)r_t^edt + \sum_{k_1=1}^L  \sum_{i=1}^{d_{k_1}} \xi_t^{i,k_1}\mathcal{X}^{e,k_1}_tS^{i,k_1}_t\sigma^{S^{i,k_1}}_tdW^{S^{i,k_1},\QQ^e}_t\\
&+\sum_{k_1=1,k_1 \neq e}^L\psi_t^{0,k_1}B^{k_1}_t\mathcal{X}^{e,k_1}_t\sigma^{\mathcal{X}^{e,k_1}}_tdW^{\mathcal{X}^{e,k_1},\QQ^e}_t \\
&+ (S_t^{d_{k_3}+1,k_3})^{-1}(C_t^{k_3})^{-}\mathcal{X}^{e,k_3}_tS^{d_{k_3}+1,k_3}_t\sigma^{S^{d_{k_3}+1,k_3}}_tdW^{S^{d_{k_3}+1,k_3},\QQ^e}_t\\\
&+ dA_t^{e,k_2} +d\bar{F}_t^h 
\end{aligned}
\end{align} 
with
\begin{align*}
\bar{F}_t^h&=\int_0^t\left(r_u^{d_{k_3}+2,k_3,h} - r_u^{c,k_3,b}  \right) (C_u^{k_3})^+\mathcal{X}_u^{e,k_3}du\\
&\quad -\int_0^t\left(r_u^{d_{k_3}+1,k_3} - r_u^{c,k_3,l} \right) (C_u^{k_3})^-\mathcal{X}_u^{e,k_3}du - \int_{(0,t]} C_u^{k_3} d\mathcal{X}_u^{e,k_3}.
\end{align*}
The case of segregation is obtained by simply replacing $r^{d_{k_3}+2,k_3,h}$ with $r^{d_{k_3}+2,k_3,s}$.

\subsection{Pricing with exogenous collateral}
We specialize the findings of Proposition \ref{prop:pricinCollExo} to the diffusive setting of the present section. In line with \cite{BieRut15} we assume that the process $A^{e,k_2}$ is adapted to the filtration $\FF^{S,\mathcal{X}}$, generated by all risky assets and all exchange rates. $A^{c,k_2}$ is a shorthand for the processes employed in Proposition \ref{prop:pricinCollExo}. In the following we assume that all conditional expectations considered in the sequel are well defined for all $t\in[0,T]$.

\begin{proposition}
In the diffusion model, a collateralized contract $(A^{k_2},C^{k_3})$ with predetermined collateral process $C^{k_3}$ can be replicated by an admissible trading strategy. The ex-dividend price $S(A,C)$ satisfies, for very $t\in[0,T]$
\begin{align*}
S_t(A^{k_2},C^{k_3})=-B^e_t\EE_t^{\QQ^e} \left( \int_{(t,T]} (B_u^e)^{-1} d\hat{A}_u^{c} \right).
\end{align*}
\end{proposition}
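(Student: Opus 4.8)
The plan is to derive the statement as a specialization of Proposition~\ref{prop:pricinCollExo}: that proposition already gives the formula $S_t(A^{k_2},C^{k_3})=-B^e_t\,\EE_t^{\QQ^e}\!\left(\int_{(t,T]}(B^e_u)^{-1}\,d\hat{A}^c_u\right)$ once one knows (a) assumptions (i)--(ii) stated before it hold, (b) the stochastic integrals with respect to \eqref{eq:firstMartingale}--\eqref{eq:secondMartingale} are true $\QQ^e$-martingales, and (c) the collateralized contract can be replicated by an admissible strategy. Assumption (ii) is nothing but the standing hypothesis of this subsection that $C^{k_3}$ is predetermined and independent of $\varphi$, so there is nothing to prove there; the work is in (a), (b), (c).

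First I would construct $\QQ^e$. Lemma~\ref{lem:dynamicsTradedAssets} exhibits the required drift corrections explicitly: they are $(\sigma^{S^{i,k_1}})^{-1}(\mu^{S^{i,k_1}}-r^{i,k_1}+\kappa^{i,k_1}+\rho^{S^{i,k_1},\mathcal{X}^{e,k_1}}\sigma^{S^{i,k_1}}\sigma^{\mathcal{X}^{e,k_1}})$ for the risky assets (including the collateral assets $d_{k_3}+1,d_{k_3}+2$) and $(\sigma^{\mathcal{X}^{e,k_1}})^{-1}(\mu^{\mathcal{X}^{e,k_1}}+r^{k_1}-r^e)$ for the exchange rates. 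Since all drifts, short rates, dividend yields and correlations are bounded and all volatilities are strictly positive and bounded, these market prices of risk are bounded, so the associated stochastic exponential satisfies Novikov's condition and defines an equivalent measure $\QQ^e\sim\PP$ on $(\Omega,\cG_T)$; Lemma~\ref{lem:dynamicsTradedAssets} then says that, under $\QQ^e$, the processes \eqref{eq:firstMartingale}, \eqref{eq:secondMartingale} and $\hat{S}^{d_{k_3}+1,cld,e,k_3}$ are (local) martingales, giving assumption (i). To upgrade to true martingales, I would use the explicit $\QQ^e$-dynamics from the same lemma: \eqref{eq:firstMartingale} equals $\int_0^t (B^{i,k_1}_u)^{-1}\mathcal{X}^{e,k_1}_uS^{i,k_1}_u\sigma^{S^{i,k_1}}_u\,dW^{S^{i,k_1},\QQ^e}_u$ and \eqref{eq:secondMartingale} is a stochastic exponential driven by $\sigma^{\mathcal{X}^{e,k_1}}$. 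Because $r^{i,k_1}$ is bounded, $B^{i,k_1}$ is bounded above and below on $[0,T]$, while $\mathcal{X}^{e,k_1}$ and $S^{i,k_1}$ are geometric Brownian motions with bounded coefficients, hence lie in every $L^p(\QQ^e)$ uniformly on $[0,T]$; thus the integrands are in $L^2(\QQ^e\otimes dt)$ and the integrals are $L^2$-bounded $\QQ^e$-martingales. The same estimate, applied to the bounded rates and to $C^{k_3}$ and $\mathcal{X}^{e,k_3}$ entering $\hat{F}^s,\hat{F}^h$, shows $\int(B^e)^{-1}d\hat{A}^c$ is integrable, so the conditional expectations below are well defined.

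It remains to produce an admissible self-financing strategy replicating $(A^{k_2},C^{k_3})$ on $[0,T]$, which is the crux. Here I would invoke completeness of the diffusion market: by Lemma~\ref{lem:dynamicsTradedAssets} the discounted gain increments $\xi^{i,k_1}(\tilde{B}^{i,k_1}\,d\hat{S}^{i,cld,e,k_1}-S^{i,k_1}\,d\mathcal{X}^{e,k_1})$ and $\psi^{0,k_1}B^e\,d(\mathcal{X}^{e,k_1}B^{k_1}/B^e)$ are, respectively, multiples of $dW^{S^{i,k_1},\QQ^e}$ and $dW^{\mathcal{X}^{e,k_1},\QQ^e}$ with strictly positive coefficients, so they span all the driving Brownian directions. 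Taking the $\QQ^e$-martingale $N_t:=\EE_t^{\QQ^e}\!\left(-\int_{(0,T]}(B^e_u)^{-1}d\hat{A}^c_u\right)$ as target and applying the Brownian martingale representation theorem for $\FF^{S,\mathcal{X}}$ yields predictable $(\xi,\psi)$ — in $L^2$ because $N$ is an $L^2$-martingale — such that the discounted wealth of the self-financing strategy started from $x+p^e_0$, with $p^e_0:=-\EE^{\QQ^e}\!\left(\int_{(0,T]}(B^e_u)^{-1}d\hat{A}^c_u\right)$, ends at $x$, i.e. $V_T(x,\varphi,A^{k_2},C^{k_3})=V^0_T(x)$; the corresponding $\tilde{V}^{net}$ being an $L^2$-martingale is bounded in $L^2$, so the strategy is admissible, and running the same construction on $[t,T]$ gives the replicating strategy in the sense of Definition~\ref{def:replicating}. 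With (a)--(c) in hand, Proposition~\ref{prop:pricinCollExo} applies verbatim and yields the asserted formula, independently of $(x,\varphi)$; the risky-collateral variant is handled by the same argument with $\bar{A}^c$ in place of $\hat{A}^c$ and the traded collateral asset $S^{d_{k_3}+1,k_3}$ included among the hedging instruments. The main obstacle is precisely this completeness/representation step — one must verify that the volatility structure furnished by Lemma~\ref{lem:dynamicsTradedAssets} is non-degenerate in the Brownian directions and that the representing integrands are square-integrable — whereas everything else reduces to standard estimates for geometric Brownian motions with bounded coefficients on a finite horizon.
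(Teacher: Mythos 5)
Your proposal is correct and takes essentially the same route as the paper: the paper's proof is a one-line appeal to Proposition 5.3 of \cite{BieRut15}, and your argument --- Girsanov construction of $\QQ^e$ from the drift corrections in Lemma \ref{lem:dynamicsTradedAssets}, $L^2$-estimates upgrading local to true martingales, martingale representation in $\FF^{S,\mathcal{X}}$ to build the admissible replicating strategy, and then an application of Proposition \ref{prop:pricinCollExo} --- is precisely the argument underlying that citation, transplanted to the multi-currency setting. The only minor caveat, shared with the paper's own implicit assumptions, is that bounded market prices of risk and the non-degeneracy needed for the representation step require the volatilities to be bounded away from zero and the correlation matrix to be non-singular, not merely strictly positive volatilities and a positive semi-definite correlation structure.
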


\begin{proof}
The present result corresponds to \cite{BieRut15} Proposition 5.3.
\end{proof}

At this point, we would like to show that the formulas we developed allow us to link the general framework of the present paper with the findings of \cite{mopa17}, \cite{fushita09}, \cite{fushita10b}, \cite{fushita10c}, \cite{fushita10}. Let us recall that process $A^{e,k_2}$ satisfies $dA^{e,k_2}=\mathcal{X}^{e,k_2}_tdA^{k_2}_t$.

\begin{corollary}\label{cor:pricingExoCollExplicit}
In the diffusion model, we have the following pricing formulas for a collateralized contract $(A^{k_2},C^{k_3})$ with predetermined collateral process $C^{k_3}$.
\begin{itemize}
\item[1)] \underline{Cash collateral under segregation:}
\begin{align}
\begin{aligned}
S_t&(A^{k_2},C^{k_3})= -B_t^e \EE_t^{\QQ^e} \left[ \int_{(t,T]} \frac{\mathcal{X}^{e,k_2}_u}{B_u^e} dA_u^{k_2} \right]\\
& - B_t^e \EE_t^{\QQ^e} \left[ \int_{(t,T]} \left[\left( r_u^{d_{k_3}+2,k_3,s} - r_u^{c,k_3,b}\right) (C_u^{k_3})^+ \right. \right.\\
& \left. \left. - \left(r_u^{d_{k_3}+3,k_3} - r_u^{c,k_3,l}\right) (C_u^{k_3})^-  \right]\frac{\mathcal{X}_u^{e,k_3}}{B_u^e}du - \int_{(t,T]} \frac{C_u^{k_3}\mathcal{X}_u^{e,k_3}}{B_u^e}(r^e_u-r^{k_3}_u)du  \right].
\end{aligned}
\end{align}
\item[2)] \underline{Cash collateral under rehypothecation:}
\begin{align}
\label{eq:priceCashRehyp}
\begin{aligned}
S_t&(A^{k_2},C^{k_3}) = -B_t^e \EE_t^{\QQ^e} \left[ \int_{(t,T]} \frac{\mathcal{X}^{e,k_2}_u}{B_u^e} dA_u^{k_2} \right]\\
& -B_t^e \EE_t^{\QQ^e} \left[ \int_{(t,T]}\left[ \left( r_u^e - r_u^{c,k_3,b} \right)(C_u^{k_3})^+  \right. \right.\\
&\left. \left. - \left(r_u^{d_{k_3}+3,k_3} - r_u^{c,k_3,l}  \right)(C_u^{k_3})^- \right] \frac{\mathcal{X}_u^{e,k_3}}{B_u^e} du - \int_{(t,T]} \frac{C_u^{k_3}\mathcal{X}_u^{e,k_3}}{B_u^e}(r^e_u-r^{k_3}_u)du\right].
\end{aligned}
\end{align}
\item[3)] \underline{Risky asset collateral under segregation:}
\begin{align}
\begin{aligned}
S_t&(A^{k_2},C^{k_3}) = -B_t^e \EE_t^{\QQ^e} \left[ \int_{(t,T]} \frac{\mathcal{X}^{e,k_2}_u}{B_u^e} dA_u^{k_2} \right]\\
& -B_t^e \EE_t^{\QQ^e} \left[ \int_{(t,T]}\left[ \left(r_u^{d_{k_3}+2,k_3,s} - r_u^{c,k_3,b}  \right)(C_u^{k_3})^+ \right. \right.\\
& \left. \left. - \left(r_u^{d_{k_3}+1,k_3} - r_u^{c,k_3,l} \right) (C_u^{k_3})^- \right] \frac{\mathcal{X}_u^{e,k_3}}{B_u^e}du - \int_{(t,T]} \frac{C_u^{k_3}\mathcal{X}_u^{e,k_3}}{B_u^e}(r^e_u-r^{k_3}_u)du \right].
\end{aligned}
\end{align}
\item[4)] \underline{Risky asset collateral under rehypothecation:}
\begin{align}
\begin{aligned}
S_t&(A^{k_2},C^{k_3}) = -B_t^e \EE_t^{\QQ^e} \left[ \int_{(t,T]} \frac{\mathcal{X}^{e,k_2}_u}{B_u^e} dA_u^{k_2}\right]\\
& -B_t^e \EE_t^{\QQ^e} \left[ \int_{(t,T]} \left[ \left(r_u^{d_{k_3}+2,k_3,h} - r_u^{c,k_3,b}  \right)(C_u^{k_3})^+ \right. \right.\\
& \left. \left. - \left(r_u^{d_{k_3}+1,k_3} - r_u^{c,k_3,l} \right) (C_u^{k_3})^- \right] \frac{\mathcal{X}_u^{e,k_3}}{B_u^e}du - \int_{(t,T]}\frac{C_u^{k_3}\mathcal{X}_u^{e,k_3}}{B_u^e}(r^e_u-r^{k_3}_u)du\right].
\end{aligned}
\end{align}
\end{itemize}
\end{corollary}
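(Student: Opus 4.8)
The plan is to reduce everything to Proposition \ref{prop:pricinCollExo} (which applies here, since replicability in the diffusive setting has just been asserted) combined with the explicit $\QQ^e$-dynamics of the exchange rates from Lemma \ref{lem:dynamicsTradedAssets}. Proposition \ref{prop:pricinCollExo} yields $S_t(A^{k_2},C^{k_3})=-B^e_t\EE^{\QQ^e}_t\big(\int_{(t,T]}(B^e_u)^{-1}d\hat A^c_u\big)$ in the cash case and the analogous formula with $\bar A^c$ in the risky-asset case, where by definition $\hat A^c\in\{\hat F^{s}+A^{e,k_2},\hat F^{h}+A^{e,k_2}\}$ and $\bar A^c\in\{\bar F^{s}+A^{e,k_2},\bar F^{h}+A^{e,k_2}\}$. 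So the only task is to evaluate these conditional expectations under the standing assumption $dB^\cdot_t=r^\cdot_t B^\cdot_t\,dt$ and the diffusion dynamics.

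First I would make the $\hat F$- and $\bar F$-processes fully explicit: with absolute continuity, the cumulative margin interest \eqref{def:Fcumulativeinterest} becomes $dF^c_t=\mathcal{X}^{e,k_3}_t\big(r^{c,k_3,l}_t(C^{k_3}_t)^- - r^{c,k_3,b}_t(C^{k_3}_t)^+\big)dt$, and the reinvestment integrals defining $F^s$, $F^h$ together with the residual $dB^\cdot$-terms in \eqref{eq:hatFsegregation}, \eqref{eq:hatFrehypothecation}, \eqref{eq:barFriskyassetcoll} all collapse to $r^\cdot_u(\cdot)\mathcal{X}^{e,k_3}_u\,du$. This is precisely the content of Remark \ref{eq:firstPricingWithColl}: the four formulas there differ from the ones claimed in Corollary \ref{cor:pricingExoCollExplicit} only in the final term, $-\int_{(t,T]}(B^e_u)^{-1}C^{k_3}_u\,d\mathcal{X}^{e,k_3}_u$ versus $-\int_{(t,T]}(B^e_u)^{-1}C^{k_3}_u\mathcal{X}^{e,k_3}_u(r^e_u-r^{k_3}_u)\,du$, and in rewriting $\int_{(t,T]}(B^e_u)^{-1}dA^{e,k_2}_u=\int_{(t,T]}(B^e_u)^{-1}\mathcal{X}^{e,k_2}_u\,dA^{k_2}_u$, which follows immediately from $dA^{e,k_2}_u=\mathcal{X}^{e,k_2}_u\,dA^{k_2}_u$.

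Hence the remaining step, common to all four cases, is to compute $\EE^{\QQ^e}_t\big(-\int_{(t,T]}(B^e_u)^{-1}C^{k_3}_u\,d\mathcal{X}^{e,k_3}_u\big)$. Substituting the $\QQ^e$-dynamics $d\mathcal{X}^{e,k_3}_u=\mathcal{X}^{e,k_3}_u\big((r^e_u-r^{k_3}_u)\,du+\sigma^{\mathcal{X}^{e,k_3}}_u\,dW^{\mathcal{X}^{e,k_3},\QQ^e}_u\big)$ from part (3) of Lemma \ref{lem:dynamicsTradedAssets}, this splits into the finite-variation term $-\int_{(t,T]}(B^e_u)^{-1}C^{k_3}_u\mathcal{X}^{e,k_3}_u(r^e_u-r^{k_3}_u)\,du$ and a stochastic integral against $W^{\mathcal{X}^{e,k_3},\QQ^e}$; the latter, started at $t$, has zero $\cG_t$-conditional expectation and drops out. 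Collecting the surviving terms gives the four displayed formulas, case 4) following from case 3) by replacing $r^{d_{k_3}+2,k_3,s}$ with $r^{d_{k_3}+2,k_3,h}$, exactly as $\bar F^h$ relates to $\bar F^s$.

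The only genuinely delicate point is this vanishing of the $dW^{\mathcal{X}^{e,k_3},\QQ^e}$-integral under $\EE^{\QQ^e}_t$: boundedness of the coefficients $r^\cdot$ and $\sigma^{\mathcal{X}^{e,k_3}}$ does not by itself control $\mathcal{X}^{e,k_3}$ and $C^{k_3}$, so one leans here on the blanket hypothesis stated just before the proposition that every conditional expectation under consideration is well defined, which we read as granting the integrability needed for the local martingale to be a genuine $\QQ^e$-martingale. Everything else is routine substitution of the diffusion dynamics of Lemma \ref{lem:dynamicsTradedAssets} into the already-established identities of Proposition \ref{prop:pricinCollExo} and Remark \ref{eq:firstPricingWithColl}.
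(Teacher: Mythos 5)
Your proposal is correct and follows essentially the same route as the paper: it invokes Remark \ref{eq:firstPricingWithColl} (together with $dA^{e,k_2}_u=\mathcal{X}^{e,k_2}_u\,dA^{k_2}_u$) and then replaces $\EE_t^{\QQ^e}\bigl[\int_{(t,T]}\frac{C_u^{k_3}}{B_u^e}\,d\mathcal{X}_u^{e,k_3}\bigr]$ by $\EE_t^{\QQ^e}\bigl[\int_{(t,T]}\frac{C_u^{k_3}\mathcal{X}_u^{e,k_3}}{B_u^e}(r^e_u-r^{k_3}_u)\,du\bigr]$ using the $\QQ^e$-dynamics of $\mathcal{X}^{e,k_3}$ from Lemma \ref{lem:dynamicsTradedAssets} and the vanishing of the Brownian part under conditional expectation, which is exactly the paper's key observation. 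Your explicit remark on the integrability needed for the martingale term to drop out is a welcome (and consistent) elaboration of the paper's standing assumption that all conditional expectations are well defined.
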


\begin{proof}
The proof directly follows from Remark \ref{eq:firstPricingWithColl} and by observing that
\begin{align*}
 \EE_t^{\QQ^e} \left[ \int_{(t,T]}\frac{C_u^{k_3}}{B_u^e}d\mathcal{X}_u^{e,k_3}\right]=  \EE_t^{\QQ^e} \left[ \int_{(t,T]}\frac{C_u^{k_3}\mathcal{X}_u^{e,k_3}}{B_u^e}(r^e_u-r^{k_3}_u)du\right].
\end{align*}
\end{proof}

The existing literature focuses on the case of cash collateral with rehypothecation, for example \cite{mopa17} in their Proposition 1 obtain the analogue of \eqref{eq:priceCashRehyp}. Also different borrowing and lending rates are not considered. Our Corollary \ref{cor:pricingExoCollExplicit} generalises most results available in the literature since we allow for different combinations of collateralization covenants. The distinctive feature of pricing formulas, when collateral can be posted in different currencies, lies in the further "correction" term which is proportional to the drift of the FX rate and the we could compute explicitly in the present diffusive setting. As a final illustration, let us stress that each of the four pricing formulas above nests the three following valuation formulas.

\begin{remark} In the case of cash collateral with rehypothecation we have the following special cases of \eqref{eq:priceCashRehyp}.
\begin{itemize}
\item[2.a)] \underline{Domestic cashflows collateralized in domestic currency}. This corresponds to the case $k_2 = k_3 = e$ and we obtain
\begin{align}
\begin{aligned}
S_t(A^{e},C^{e}) =& -B_t^e \EE_t^{\QQ^e} \left[ \int_{(t,T]} \frac{1}{B_u^e} dA_u^{e} \right] -B_t^e \EE_t^{\QQ^e} \left[ \int_{(t,T]}\left[ \left( r_u^e - r_u^{c,e,b} \right)(C_u^{e})^+  \right. \right.\\
&\left. \left. - \left(r_u^{d_{e}+3,e} - r_u^{c,e,l}  \right)(C_u^{e})^- \right] \frac{1}{B_u^e} du\right].
\end{aligned}
\end{align}
This is the case already treated both in \cite{pit10} \cite{BieRut15} among others.
\item[2.b)]\underline{Domestic cashflows collateralized in foreign currency}. This corresponds to the case $k_2=e$ and $k_3 \neq e$ and we obtain
\begin{align}
\begin{aligned}
S_t(A^{e},C^{k_3}) =& -B_t^e \EE_t^{\QQ^e} \left[ \int_{(t,T]} \frac{1}{B_u^e} dA_u^{e} \right] -B_t^e \EE_t^{\QQ^e} \left[ \int_{(t,T]}\left[ \left( r_u^e - r_u^{c,k_3,b} \right)(C_u^{k_3})^+  \right. \right.\\
&\left. \left. - \left(r_u^{d_{k_3}+3,k_3} - r_u^{c,k_3,l}  \right)(C_u^{k_3})^- \right] \frac{\mathcal{X}_u^{e,k_3}}{B_u^e} du - \int_{(t,T]} \frac{C_u^{k_3}\mathcal{X}_u^{e,k_3}}{B_u^e}(r^e_u-r^{k_3}_u)du\right].
\end{aligned}
\end{align}
\item[2.c)]\underline{Foreign cashflows collateralized in domestic currency}. This corresponds to the case $k_2\neq e$ and $k_3 = e$ and we obtain
\begin{align}
\begin{aligned}
S_t(A^{k_2},C^{e}) =& -B_t^e \EE_t^{\QQ^e} \left[ \int_{(t,T]} \frac{\mathcal{X}^{e,k_2}_u}{B_u^e} dA_u^{k_2} \right] -B_t^e \EE_t^{\QQ^e} \left[ \int_{(t,T]}\left[ \left( r_u^e - r_u^{c,e,b} \right)(C_u^{k_3})^+  \right. \right.\\
&\left. \left. - \left(r_u^{d_{e}+3,e} - r_u^{c,e,l}  \right)(C_u^{e})^- \right] \frac{1}{B_u^e} du\right].
\end{aligned}
\end{align}
\end{itemize}
\end{remark}

\subsection{Pricing with endogenous collateral}

We treat the case where the collateral depends on the marked-to-market value of the contract. We assume for simplicity that the initial endowment is zero, i.e. $x=0$. We assume again that all interest rates are bounded and we let the filtration $\FF$ be of the form $\FF=\FF^{S,\mathcal{X}}$, i.e. the filtration is generated by all risky assets and exchange rates. In line with the previous section, the contract $A^{e,k_2}$ is adapted to the filtration $\FF^{S,\mathcal{X}}$. The collateral account is now given by
\begin{align}
\label{eq:endoColl}
C^{k_3}_t=(1+\delta^1_t)\frac{\left(-V_t(\varphi)\right)^+}{\mathcal{X}^{e,k_3}_t}-(1+\delta^2_t)\frac{\left(-V_t(\varphi)\right)^-}{\mathcal{X}^{e,k_3}_t},
\end{align}
where the bounded, RCLL $\FF^{S,\mathcal{X}}$-adapted processes $\delta^1,\delta^2$ represent haircuts. The fact that now $C^{k_3}$ depends on $V$ implies that the pricing equation has a recursive nature and hence is to be treated as a BSDE. 

We consider the case of cash collateral with rehypothecation and we further introduce the simplification $r^{d_{k_3}+3,k_3}=r^e$, $r_u^{c,k_3,b}=r_u^{c,k_3,l}=r_u^{c,k_3}$. Concerning the drift of the exchange rate $\mathcal{X}^{e,k_3}$ we can define the cross currency basis $q^{e,k_3}$ via
\begin{align*}
r_t^e - r_t^{k_3}=r_t^{c,e} - r_t^{c,k_3}+q^{e,k_3}_t,
\end{align*}
where $r^{c,e}$ and $r^{c,k_3}$ are the collateral rates under the domestic and the $k_3$ currency. Obviously we have $q^{e,e}\equiv 0$, $q^{e,k_3}=-q^{k_3,e}$. Expressing the dynamics of the exchange rate in terms for the cross currency basis in the present diffuse setting means that we write
\begin{align}
d\mathcal{X}^{e,k_3}_t=\mathcal{X}^{e,k_3}_t\left((r_t^{c,e} - r_t^{c,k_3}+q^{e,k_3}_t)dt+\sigma^{\mathcal{X}^{e,k_3}}_tdW^{\mathcal{X}^{e,k_3},\QQ^e}_t\right).
\end{align}
Under the preceding assumptions \eqref{eq:ccrehypDiff} takes the form
\begin{align}
\label{eq:ccrBSDE}
\begin{aligned}
dV_t(\varphi) &= V_t(\varphi)r_t^e dt + \sum_{k_1=1}^L  \sum_{i=1}^{d_{k_1}} \xi_t^{i,k_1}\mathcal{X}^{e,k_1}_tS^{i,k_1}_t\sigma^{S^{i,k_1}}_tdW^{S^{i,k_1},\QQ^e}_t\\
&+\sum_{k_1=1,k_1 \neq e}^L\psi_t^{0,k_1}B^{k_1}_t\mathcal{X}^{e,k_1}_t\sigma^{\mathcal{X}^{e,k_1}}_tdW^{\mathcal{X}^{e,k_1},\QQ^e}_t  + dA_t^{e,k_2}\\
&+\left( r_t^e - r_t^{c,k_3} \right) \left((1+\delta^1_t)\left(-V_t(\varphi)\right)^+-(1+\delta^2_t)\left(-V_t(\varphi)\right)^-\right)dt\\
&- \left((1+\delta^1_t)\left(-V_t(\varphi)\right)^+-(1+\delta^2_t)\left(-V_t(\varphi)\right)^-\right) \left( r_t^{c,e} - r_t^{c,k_3}+q^{e,k_3}_t \right)dt\\
&- \left((1+\delta^1_t)\left(-V_t(\varphi)\right)^+-(1+\delta^2_t)\left(-V_t(\varphi)\right)^-\right)\sigma^{\mathcal{X}^{e,k_3}}_tdW^{\mathcal{X}^{e,k_3},\QQ^e}_t ,
\end{aligned}
\end{align}
where we substituted also the dynamics the exchange rate expressed via the cross currency basis. We  view the expression above as a BSDE, where the controls are given by the processes
\textcolor{black}{
	\begin{align}
\label{eq:BSDEcontrols}	
\begin{aligned}
Z^{i,k_1}&=\xi^{i,k_1}, \ i=1,\ldots, d_{k_1}, \ k_1=1,\ldots,L\\
Z^{0,k_1}&=\psi_t^{0,k_1}, \ k_1=1,\ldots,L, \ k_1 \neq e\\
Z^{1,k_3}&=- \left((1+\delta^1)\left(-V(\varphi)\right)^+-(1+\delta^2)\left(-V(\varphi)\right)^-\right)
\end{aligned}
\end{align}}
and zero terminal condition. We introduce: 
\begin{itemize}
\item The subspace of all $\RR^d$-valued, $\FF^{S,\mathcal{X}}$-adapted processes $X$ such that  
\begin{align} \label{eq:spaceH2}
\Ex{\QQ^e}{\int_0^T\left\|X_t\right\|^2 dt}<\infty, 
\end{align}
denoted by $\cH^{2,d}(\QQ^e).$ We set $\cH^{2}(\QQ^e) := \cH^{2,1}(\QQ^e).$ 
\item The subspace of all $\RR^d$-valued, $\FF^{S,\mathcal{X}}$-adapted processes $X$ such that 
\begin{align} \label{eq:spaceS2b}
\EE^{\QQ^e}\left[\sup_{t \,\in\, [0,T]} \left\|X_t\right\|^2 \right] < \infty,
\end{align}
denoted by $\cS^{2,d}(\QQ^e).$ We set $\cS^{2}(\QQ^e) := \cS^{2,1}(\QQ^e).$ 
\end{itemize}

We have the following pricing result that follows from \cite{nr2016}. \textcolor{black}{Alternatively, an existence and uniqueness result in a setting where the generator is monotone and the filtration is general can be found in \cite{KruPo2016}. Such results have already been applied in the xVA literature e.g. in \cite{Crepey2016aa}.}

\begin{proposition}\label{propo:ccrPrice}
Assume that $A^{e,k_3}\in\cS^{2}(\QQ^e)$. Then the BSDE \eqref{eq:ccrBSDE} with zero terminal condition admits a unique solution with $V(\varphi)\in\cS^{2}(\QQ^e)$ and the controls \textcolor{black}{\eqref{eq:BSDEcontrols} belong to }$\cH^{2}(\QQ^e)$. Also, the collateralized contract $A^{e,k_3}$ with collateral specification \eqref{eq:endoColl} can be replicated on $[t,T]$ by an admissible trading strategy $\varphi$ and the price admits the representation
\begin{align*}
\begin{aligned}
S_t&(A^{k_2},C^{k_3}) = -B_t^e \EE_t^{\QQ^e} \left[ \int_{(t,T]} \frac{\mathcal{X}^{e,k_2}_u}{B_u^e} dA_u^{k_2} \right]\\
& -B_t^e \EE_t^{\QQ^e} \left[\int_{(t,T]} \left( r_u^e - r_u^{c,e}-q_u^{e,k_3} \right) \frac{\left((1+\delta^1_u)\left(-V_u(\varphi)\right)^+-(1+\delta^2_u)\left(-V_u(\varphi)\right)^-\right)}{B_u^e} du \right].
\end{aligned}
\end{align*}
\end{proposition}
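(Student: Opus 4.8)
The plan is to recognise \eqref{eq:ccrBSDE} as a Lipschitz backward stochastic differential equation with zero terminal condition and a square-integrable, finite-variation forcing term, solve it by the classical $L^2$ theory, recover the replicating strategy from the solution, and finally read off the price by discounting with $B^e$.

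First I would put the equation in standard form. Set $Y:=V(\varphi)$, collect all $\QQ^e$-Brownian motions $W^{S^{i,k_1},\QQ^e}$ and $W^{\mathcal{X}^{e,k_1},\QQ^e}$ ($k_1\neq e$) into one vector $W$, and collect the corresponding integrands appearing in \eqref{eq:ccrBSDE} — namely $\xi_t^{i,k_1}\mathcal{X}_t^{e,k_1}S_t^{i,k_1}\sigma_t^{S^{i,k_1}}$, $\psi_t^{0,k_1}B_t^{k_1}\mathcal{X}_t^{e,k_1}\sigma_t^{\mathcal{X}^{e,k_1}}$, and the forced integrand $-\big((1+\delta^1_t)(-Y_t)^+-(1+\delta^2_t)(-Y_t)^-\big)\sigma_t^{\mathcal{X}^{e,k_3}}$ sitting in the $W^{\mathcal{X}^{e,k_3},\QQ^e}$-slot — into one process $Z$. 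Then \eqref{eq:ccrBSDE} reads
\[
Y_t=-\int_t^T F(u,Y_u)\,du-\big(A_T^{e,k_2}-A_t^{e,k_2}\big)-\int_t^T Z_u\,dW_u,\qquad F(u,y):=y\,r_u^e+\big(r_u^e-r_u^{c,e}-q_u^{e,k_3}\big)\big((1+\delta^1_u)(-y)^+-(1+\delta^2_u)(-y)^-\big).
\]
Note that $F$ does not depend on $z$: the $W^{\mathcal{X}^{e,k_3},\QQ^e}$-martingale term in \eqref{eq:ccrBSDE}, though its integrand is a function of $Y_t$, is simply absorbed into $Z\,dW$. Since the rates $r^e,r^{c,e},q^{e,k_3}$ and the haircuts $\delta^1,\delta^2$ are bounded and $y\mapsto(1+\delta^1)(-y)^+-(1+\delta^2)(-y)^-$ is globally Lipschitz, $F$ is uniformly Lipschitz in $y$ with $F(u,0)\equiv0$.

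Next I would invoke existence and uniqueness. The filtration $\FF=\FF^{S,\mathcal{X}}$ is Brownian and $A^{e,k_2}\in\cS^{2}(\QQ^e)$, so the standard theory of Lipschitz BSDEs with a square-integrable forcing term (\cite{nr2016}, or \cite{KruPo2016} for the monotone-generator/general-filtration version, applicable since a Lipschitz generator is in particular monotone with linear growth) gives a unique pair $(Y,Z)$ with $Y\in\cS^{2}(\QQ^e)$ and $Z\in\cH^{2}(\QQ^e)$. The trading strategy is then recovered from $(Y,Z)$: dividing the relevant components of $Z$ by the strictly positive volatility coefficients $\mathcal{X}^{e,k_1}S^{i,k_1}\sigma^{S^{i,k_1}}$ and $B^{k_1}\mathcal{X}^{e,k_1}\sigma^{\mathcal{X}^{e,k_1}}$ produces $\xi^{i,k_1}$ and $\psi^{0,k_1}$ ($k_1\neq e$); $\psi^{0,e}$ is the residual fixed by the wealth identity; and $Z^{1,k_3}=-\big((1+\delta^1)(-Y)^+-(1+\delta^2)(-Y)^-\big)\in\cH^{2}(\QQ^e)$ because $Y\in\cS^{2}(\QQ^e)$ and the haircuts are bounded. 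These are precisely the controls \eqref{eq:BSDEcontrols}. By the very derivation of \eqref{eq:ccrBSDE} from \eqref{eq:ccrehypDiff} together with Lemma \ref{lem:dynamicsTradedAssets}, the strategy $(0,\varphi,A^{k_2}-A^{k_2}_t,C^{k_3})$ with $C^{k_3}$ given by \eqref{eq:endoColl} is self-financing with wealth $Y$ on $[t,T]$, and $Y_T=0=V^0_T(0)$, so it replicates $(A^{k_2},C^{k_3})$ on $[t,T]$ (admissibility understood in the $L^2$ sense customary in the BSDE approach, cf. \cite{BieRut15} and \cite{nr2016}); exactly as in Proposition \ref{prop:pricinCollExo} the ex-dividend price coincides with the valuation ex-dividend price $\hat p^e_t=V_t(\varphi)-V^0_t(0)=Y_t$ and is independent of $(x,\varphi)$.

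Finally I would obtain the representation by applying Itô to $\tilde Y:=Y/B^e$. Since $dB_t^e=r_t^eB_t^e\,dt$ and $(B^e)^{-1}$ has finite variation, the drift $Y_t r_t^e$ cancels and
\[
d\tilde Y_t=\frac{1}{B_t^e}\big(r_t^e-r_t^{c,e}-q_t^{e,k_3}\big)\big((1+\delta^1_t)(-Y_t)^+-(1+\delta^2_t)(-Y_t)^-\big)\,dt+\frac{1}{B_t^e}\,dA_t^{e,k_2}+\frac{1}{B_t^e}\,Z_t\,dW_t.
\]
Because $B^e$ is increasing with $B^e_0=1$ one has $0<(B^e)^{-1}\le1$, so $\int_0^{\cdot}(B_u^e)^{-1}Z_u\,dW_u$ inherits the $\cH^2$-integrability of $Z$ and is a true $\QQ^e$-martingale vanishing at $0$. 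Integrating from $t$ to $T$, using $Y_T=0$, taking $\EE_t^{\QQ^e}$, multiplying by $-B_t^e$, and substituting $dA_u^{e,k_2}=\mathcal{X}_u^{e,k_2}\,dA_u^{k_2}$ and $S_t(A^{k_2},C^{k_3})=Y_t=V_t(\varphi)$ yields the asserted formula. The main obstacle is the $L^2$ bookkeeping: verifying that $F$ is exactly of the class covered by the cited BSDE results (it is, being affine plus a Lipschitz perturbation with $F(\cdot,0)\equiv0$) and that the strategy reconstructed from $Z$ genuinely lies in $\cH^{2}(\QQ^e)$ and is admissible — which, as in \cite{BieRut15}, rests on the strict positivity of the volatility coefficients and the boundedness of $(B^e)^{-1}$; the latter also secures the true (not merely local) martingale property used to pass to the conditional-expectation representation. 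The remaining manipulations are routine applications of Itô's formula and the martingale property.
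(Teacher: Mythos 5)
Your proposal is correct and follows essentially the same route as the paper: the paper's proof simply invokes the Lipschitz/monotone BSDE existence--uniqueness result (Theorem 4.1 of \cite{nr2016}, with \cite{KruPo2016} as an alternative) under the $\cS^{2}(\QQ^e)$ assumption on the cashflow stream and then notes that the representation follows from the earlier results, which is exactly the discounting-by-$B^e$ and true-martingale argument you spell out. Your version merely makes explicit the standard-form rewriting of \eqref{eq:ccrBSDE}, the recovery of the hedging strategy from the martingale integrands via the strictly positive volatilities, and the integrability bookkeeping, all of which the paper leaves implicit.
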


\begin{proof}
Existence and uniqueness to \eqref{eq:ccrBSDE} follow from \cite{nr2016} Theorem 4.1 modulo our assumption that $A^{e,k_3}\in\cS^{2}(\QQ^e)$. The rest of the claim is clear from our previous results.
\end{proof}

It is interesting to study the case of perfect collateralization. This can be immediately obtained from our formulas by setting $\delta^1_t=\delta^2_t=0$ $d\QQ^e\otimes dt$-a.s.. We observe that the BSDE \eqref{eq:ccrBSDE} takes now the much simpler form
\begin{align}
\begin{aligned}
dV_t(\varphi) &= V_t(\varphi)(r_t^{e,c}+q_t^{e,k_3} )dt + \sum_{k_1=1}^L  \sum_{i=1}^{d_{k_1}} \xi_t^{i,k_1}\mathcal{X}^{e,k_1}_tS^{i,k_1}_t\sigma^{S^{i,k_1}}_tdW^{S^{i,k_1},\QQ^e}_t\\
&+\sum_{k_1=1,k_1 \neq e}^L\psi_t^{0,k_1}B^{k_1}_t\mathcal{X}^{e,k_1}_t\sigma^{\mathcal{X}^{e,k_1}}_tdW^{\mathcal{X}^{e,k_1},\QQ^e}_t  + dA_t^{e,k_2}\\
&+V_t(\varphi)\sigma^{\mathcal{X}^{e,k_3}}_tdW^{\mathcal{X}^{e,k_3},\QQ^e}_t,
\end{aligned}
\end{align}
from which, with the help of Proposition \ref{propo:ccrPrice}, we immediately obtain the following valuation formula for perfectly collateralized claims, namely
\begin{align}
\begin{aligned}
S_t(A^{k_2},C^{k_3}) =& -\EE_t^{\QQ^e} \left[ \int_{(t,T]} e^{-\int_t^u r^{c,e}_s+q^{e,k_3}_s ds}\mathcal{X}^{e,k_2}_u dA_u^{k_2} \right],
\end{aligned}
\end{align}
from which we can obtain also the following special cases.
\begin{itemize}
\item[2.a)] \underline{Domestic cashflows collateralized in domestic currency}. This corresponds to the case $k_2 = k_3 = e$ and we obtain
\begin{align}
\begin{aligned}
S_t(A^{e},C^{e}) =& -\EE_t^{\QQ^e} \left[ \int_{(t,T]}e^{-\int_t^u r^{c,e}_s ds} dA_u^{e} \right] ,
\end{aligned}
\end{align}
so we discount using the domestic collateral rate.
\item[2.b)]\underline{Domestic cashflows collateralized in foreign currency}. This corresponds to the case $k_2=e$ and $k_3 \neq e$ and we obtain
\begin{align}
\begin{aligned}
S_t(A^{e},C^{k_3}) =& -\EE_t^{\QQ^e} \left[ \int_{(t,T]} e^{-\int_t^u r^{c,e}_s+q^{e,k_3}_s ds} dA_u^{e} \right] ,
\end{aligned}
\end{align}
so that the foreign collateralization results in the appearance of the cross currency basis in the discount factor.
\item[2.c)]\underline{Foreign cashflows collateralized in domestic currency}. This corresponds to the case $k_2\neq e$ and $k_3 = e$ and we obtain
\begin{align}
\begin{aligned}
S_t(A^{k_2},C^{e}) =& -\EE_t^{\QQ^e} \left[ \int_{(t,T]}e^{-\int_t^u r^{c,e}_s ds} \mathcal{X}^{e,k_2}_udA_u^{k_2} \right].
\end{aligned}
\end{align}
\end{itemize}

\section{Non-linear markets and directions for future research}

\textcolor{black}{The work of \cite{BieRut15} has been continued and further generalized in \cite{biciarut2018}. In this second paper the issue of no arbitrage is further investigated for fully non-linear valuation problems.
We discuss in the following how to relate our results with those of \cite{biciarut2018}. It will become apparent that our results on the multi currency extension of \cite{BieRut15} can be fully integrated in the setting of \cite{biciarut2018}.}

\textcolor{black}{The basic setup to the two papers is very similar. Concerning the concept of bilateral financial contract (our Definition \ref{def:bilContr} or equivalently Definition 2.3 in \cite{BieRut15}) we note that in \cite{biciarut2018}  the initial flow of the contract is not included in their cumulative dividend process $A$ but this is immaterial. Also we note that their definition of dividend process also encodes the set of \textit{trading adjustments}, represented by a family of processes $\mathcal{X}=(X^1,\ldots,X^n)$, so that a bilateral contract is a couple $\mathcal{C}=(A,\mathcal{X})$. This results in a set of modifications of the stream of flows of the contract. This technique has been pioneered by Brigo and co-authors, the most recent example being given by \cite{bbfpr2018}. However, in the context of our paper, trading adjustments can be equivalently formulated in terms of suitable strategies on certain cash accounts as we did in our Section \ref{sec:CollateralizedTrading} without any impact on the results. The equivalence between the two approaches is in fact demonstrated in \cite{bbfpr2018}.}

\textcolor{black}{Definition 1 in \cite{biciarut2018} of a self-financing trading strategy can be directly linked to our Definitions \ref{def:firstSelfFinancing} and \ref{def:secondSelfFinancing}. Also, the definition of the hedger's wealth and that of wealth process of a self-financing trading strategy (Definition 4 in \cite{biciarut2018}) have the same economic meaning (in the single currency framework) of our definitions in Section \ref{sec:CollateralizedTrading}. Regarding the trading activity in risky assets, equation 17 in \cite{biciarut2018} is the single currency analogue of our cross currency repo constraint \eqref{eq:repoConstraint}.}

\textcolor{black}{Concerning the distinction between local and global valuation problems as presented in Definition 7 of \cite{biciarut2018}, a cross currency extension is in general a feature which is independent w.r.t. the local or global nature of the valuation problem, it refers to the currency of denomination of the flows of the contract.  For this reason, it is more convenient to present the multi currency extension of the multiple curve framework in the context of a local valuation problem that leads to classical BSDEs as we do in the present paper. A form of non-linearity in the adjustment process might be introduced via collateral choice options, i.e. by providing the agents with the option to post collateral to the counterparty under any preferred currency. This means in practice that the collateral poster will provide collateral denominated in the currency where he/she has the lowest funding costs. This is however an issue that deserves a separate treatment e.g. by means of stochastic control techniques as in \cite{Piterbarg2013} and we leave it for future research.}

\textcolor{black}{Let us now focus on the concept of absence of arbitrage. Our Definition \ref{def:AoA} represents the multi currency generalization of Definition 3.2 in \cite{BieRut15}. Absence of arbitrage is in this case defined in terms of the process $V^{net}$, i.e. a long-short position where only one transaction is hedged, which is sufficient in the setting we consider. \cite{BieRut15} also hint at the concept of \textit{extended arbitrage opportunity} which is constructed by considering a long-short position of two hedged instruments which is useful when considering e.g. a defaultable setting with two counterparties having a different credit worthiness. This second, more general, arbitrage concept is in fact renamed \textit{arbitrage opportunity for the trading desk} (Definition 13 in \cite{biciarut2018}) whereas the combination of two hedged long/short positions is called \textit{combined wealth} (Definition 11 in \cite{biciarut2018}). The two concepts however share the same economic meaning and, considering the setup that we employ in the present paper, the concept of arbitrage for the hedger is sufficient to construct our cross currency generalization of the multiple curve framework. Also let us notice that the concept of (absence of) arbitrage for the hedger is the one which is (at least implicitly) found in most papers in the literature e.g. in \cite{crepey2015a} and explicitly e.g. in \cite{BiCaStu2018}, so we find it important to present our cross currency generalization via this definition of arbitrage opportunity.  However, generalizing Definition 13, and the no arbitrage criterion in Proposition 2  in \cite{biciarut2018} in line with our work is possible.}

\textcolor{black}{In summary, the results of the present paper can be generalized to cover nonlinear market models as treated in \cite{biciarut2018}, however the contributin of the present paper is sufficient to provide a sound foundation for a theory of cross currency markets in the context of the multiple curve framework. We conjecture that some advanced features of collateral agreements, e.c. collateral choice options, could lead to global valuation problems in the sense of Definition 7 in \cite{biciarut2018} with associated generalized BSDEs as proposed in \cite{cheridito2017}. We leave such investigations for future research.}

 \bibliographystyle{apa}
\bibliography{references}
 
\end{document}